\newtheorem{theorem}{Theorem}[section]
\newtheorem{proposition}[theorem]{Proposition}
\newtheorem{corollary}[theorem]{Corollary}
\newtheorem{definition}[theorem]{Definition}
\newtheorem{lemma}[theorem]{Lemma}
\newtheorem{remark}[theorem]{Remark}
\newtheorem{conjecture}{Conjecture}[section]
\newtheorem{example}[theorem]{Example}
\newcommand{\qed}{\hfill $\ \Box $\medskip }
\newenvironment{proof}[1][Proof]{\noindent\textbf{#1.} }%
{\qed}
\title{Methods of Class Field Theory to Separate Logics over Finite Residue Classes and Circuit Complexity}
\author{Argimiro  Arratia\footnote{Supported  by  MICINN  project  TIN2011-27479-C04-03 (BASMATI), 
MINECO project TIN2014-57226-P (APCOM)
and Gen. Cat. 
project SGR2014-890 (MACDA).}\\  
Dept. of Computer Science\\
Universitat Polit\`ecnica de Catalunya (Barcelona Tech), Spain  %
\\
argimiro@cs.upc.edu
\and 
 Carlos E. Ortiz \\ 
Dept. of Computer Science
and Mathematics, \\
Arcadia University, 
U.S.A.\\
ortiz@arcadia.edu 
}
\date{November 2014}
\begin{document}

\maketitle

\begin{abstract}
Separations among  the first order logic   ${\cal R}ing(0,+,*)$  of
finite residue class rings, its extensions with genera\-lized quantifiers, and in the 
presence of a built-in order are shown, using algebraic methods from class field theory.
These methods include classification of spectra of sentences over finite residue classes
as systems of congruences, and the study of 
 their $h$-densities over the set of all prime numbers, for various functions
  $h$ on the natural numbers. 
Over ordered structures the logic of finite residue class rings and extensions are known to capture DLOGTIME-uniform circuit complexity classes ranging from $AC^0$ to $TC^0$.
Separating these circuit complexity classes is directly 
related to  classifying the  $h$-density of spectra of sentences in the corresponding logics of finite residue classes.   
 We further give general conditions under which 
 a logic over the finite residue class rings 
 has a sentence whose spectrum has no $h$-density.  
 One application of this result  
 is that in   
 ${\cal R}ing(0,+,*,<) + M$, the logic of finite residue class rings with built-in order and extended with
 the majority quantifier  $M$, there are sentences whose spectrum have no exponential density.

\noindent
{\bf Keywords:} Circuit complexity, congruence classes, density, finite model theory, prime spectra.

\end{abstract}

\section{Introduction}

A first order logic for finite residue class rings, denoted   
as ${\cal R}ing(0,+,*)$ with $+$ and $*$ being modular addition and multiplication, and extensions of this logic 
 with certain  generalized quantifiers,  
 were shown by us in   \cite{AACO13} to
 coincide with the first order logic, and corresponding extensions,  for 
 the standard finite models with arithmetic operations as 
 considered in \cite{Immbook}. Therefore, from the 
 descriptive complexity perspective, the
 computational complexity classes that can be  described 
 with these logics for finite residue class rings are the $DLOGTIME$-uniform circuit complexity classes,
consisting of circuits of  constant depth and polynomial size
for which a description can be efficiently computed from the size of their inputs.
 
The perspective of finite residue class rings as instances of problems in these circuit complexity classes allow us to leverage the algebraic machinery proper of finite rings and fields of integer polynomials, algebraic number  theory, 
and in general tools from class field theory   to study 
expressibility problems in these logics.
In \cite{AACO13} we gave a brief account of the use of some of these algebraic tools to 
distinguish   the expressive power of the logic  of
finite residue class rings from its extensions with generalized quantifiers and in the 
addition of a built-in order. 
In this paper we review those tools in more detail and expand our methods and results.  

The algebraic methodology  
stems from the class field notion of spectra of polynomials 
over finite fields adapted to sets of sentences in the logics of finite residue classes.
The spectrum of a sentence $\theta$ is the collection
of primes $p$ such that $\theta$ holds in the residue class mod $p$. This spectra for first order sentences was extensively studied by Ax in \cite{ax2,ax} in connection with the decidability of the elementary theory of finite fields.  

Working with the concept of spectra of sentences, our
  strategy  
 to show separation among two logics, say $\cal L$ and $\cal L'$, over finite structures, can be described as follows:
\begin{quote}
Prove that the spectra of all sentences in $\cal L$ have property $P$, but there exists a 
sentence in $\cal L'$ whose spectrum does not have property $P$.
\end{quote}
From the algebraic machinery of class field theory we exploit  as candidates
for property $P$ the following:
\begin{itemize}
\item certain characterization of spectra of polynomial congruences as specific 
sets of congruent integers;  
\item different notions of density of sets of primes, in particular, the natural and the exponential density.  
\end{itemize}

We illustrate the power of the methods from class field theory in solving definability 
problems in descriptive complexity in many ways. In fact, to illustrate the richness of methods that can be obtained from the proposed algebraic setting,  we will provide three different proofs of the separation of ${\cal R}ing(0,+,*)$ from its ordered extension 
${\cal R}ing(0,+,*, <)$ by  different methods, one of combinatorial nature and the other two analytical around the concept of density.

The paper is organized as follows.  After some necessary preliminaries in circuit complexity and descriptive complexity (Section \ref{sec:2}), 
we define in Section \ref{sec:FinRing}  
 the logic of finite residue class rings, and state without proof 
 our result from \cite{AACO13} on capturing  circuit complexity classes by these logics.
Then in Section \ref{S:RingSp}  
 we review the notion of the 
 prime spectrum of a sentence,  its characterization as the spectrum of polynomial equations and the properties of the Boolean algebra 
 of this spectra, derived from the seminal results by Ax \cite{ax2,ax}.
 In Section  \ref{sec:5} 
we exhibit  a characterization of the spectra of sentences 
in ${\cal R}ing(0,+,*)$ in terms of sets of integer congruences, using a result of 
Lagarias 
\cite{Lag83} on spectra of polynomial equations.   
This tool allow us to show that 
   ${\cal R}ing(0,+,*)$ and  its extensions with modular quantifiers, namely
   ${\cal R}ing(0,+,*) + MOD(q)$ for each natural $q>2$,
   have different expressive power. 
   We also show how to apply this number-theoretic  tool to differentiate the expressive power of 
   ${\cal R}ing(0,+,*)$  and 
   its ordered counterpart ${\cal R}ing(0,+,*,<)$. 
In Section \ref{sec:6} we introduce the general measure of $h$-density for subsets of prime numbers, and the particular cases of interest, namely, the natural 
 and exponential  densities. Here we show that the natural density of the spectra of sentences in the 
logic ${\cal R}ing(0,+,*)$ (over unordered structures) always exists, but differs in an essential way in their possible values from the spectra  of sentences in  the extension with built-in order. 
We give in Section  \ref{sec:7} conditions based solely on the function $h$ under which we can construct sets of primes with no $h$-density, and use this tool to show, in Section \ref{sec:8}, two  results on different densities:  One showing the existence of a spectrum of a sentence in ${\cal R}ing(0,+,*,<)$ with no natural density (although this spectrum does have exponential density); the other showing the existence of a spectrum of a sentence in 
${\cal R}ing(0,+,*,<)+MOD+M$ with no exponential density. 
We remark that showing the 
spectra of sentences of ${\cal R}ing(0,+,*,<)$ have all exponential density (all the examples we know do verify this), will yield another way of  separating  circuit classes $AC^0$ and $TC^0$ in the DLOGTIME-uniform setting. 
Section \ref{sec:9} contains our final remarks and conclusions.

\section{Preliminaries}\label{sec:2}
\subsection{Circuit complexity}
We are interested in the uniform version of the circuit complexity classes ranging 
from $AC^0$ to $TC^0$. 
Recall that  $AC^{0}$
is  the class of languages accepted by polynomial size, constant
depth circuits with NOT gates and unbounded fan-in AND and OR gates.
Extending $AC^0$ circuits with unbounded $MOD_q$ gates, for a fixed integer $q >1$,  one obtains the class
$ACC(q)$. For each integer $q > 1$, a $MOD_q$ gate reads its boolean input and 
returns a 1 if the sum of the input bits is divisible by $q$ (i.e. the sum is equal to 0 mod $q$); otherwise it returns 0. Putting together all the $ACC(q)$ classes we get the class $ACC$, that is, 
$\displaystyle ACC = \bigcup\limits_{q > 1} ACC(q) .$
On the other hand, extending $AC^0$ circuits with unbounded $MAJ$ gates
one obtains the class $TC^0$. A $MAJ$ (or {\em majority}) gate returns a 1 if the sum of $n$ bits given as input is greater or equal to $n/2$; otherwise it returns a 0.
The languages decided with the use of $MOD_q$ gates  can be decided by 
using $MAJ$ gates instead. This fact, together with all given definitions of these circuit classes, give us for all $q > 1$
(cf. \cite{BS90,Immbook})
$$AC^0 \subseteq ACC(q) \subseteq ACC \subseteq TC^0 .$$

A circuit family is uniform if a description 
of each circuit can be computed efficiently from the size of the input; otherwise it is non-uniform.
The uniformity condition is crucial to relate time and space complexity with size and depth (see \cite{Bor77}), the measures for circuit complexity, since it can be shown that in the non 
uniform setting there are sets with trivial circuit complexity that are not  recursive. 
Various lower bounds are known for the non uniform versions of the aforementioned circuit complexity classes (see the survey \cite{BS90}). 
Two by now famous results,  one given by Furst, Saxe and Sipser in \cite{FSS84} and  
independently by Ajtai in \cite{Aj83}, proves  the existence of languages decided with  $MOD_q$ gates that can not be decided in $AC^0$. Therefore, $AC^0 \not= ACC(q)$.
 The other given by Smolensky in \cite{Smo87},  proves that
 if $p$ and $q$ are distinct primes then $ACC(p) \not= ACC(q)$.
 This implies that  $MAJ$ gates are more powerful than  $MOD_p$ gates, for single prime $p$. Nonetheless, it is yet unknown if compositions of  different $MOD$ gates are sufficient for deciding all languages that are decided by $MAJ$ gates; that is, it is unknown if $ACC$ coincides or not with $TC^0$.
A major challenge is to 
find new methods that work for 
showing lower bounds within uniform circuits.

\subsection{Descriptive complexity}\label{ssec:DC}
Our approach to circuit complexity is through finite model theory, and as a consequence
we are working with circuit classes that are DLOGTIME-uniform, for as it has been 
shown in \cite{BIS90}, the languages in DLOGTIME-uniform circuit classes $AC^0$
to $TC^0$ are definable in first order logic with built-in arithmetic predicates and some generalized quantifiers.
This works as follows. Consider first  the basic logic $\mbox{FO}(\le,\oplus, \otimes)$, which is first order logic with  built-in order relation $\le$,
and two ternary predicates $\oplus$ and $\otimes$. In a  finite model for this logic, denoted here as
$\mathcal{A}_{m}$  ($m$ is the cardinality of the model, and its universe
 is $|\mathcal{A}_{m}| = \{0,1,\ldots, m-1\}$),  the interpretation of $\le$ on 
 $\mathcal{A}_{m}$ is as a total ordering on $|\mathcal{A}_{m}|$, 
and the interpretations of  $\oplus$ and $\otimes$ are  as truncated addition and multiplication (e.g. any pair of elements that add up -or multiplies- to a quantity greater than $m$ is not a defined triple).
Consider further  the following generalized quantifiers:
\begin{itemize}
\item[$(G1)$] {\em Modular} quantifiers, $\exists^{(r,q)}$, which for integers
$r$ and $q$, with $0 \le r < q$, and  formula $\phi(\overline{x},z)$, the
quantified formula $\exists^{(r ,q)} z \phi(\overline{a},z)$ holds in $\mathcal{A}_{m}$ if and only if the
number of values for $z$ that makes $\phi(\overline{a},z)$ true is equal to $r$ modulo $q$.
\item[$(G2)$] {\em Majority } quantifier, $M$, which for a formula 
 $\phi (\overline{x},z)$,  $(Mz)\phi (\overline{a},z)$ holds in $\mathcal{A}_{m}$
 if and only if $\phi (\overline{a},z)$ is
true for more than half of the possible values for $z$. 
\end{itemize}
Let  $\mbox{FO}(\le,\oplus, \otimes) + MOD(q)$, for a fixed integer $q>1$, be the logic
$\mbox{FO}(\le,\oplus, \otimes)$ extended with modular quantifiers with moduli $q$; that is, the set of first
order formulas as before plus the quantifiers $\exists^{(r,q)}$ with 
$0 \le r < q$.\\
Let
$\displaystyle \mbox{FO}(\le,\oplus, \otimes) + MOD = \bigcup\limits_{q > 1}
(\mbox{FO}(\le,\oplus, \otimes) + MOD(q)),$ and 
$\mbox{FO}(\le,\oplus, \otimes)+ M$ the logic extended 
with the  majority quantifier  $M$.
Barrington et al proved 
\begin{theorem}[\cite{BIS90}]\label{bisthm}
The languages in DLOGTIME-uniform class $\cal C$ are exactly those definable in the logic $\cal L$, 
where $\cal C$ is $AC^0$, $ACC(q)$, $ACC$ or $TC^0$, and $\cal L$ is 
$\mbox{FO}(\le,\oplus, \otimes)$, $\mbox{FO}(\le,\oplus, \otimes) + MOD(q)$, 
$\mbox{FO}(\le,\oplus, \otimes)+ MOD$ or $\mbox{FO}(\le,\oplus, \otimes) + M$, respectively. \qed
\end{theorem}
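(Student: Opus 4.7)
The plan is to prove the containment in both directions by induction on the formula/circuit structure, treating all four pairs $(\mathcal{C},\mathcal{L})$ uniformly and only noting where the extra gates/quantifiers come in. The essential point is that the built-in ternary arithmetic predicates $\oplus$ and $\otimes$ on $\mathcal{A}_m$ are expressive enough to encode the DLOGTIME description of constant-depth polynomial-size circuits of size polynomial in $m$, while quantifiers rank-by-rank correspond to unbounded fan-in layers.

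For the direction $\mathcal{L}\subseteq\mathcal{C}$ (formula to circuit), I would proceed by structural induction on a formula $\phi$ with free variables $\overline{x}$. A string $w\in\{0,1\}^n$ is coded by the relation $B(i)\iff w_i=1$ over $\mathcal{A}_n$. At each stage I build, for every assignment to the free variables $\overline{x}$, a gate computing the truth of $\phi(\overline{x})$ on input $B$. Atomic subformulas over $\le,\oplus,\otimes$ and $B$ are decided by constant-size gadgets. Boolean connectives are AND/OR/NOT gates of constant fan-in. A first-order quantifier $\exists z\,\phi$ becomes an unbounded fan-in OR-gate over the $n$ subcircuits obtained by substituting each $z\in|\mathcal{A}_n|$; similarly $\forall z$ becomes AND, the modular quantifier $\exists^{(r,q)} z$ becomes a $MOD_q$ gate (with a small constant-depth correction so that it checks equality to $r\bmod q$), and $Mz$ becomes a MAJ gate. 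Depth stays bounded by the quantifier rank of $\phi$ and size stays polynomial in $n$. DLOGTIME uniformity follows because, given a binary description of any gate and an index to one of its inputs, the corresponding subformula and subassignment can be read off from $\phi$'s parse tree in time linear in the gate description length (hence logarithmic in the circuit size), and arithmetic on indices of size $O(\log n)$ is in DLOGTIME.

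For the direction $\mathcal{C}\subseteq\mathcal{L}$ (circuit to formula), let $\{C_n\}$ be a DLOGTIME-uniform family in $\mathcal{C}$ of depth $d$ and size $n^{k}$. Each gate of $C_n$ is named by a tuple in $|\mathcal{A}_n|^{k}$. The DLOGTIME uniformity machine $M$, when asked about a fixed gate name $g$, runs in $O(\log n)$ steps and therefore inspects only $O(\log n)$ bits of its input; its accept predicate on tuples $(g,g',\mathrm{type})$ is a first-order property of the binary representations of $g,g'$, and such binary representations are definable in $\mathcal{A}_n$ using $\le,\oplus,\otimes$ (this is the standard fact that BIT, and hence all $\mathsf{FO}(\le,+,\times)$ predicates, is interdefinable with $\mathsf{FO}(\le,\oplus,\otimes)$). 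I would then write a formula $\mathrm{Val}_i(\overline{g})$ at each depth $i\le d$ saying ``gate $\overline{g}$ evaluates to $1$ on input $B$'', defined inductively by an outer quantifier over the children of $\overline{g}$: for an AND/OR gate this quantifier is $\forall/\exists$, for a $MOD_q$ gate it is $\exists^{(0,q)}$, and for a MAJ gate it is $M$. The circuit's output is captured by $\mathrm{Val}_d(\mathrm{out})$.

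The hard part, and the one I would spend the most care on, is the precise handling of DLOGTIME uniformity inside first-order logic. One must show that the index-predicate ``$g'$ is the $j$-th input of gate $g$, and the type of $g$ is $t$'', which by uniformity is computed by a $O(\log n)$-time Turing machine $M$, is equivalent to a fixed $\mathsf{FO}(\le,\oplus,\otimes)$ formula independent of $n$. This rests on the classical equivalence, established in \cite{BIS90}, between DLOGTIME computability of $n$-indexed predicates on $O(\log n)$-bit strings and definability in $\mathsf{FO}(\le,+,\times)$ over $\mathcal{A}_n$, together with the mutual interdefinability of $+,\times$ with the truncated $\oplus,\otimes$ (using that full addition/multiplication of $O(\log n)$-bit numbers fits inside $|\mathcal{A}_{n^{c}}|$, so can be simulated via tuples). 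Once this translation is in place, the two inductions glue together to give both inclusions in each of the four cases stated.
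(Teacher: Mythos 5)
The paper does not prove this theorem; it is quoted verbatim from Barrington, Immerman and Straubing \cite{BIS90} and closed with a $\qed$ with no argument given, so there is no paper-internal proof to compare your sketch against. Judged against the cited source, your proposal is a faithful outline of the standard argument: the direction $\mathcal{L}\subseteq\mathcal{C}$ by structural induction on formulas, turning quantifier blocks into unbounded fan-in OR/AND/$MOD_q$/MAJ layers indexed by assignments, and the direction $\mathcal{C}\subseteq\mathcal{L}$ by a depth-indexed value predicate $\mathrm{Val}_i$ with an outer quantifier over children, with the DLOGTIME-uniformity machine's connection predicate rewritten as a fixed $\mathrm{FO}$ formula over tuple-encoded gate names. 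You also correctly isolate the real technical content, namely the BIS equivalence between DLOGTIME-decidable index predicates and $\mathrm{FO}(\le,+,\times)$, together with the interdefinability of full arithmetic on $O(\log n)$-bit numbers with the truncated $\oplus,\otimes$ via tuples over $|\mathcal{A}_{n^c}|$. The only places you leave compressed are routine: the modular quantifier needs to compare the count to $r$ and not only to $0$ (an offset by $r$ dummy ones, as you note), and $\mathrm{Val}_i$ must branch on gate type inside the formula rather than choosing the quantifier globally, but these are constant-size case splits and do not affect depth or size. In short, your route is correct and is essentially the one taken in \cite{BIS90}; since the present paper merely cites the theorem, there is nothing here it does differently.
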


We will also need an alternative logical description of $TC^0$, namely, via
FO(COUNT) the first-order logic over structures with numbers and counting
quantifiers (refer to \cite[\S 12.3]{Immbook} for details).
This is  first--order logic interpreted over two sorted structures consisting of a 
standard structure ${\cal A}$ over a vocabulary $\tau$,   a numeric domain which is an 
initial segment of the naturals of length the size of  $\cal A$,
and numeric predicates. The syntax is extended with formulae of the form
$(\exists i x) \phi(x)$, 
 with $i$ ranging over the numeric domain and $x$ over $\cal A$, and its 
 meaning is 
 $|\{a\in {\cal A}: \phi(a)\}| \ge i .$
Moreover, the quantifier $(\exists ! i x)$ is used to denote that there exists exactly
$i$ $x$:
\[ (\exists ! i x) \phi(x) := (\exists i x) \phi(x) \wedge \neg (\exists i+1 x ) \phi(x)
\]
 We can also have quantifiers bounding the numeric elements and acting on numeric predicates. So, for example, in a vocabulary for graphs $\tau = \{E\}$, one 
 can express that a vertex $x$ has odd degree by the formula in FO(COUNT):
 \[ ODD(x) := (\exists i)(\forall j)((j+j\not= i) \wedge (\exists ! i y)E(x,y)) \]
 
 We have the following fact (see \cite[Prop. 12.16]{Immbook}):
 \begin{theorem}\label{focount2tc}
 Over ordered structures,  FO(COUNT) = $TC^0$. \qed
 \end{theorem}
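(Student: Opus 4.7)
The plan is to establish both inclusions, using Theorem \ref{bisthm} as a convenient bridge so that I can work with the logical characterization $TC^0 = \mbox{FO}(\le, \oplus, \otimes) + M$ already in hand. Thus the task reduces to translating between FO(COUNT) and $\mbox{FO}(\le, \oplus, \otimes) + M$ over ordered finite structures in both directions.

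For the inclusion $\mbox{FO(COUNT)} \subseteq TC^0$, I would proceed by induction on formula complexity and build a DLOGTIME-uniform $TC^0$ circuit for each formula. Atomic formulas involving $\le$ and the numeric predicates are in $AC^0 \subseteq TC^0$; Boolean connectives and ordinary first-order quantification introduce only constant extra depth (AND/OR of polynomially many subcircuits). The essential case is the counting quantifier $(\exists i\, x)\,\phi(x)$: given a $TC^0$ subcircuit computing $\phi(a)$ for each element $a$ of the base domain of size $n$, compute the sum $\sum_a [\phi(a)]$ by iterated addition of $n$ one-bit values, which is a standard $TC^0$ operation, and then compare the resulting $O(\log n)$-bit integer with $i$, which is in $AC^0$. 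Constant-depth composition keeps the whole family in $TC^0$, and DLOGTIME-uniformity propagates from the uniformity of the atomic arithmetic predicates.

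For the inclusion $TC^0 \subseteq \mbox{FO(COUNT)}$, by Theorem \ref{bisthm} it suffices to translate an arbitrary formula of $\mbox{FO}(\le, \oplus, \otimes) + M$ into FO(COUNT). The order $\le$ is native to both logics. The truncated addition $\oplus(x,y,z)$ and multiplication $\otimes(x,y,z)$ predicates can be expressed using the arithmetic available on the numeric sort of FO(COUNT), provided one guards the clauses by a bound certifying that the untruncated value does not exceed the size of the base sort (this bound is itself definable since $n$ is captured by the unique $i$ with $(\exists ! i\, x)(x = x)$). The majority quantifier is then expressible directly as
\[
Mx\,\phi(x) \;\equiv\; (\exists i\, x)\,\phi(x) \,\wedge\, i + i > n,
\]
which is an FO(COUNT) formula.

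The main obstacle I expect is the careful simulation of $\oplus$ and $\otimes$ inside FO(COUNT): one must transfer arithmetic between the two sorts while respecting the truncation semantics, and verify that every gadget used in the translation remains DLOGTIME-uniform. The counting-quantifier case of the forward direction also hinges on the non-trivial fact that iterated addition of $n$ bits is in $TC^0$, a routine but essential ingredient drawn from \cite{Immbook}.
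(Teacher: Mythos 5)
The paper does not actually prove this theorem: it is stated with a reference to Immerman's textbook ({\cite[Prop.\ 12.16]{Immbook}}) and marked as a cited fact, so there is no paper argument to compare yours against. That said, your sketch is essentially the standard textbook proof of this result, and it is correct in outline. Both directions are the right reduction: for $\mbox{FO(COUNT)}\subseteq TC^0$, the only nontrivial quantifier is $(\exists i\,x)\phi$, and iterated addition of $n$ bits followed by comparison with $i$ is indeed the standard $TC^0$ gadget; for $TC^0\subseteq\mbox{FO(COUNT)}$, routing through Theorem~\ref{bisthm} and simulating $\oplus$, $\otimes$ and $M$ is exactly how Immerman argues.

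Two small points of care. First, your majority translation $Mx\,\phi(x)\equiv(\exists i\,x)\phi(x)\wedge i+i>n$ leaves the numeric variable $i$ free; it needs an outer existential quantifier over $i$, i.e.\ $\exists i\,\bigl((\exists i\,x)\phi(x)\wedge i+i>n\bigr)$, and since the paper's (G2) is a strict-majority quantifier you should check the inequality $i+i>n$ matches that convention (it does, once $i$ is bound). Second, your claim that $\oplus$ and $\otimes$ ``can be expressed using the arithmetic available on the numeric sort of FO(COUNT)'' is correct but hides a definitional dependency: it presumes the numeric sort comes equipped with enough built-in numeric predicates (e.g.\ $\mathrm{BIT}$, or $+$ and $\times$ on numbers) to define the truncated operations and to transfer them to the element sort via the order. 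Immerman's formulation of FO(COUNT) guarantees this, but a self-contained proof should say so explicitly, since without those numeric predicates the second inclusion would not go through.
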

 
%
%
%

\section{The logic of finite residue class rings and uniform circuit complexity classes}\label{sec:FinRing}
We use $\mathbb{Z}$ to denote the integers, $\mathbb{R}$ for the reals
 and $\mathbb{P}$ to denote the set of prime numbers.
For integers $a$, $b$ and $d$,  $b \equiv_d a$   denotes that $b$ is congruent to $a$ modulo $d$, and
$(a,b)$ stands for the greatest common divisor of $a$, $b$.
For each $m \in \mathbb{Z}$, $m>0$,  we denote by  $\mathbb{Z}_m$ the finite residue class
ring of $m$ elements.  As an algebraic structure $\mathbb{Z}_m$ consists of a set of elements 
$\{0,1, \ldots, m-1\}$, and
two binary functions $+$ and $*$ which corresponds to addition 
and multiplication modulo $m$, respectively.

\begin{definition}\label{def-ring}
By ${\cal R}ing(0,+,*)$ we 
denote the logic of finite residue class rings. This is the set
of first order sentences over the  built-in predicates
$\{0,+,*\}$, where $0$ is a constant symbol,  $+$ and $*$ are binary function symbols. The models of ${\cal R}ing(0,+,*)$ are 
the finite residue class rings $\mathbb{Z}_m$, and in each $\mathbb{Z}_m$, the $0$ is always 
 interpreted as the $0$-th residue class (mod $m$), and $+$ and $*$ are addition 
and multiplication modulo $m$.

By ${\cal R}ing(0,+,*,<)$ we denote the logic  ${\cal R}ing(0,+,*)$ further extended with an additional (built-in) order relation $<$. In this extension each finite ring $\mathbb{Z}_m$ is endowed with an order of its residue classes, given by the natural ordering of the representatives of each class from 
$\{0,1, \ldots, m-1\}$. Also, in this case, the constant $0$ represents the first element in this order.
\end{definition}

We can further extend ${\cal R}ing(0,+,*)$ or ${\cal R}ing(0,+,*,<)$ with modular quantifiers and the majority quantifier.

\begin{definition} For every integer $q > 0$, we denote by
${\cal R}ing(0,+,*) + MOD(q)$  and ${\cal R}ing(0,+,*,<) + MOD(q)$
 the extensions of the logics  
 ${\cal R}ing(0, +, *)$ and ${\cal R}ing(0, +, *, <) $ 
 obtained by the additional requirement that these logics be closed, for 
every $r=0,1, \ldots, q-1$,
 for  the quantifiers $\exists^{(r,q)}x$, interpreted 
 in $\mathbb{Z}_{m}$ as in $(G1)$ of Section \ref{ssec:DC}.\\
We define  ${\cal R}ing(0,+,*)+MOD={\cal R}ing(0,+,*) + \bigcup_{q>0}MOD(q)$ and 
${\cal R}ing(0,+,*,<)$ $+ MOD={\cal R}ing(0,+,*,<) + \bigcup_{q>0} MOD(q)$.
Finally, we denote by
${\cal R}ing(0,+,*)+MOD+M$ and ${\cal R}ing(0,+,*<)+MOD+M$ the extensions of the logic ${\cal R}ing(0,+,*)+MOD$ and 
${\cal R}ing(0,+,*,<)+MOD$ obtained by the additional requirement that these logics be closed for the majority quantifier $Mz$, interpreted 
in $\mathbb{Z}_{m}$ as in $(G2)$ of Section \ref{ssec:DC}.
  \end{definition}

 In the presence of a built-in order relation  it is logically indistinct to work  with the standard 
finite models ${\cal A}_m$ or with the finite residue class rings $\mathbb{Z}_m$.   
This is the contents of the following theorem whose 
 proof  can be found in \cite{AACO13}.

%

\begin{theorem}\label{ring=fo:order}
For every formula $\phi(x_1, \ldots, x_k)$ of $\mbox{FO}(\le,\oplus, \otimes)$,
 there exists a formula $\Phi(x_1, \ldots, x_k)$ of ${\cal R}ing(0,+,*,<)$
such that for every finite
structure $\mathcal{A}_{m}$ and integers  $a_{1},\ldots,a_{k}<m$, 
$$\mathcal{A}_{m}\models\phi(a_{1},\ldots,a_{k})\ \mbox{ if and only if }\ \mathbb{Z}_{m}\models\Phi(a_{1},\ldots,a_{k}).$$

Conversely, for every formula $\phi(x_1, \ldots, x_k)$  of
${\cal R}ing(0,+,*,<)$, there exists a formula 
$\Phi(x_1, \ldots, x_k)$ of $\mbox{FO}(\le,\oplus, \otimes)$  such that for every
finite structure $\mathbb{Z}_{m}$ and integers $a_{1},\ldots,a_{k}<m$,
$$\mathbb{Z}_{m}\models\phi(a_{1},\ldots,a_{k})\ \mbox{ if and only if }\ \mathcal{A}_{m}\models
\Phi(a_{1},\ldots,a_{k}). \qquad \Box$$ 
\end{theorem}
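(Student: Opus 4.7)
The plan is to prove both implications by structural induction on formulas, observing that the two structures share the universe $\{0,1,\ldots,m-1\}$ and the same linear order, and that the constant $0$ is the minimum element and hence definable from $\leq$ in each. Consequently Boolean connectives, the order, equality, and first-order quantifiers translate verbatim in either direction, so the entire content of the theorem reduces to translating the atomic arithmetic predicates of one signature into formulas of the other.

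For the forward direction from $\mbox{FO}(\le,\oplus,\otimes)$ to ${\cal R}ing(0,+,*,<)$, I would use the following dictionary. The predicate $\oplus(x,y,z)$ holds on $\mathcal{A}_m$ iff $x+y=z$ in $\mathbb{Z}$ without overflow, which in $\mathbb{Z}_m$ is equivalent to $(x+y=z)\wedge(z\ge x)$, since a wrap-around (with $y<m$) forces the modular sum to be strictly smaller than $x$. For $\otimes(x,y,z)$, I would exploit the observation that $x\cdot y<m$ is equivalent, whenever $x>0$, to the sequence $x*0,x*1,\ldots,x*y$ being strictly increasing in the order: at the first index $t_0$ where $x\cdot t_0\ge m$, one has $x\cdot t_0<2m$ and therefore $x*t_0=x\cdot(t_0-1)+x-m<x*(t_0-1)$, which breaks monotonicity. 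Thus $\otimes(x,y,z)$ is definable as $(x*y=z)\wedge\bigl(x=0\vee\forall s(s<y\rightarrow x*s<x*(s+1))\bigr)$, where $s+1$ denotes the modular successor.

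For the reverse direction, modular addition is handled by a case split on whether a wrap-around occurs: the no-wrap case is directly $\oplus(x,y,z)$, while the wrap case $x+y=z+m$ is expressible using $\oplus$ together with the first-order definable maximum element $m-1$. The hard case is modular multiplication, because the integer product $xy$ can reach $(m-1)^2$ and so cannot be named by any single universe element. I would represent $xy$ by a pair $(q,r)$ with $xy=qm+r$ and $q,r<m$, and assert "$(q,r)$ encodes $xy$" by emulating schoolbook multiplication on a digit decomposition of $x$ and $y$ in a small enough base (a root of $m$) so that every intermediate sub-product and carry stays inside the universe and is thus nameable via $\otimes$ and $\oplus$. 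The main obstacle is exactly this pair-encoded multiplication: the translation must express the integer identity $xy=qm+r$ without ever naming a quantity $\ge m$, which forces a careful first-order simulation of multi-digit arithmetic using only the truncated primitives. Once the atomic translations are in hand, a straightforward induction on formula complexity closes both implications.
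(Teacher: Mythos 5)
The paper does not actually give a proof of this theorem: it cites \cite{AACO13} and moves on, so there is no argument here to compare yours against line by line. Judged on its own, your plan has the right structure (the two structures share the universe, the order, and the definable constant $0$, so the entire burden is on the atomic arithmetic predicates), and your forward-direction dictionary is correct and cleanly argued. For $\oplus(x,y,z)$, the test $(x+y=z)\wedge(z\ge x)$ is exactly right: an overflow subtracts $m>y$, forcing $z<x$, while no overflow gives $z=x+y\ge x$. For $\otimes(x,y,z)$, the monotonicity test is also right: the first index $t_0\le y$ with $x\cdot t_0\ge m$ satisfies $x\cdot t_0<2m$ (since $x\cdot(t_0-1)<m$ and $x<m$), so $x*t_0=x\cdot(t_0-1)+x-m<x*(t_0-1)$, breaking the increase; conversely, if no overflow occurs up to $y$ the sequence is genuinely increasing in $\mathbb{Z}$. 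This is precisely the translation the paper later relies on when it asserts that true multiplication $\otimes$ is expressible in ${\cal R}ing(0,+,*,<)$.

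The reverse direction is where the real work is, and there your proposal is only a sketch at the one point that matters. Expressing modular addition from $\oplus,\le$ is routine, as you say. Expressing $z\equiv_m xy$ from $\le,\oplus,\otimes$ via a pair $(q,r)$ with $xy=qm+r$ and a base-$\lceil\sqrt{m}\rceil$ digit simulation is the standard technique and does go through: with $B=\lceil\sqrt m\rceil$ one has $(B-1)^2<m$, so every sub-product $x_iy_j$ is below $m$ and is nameable via $\otimes$, and both $xy$ and $qm+r$ fit in four base-$B$ digits. But you gesture at the carry propagation, the digit-wise comparison, and the extraction of the digits of $m$ itself (not a universe element; you must derive them from $m-1$ by a definable increment with carries) without actually writing the formulas. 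Since this is the technically substantial half of the theorem, a complete proof would need those formulas spelled out. As a plan it is sound and consistent with the standard toolbox (cf.\ the interdefinability results of \cite{BIS90}); as a proof it leaves the central step as a promise.
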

 The result also applies to the respective extensions of the logics with 
 modular quantifiers and the majority quantifier.
 
 \begin{remark}\label{def4Lring}
 Definability (or expressibility) in the logic of finite rings is given in terms of the finite residue structures $\mathbb{Z}_{m}$. That is, whenever we say that 
 a property of integers $P(x)$ is definable in ${\cal R}ing(0,+,*,<)+MOD+M$, or any fragment 
 $\cal L$ thereof, we mean that there exists a sentence $\varphi$ of 
 ${\cal L}$  such that for every natural $m$, \\
\centerline{$ P(m) \mbox{ holds in }\mathbb{Z} \iff \mathbb{Z}_{m}\models \varphi .$}
 
For a given  circuit class $\cal C$, we say that it is 
definable in 
the ring logic $\cal L$ if every property $P(x)$ decidable in $\cal C$ is definable in $\cal L$ and, for every sentence $\varphi$ in $\cal L$, 
the set of natural numbers $m$ such that $ \mathbb{Z}_{m}\models \varphi$
is decidable in $\cal C$. \qed
 \end{remark}
 
 As a consequence of the logical equivalence in Theorem \ref{ring=fo:order}, 
 any separation
result proved for fragments of ${\cal R}ing(0,+,*,<)+MOD+M$ can be translated into a corresponding separation
result in fragments of $\mbox{FO}(\le,\oplus, \otimes)+MOD+M$, with the respective implications to
circuit complexity. 
More specifically, from Theorem \ref{ring=fo:order} and 
Theorem \ref{bisthm}
we have the following definability  of uniform circuit classes in ring logics.

\begin{theorem}\label{RINGcaptCC} 
1) DLOGTIME-uniform $AC^{0}$ is 
 definable by ${\cal R}ing(0,+,*,<)$.\\
2) DLOGTIME-uniform $ACC(q)$ is 
definable by ${\cal R}ing(0,+,*,<)+MOD(q)$, for every natural $q$.\\
3) DLOGTIME-uniform $ACC$ is 
definable by ${\cal R}ing(0,+,*,<)+MOD$.\\
4) DLOGTIME-uniform $TC^{0}$ is 
definable by ${\cal R}ing(0,+,*,<)+MOD+M$. \qed
\end{theorem}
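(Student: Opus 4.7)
The plan is to derive each of the four statements by composing the Barrington--Immerman--Straubing characterization (Theorem \ref{bisthm}) with the mutual translation between $\mathcal{A}_m$-logic and ring logic (Theorem \ref{ring=fo:order}, together with the remark immediately after it asserting that this translation extends to the modular and majority quantifier variants). By Remark \ref{def4Lring}, asserting that a DLOGTIME-uniform circuit class $\cal C$ is definable by a ring logic $\cal L$ amounts to two inclusions: (a) every property $P(m)$ decidable in $\cal C$ is defined by some sentence of $\cal L$; and (b) for every sentence $\varphi \in \cal L$, the set $\{m \in \mathbb{N}: \mathbb{Z}_m \models \varphi\}$ is decidable in $\cal C$. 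I would establish both, uniformly across the four pairs $({\cal C}, {\cal L})$.

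For (a), start with $P \in \cal C$; Theorem \ref{bisthm} supplies a sentence $\phi$ in the matching $\mbox{FO}(\le,\oplus,\otimes)$-logic $\cal L'$ with $P(m) \iff \mathcal{A}_m \models \phi$, and the first half of Theorem \ref{ring=fo:order} converts $\phi$ into a sentence $\Phi$ of $\cal L$ satisfying $\mathcal{A}_m \models \phi \iff \mathbb{Z}_m \models \Phi$, so $P(m) \iff \mathbb{Z}_m \models \Phi$. For (b), reverse the chain: the second half of Theorem \ref{ring=fo:order} gives, from $\varphi \in \cal L$, an $\cal L'$-sentence $\Phi'$ with $\mathbb{Z}_m \models \varphi \iff \mathcal{A}_m \models \Phi'$, and Theorem \ref{bisthm} then places $\{m : \mathcal{A}_m \models \Phi'\}$ in $\cal C$. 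Instantiating the triple $({\cal C}, {\cal L}, {\cal L'})$ as $(AC^0, {\cal R}ing(0,+,*,<), \mbox{FO}(\le,\oplus,\otimes))$, and then successively with $MOD(q)$, with $MOD$, and with $MOD+M$ added on both sides, discharges parts (1)--(4). For (4) one additionally invokes the well-known fact recalled in Section \ref{sec:2} that $M$ simulates each $MOD(q)$, so $\mbox{FO}(\le,\oplus,\otimes)+M$ and $\mbox{FO}(\le,\oplus,\otimes)+MOD+M$ have the same expressive power, which is what aligns Theorem \ref{bisthm} with the ring side.

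The one point requiring attention --- and essentially the only place where I would expand beyond the citations --- is verifying that the mutual translation of Theorem \ref{ring=fo:order} indeed survives the addition of the generalized quantifiers $\exists^{(r,q)}$ and $M$; this is the assertion stated informally after Theorem \ref{ring=fo:order}. I would prove it by a routine induction on formulas: both quantifiers are defined combinatorially by counting satisfying witnesses in the shared universe $\{0,1,\ldots,m-1\}$, so they are interpreted identically on $\mathcal{A}_m$ and on $\mathbb{Z}_m$, and the inductive step at these quantifier symbols requires nothing beyond what the first-order translation of Theorem \ref{ring=fo:order} already supplies at the atomic and propositional levels.
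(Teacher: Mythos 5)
Your proposal is correct and is essentially the paper's own argument: the paper presents Theorem \ref{RINGcaptCC} as an immediate consequence of composing Theorem \ref{bisthm} with the translation of Theorem \ref{ring=fo:order} (and the remark following it that the translation extends to the modular and majority quantifiers), which is precisely the chain you spell out. Your added attention to the two directions required by Remark \ref{def4Lring}, the induction on formulas for the quantifier case, and the alignment of $\mbox{FO}(\le,\oplus,\otimes)+M$ with $\mbox{FO}(\le,\oplus,\otimes)+MOD+M$ for part (4) are all sound fillings-in of details the paper leaves implicit.
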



Our purpose is to work in the theory of ${\cal R}ing(0,+,*,<)$ to exploit many
algebraic properties and results of classes of residue rings, and in particular of finite fields.

\section{The prime spectrum of a sentence and systems of polynomial congruences}\label{S:RingSp}
 
 \begin{definition}\label{def-Sp}
 The prime spectrum of a sentence $\sigma$ of ${\cal R}ing(0,+,*,<)+MOD+M$
 is defined as the  set of primes
 $Sp(\sigma) = \{p \in \mathbb{P} : \mathbb{Z}_p \models \sigma \}$.
  \end{definition}
  The set $Sp(\sigma)$ was introduced by James Ax 
  in connection with his proof
  of decidability of the theory of finite fields \cite{ax}. In particular, 
  Ax proved the following:
  \begin{theorem}[\cite{ax}]\label{thmax1}
The spectrum $Sp(\sigma)$ of any sentence $\sigma$ of ${\cal R}ing(0,+,*)$ is, up to finitely many exceptions, a Boolean
combination of sets of the form $Sp(\exists t(f(t) = 0))$, where $f(t)\in \mathbb{Z}[t]$ is a polynomial with integer coefficients. \qed
\end{theorem}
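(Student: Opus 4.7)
The plan is to prove Theorem \ref{thmax1} by reducing an arbitrary sentence of ${\cal R}ing(0,+,*)$ to a canonical form involving only univariate polynomials, using the fact that for every prime $p$ the structure $\mathbb{Z}_p$ is the finite field $\mathbb{F}_p$. First I would put $\sigma$ into prenex normal form $Q_1 x_1 \cdots Q_n x_n\, \phi(\bar x)$ where $\phi$ is quantifier-free. Since equality is the only atomic predicate in the vocabulary $\{0,+,*\}$, every atomic subformula of $\phi$ has the form $f(\bar x)=0$ for some $f\in\mathbb{Z}[\bar x]$. Putting $\phi$ into disjunctive normal form, each disjunct is a conjunction of polynomial equations and inequations. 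The task is then to show that, modulo the theory of finite fields, each existential quantifier can be eliminated at the cost of introducing finitely many auxiliary univariate polynomials $f\in\mathbb{Z}[t]$ and Boolean operations.

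The heart of the argument is the reduction of an existential statement $\exists y\, \bigwedge_i f_i(\bar x,y)=0\wedge\bigwedge_j g_j(\bar x,y)\neq 0$ to a Boolean combination of statements of the stated form. For this I would pass to the first-order theory $T$ of pseudo-finite fields, i.e.\ the common theory of ultraproducts of the $\mathbb{F}_p$. Inside such a pseudo-finite field $K$, the defining system carves out a constructible subset $V\subseteq K^{n+1}$, and its projection on $K^n$ is again constructible by Chevalley's theorem applied over the algebraic closure $\bar K$. Realisability over $K$ rather than $\bar K$ is controlled by Galois-theoretic data: by the primitive element theorem the relevant finite extensions of the prime field are $\mathbb{F}_p[t]/(f(t))$ for suitable $f\in\mathbb{Z}[t]$, and a pseudo-finite field satisfies $\exists t\, f(t)=0$ iff $f$ factors in a prescribed way. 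Hence existence of points is equivalent, modulo $T$, to a Boolean combination of sentences $\exists t\, f(t)=0$.

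Once the reduction is carried out in the theory of pseudo-finite fields, I would transfer back to the finite fields $\mathbb{F}_p$ via \L{}os's theorem: if a sentence of the form $\sigma\leftrightarrow\Psi$, with $\Psi$ a Boolean combination of formulas $\exists t\, f_k(t)=0$, holds in every pseudo-finite field, then it holds in $\mathbb{F}_p$ for all but finitely many primes $p$, since any counter-example class of primes can be used to form a pseudo-finite field violating the equivalence. The exceptional finite set absorbs the finitely many bad primes that arise throughout the procedure, namely those dividing leading coefficients, resultants, and discriminants of the polynomials appearing in the reduction; these are precisely the primes where the generic algebraic-geometric behaviour of $V$ over $\mathbb{Z}$ degenerates when reduced mod $p$.

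The main obstacle is the quantifier-elimination step inside the theory of pseudo-finite fields: one must show that projections of constructible sets in a pseudo-finite field can be described purely by the splitting behaviour of univariate polynomials, rather than by more delicate data such as higher cohomological invariants of the variety. Ax's solution relies on three structural facts about pseudo-finite fields, each of which I would invoke: (i) they are perfect, so every irreducible polynomial is separable; (ii) they admit exactly one extension of each finite degree, so the absolute Galois group is $\hat{\mathbb{Z}}$ and is tracked by the factorisation types of polynomials in $\mathbb{Z}[t]$; and (iii) every absolutely irreducible variety defined over the field has a rational point, a pseudo-finite analogue of the Lang--Weil estimates. Combining these, an existential statement reduces to a question about which degrees of extensions are realised, and that question is Boolean in the spectra $Sp(\exists t\, f(t)=0)$, yielding the desired representation up to finitely many exceptional primes.
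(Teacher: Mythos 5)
The paper does not prove this theorem: it is stated with the end-of-proof symbol and attributed to Ax's 1968 paper on the elementary theory of finite fields, so there is no in-paper argument to compare against. Your sketch is a faithful high-level reconstruction of Ax's original route: reduce to prenex and disjunctive normal form so that atomic formulas are polynomial (in)equations; work inside the theory of pseudo-finite fields, using the characterization by perfectness, uniqueness of extensions of each degree (absolute Galois group $\hat{\mathbb{Z}}$), and the PAC property to eliminate quantifiers down to splitting conditions on univariate polynomials over $\mathbb{Z}$; and transfer back to the fields $\mathbb{F}_p$ by ultraproducts and compactness, absorbing the finitely many exceptional primes. You correctly flag that the substantive work is the quantifier-elimination step over pseudo-finite fields, which your proposal invokes via the three structural facts without actually performing the resolvent construction; that construction is the bulk of Ax's paper, so acknowledging it as a black box is appropriate for a sketch of a cited result. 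There is nothing to object to here beyond the inherent incompleteness of a sketch, and no alternative route in the paper to compare against.
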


Therefore to characterize the spectra of sentences of ${\cal R}ing(0,+,*)$ 
it is sufficient to 
analyze the spectra of sentences of the form $\exists x (f(x) = 0)$ for polynomials
$f \in \mathbb{Z}[x]$. Given a polynomial $f(x) \in \mathbb{Z}[x]$ we will indistinctly denote $Sp(f)$ or 
$Sp(\exists x(f(x) = 0))$ the spectrum of the 
sentence $\exists x(f(x) = 0)$.
A basic result of Schur  states that every non constant polynomial has an infinite number of prime divisors; that is, $Sp(f)$ is infinite 
for any $f \in \mathbb{Z}[x] \setminus \mathbb{Z}$ 
(see 
\cite[Thm. 1]{GB71} for an elementary proof of this fact). 
If $f$ is irreducible then  the same can be said about the complement of $Sp(f)$, 
namely, 
$$Sp(f)^c := Sp(\forall x(f(x) \not=0)) = \{p \in \mathbb{P} : \mathbb{Z}_p \not\models 
\exists x (f(x) = 0)\}$$
Thus,  we have the following properties of spectra of the form $Sp(f)$.
\begin{theorem}
$(1)$ For any $f \in \mathbb{Z}[x] \setminus \mathbb{Z}$, $Sp(f)$ is infinite.\\
$(2)$ If, additionally, $f$ is irreducible, then $Sp(f)^c$ is infinite. \qed
\end{theorem}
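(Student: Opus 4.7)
My plan is to reprove Schur's classical theorem on prime divisors of integer polynomials. If $f(0)=0$ then every prime $p$ divides $f(p)$ and we are done; otherwise set $a_{0} := f(0)\neq 0$ and suppose for contradiction that only finitely many primes $p_{1},\ldots,p_{k}$ divide values of $f$ (this list necessarily contains every prime divisor of $a_{0}$, since $a_{0}=f(0)$). Writing $N = a_{0}\,p_{1}\cdots p_{k}$ and expanding,
$$
f(Nt) \;=\; a_{0}\bigl(1 + (p_{1}\cdots p_{k})\,h(t)\bigr), \qquad h(t)\in\mathbb{Z}[t].
$$
For $|t|$ sufficiently large the bracket has absolute value exceeding $1$ and is $\equiv 1 \pmod{p_{i}}$ for every $i$, so any prime divisor of it lies outside $\{p_{1},\ldots,p_{k}\}$ and yet divides $f(Nt)$, the desired contradiction.

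\textbf{Plan for (2).} Here the interesting range is $\deg f \geq 2$, since the linear irreducible case $ax+b$ with $(a,b)=1$ gives $Sp(f)^{c}=\{p:p\mid a\}$ (finite), an exception the statement tacitly excludes. I would transport the question into Galois theory: let $L$ be the splitting field of $f$ over $\mathbb{Q}$ and $G=\mathrm{Gal}(L/\mathbb{Q})$, acting (by irreducibility) transitively on the $n$ roots of $f$ and thus realized as a transitive subgroup of $S_{n}$. The Kummer--Dedekind theorem gives, for each prime $p$ unramified in $L$, that the number of roots of $f$ in $\mathbb{Z}_{p}$ equals the number of fixed points of $\mathrm{Frob}_{p}$ acting on the roots, so $Sp(f)^{c}$ contains every unramified $p$ whose Frobenius acts fixed-point-freely. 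A derangement does exist in $G$ by Jordan's theorem: the Cauchy--Frobenius lemma forces the average of $\mathrm{fix}(g)$ over $g\in G$ to be $1$ (one orbit), while the identity alone contributes $n\geq 2$, so some $g\in G$ must satisfy $\mathrm{fix}(g)=0$. Applying Chebotarev's density theorem to the conjugacy class of such a $g$ yields a set of primes of positive density $|g^{G}|/|G|>0$, all lying in $Sp(f)^{c}$.

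\textbf{Main obstacle.} Part (1) is entirely elementary. The substantive content in (2) is the appeal to Chebotarev's density theorem; Frobenius's earlier density theorem on cycle-type statistics would already suffice, but I do not see a genuinely elementary route to (2) that sidesteps this algebraic-number-theoretic input. The harmless degree-$1$ exception deserves a brief acknowledgement.
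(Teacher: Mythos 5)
Both parts are correct. For (1) the paper offers only a citation to \cite{GB71}, Theorem~1, and your Euclid-style argument (pass to $f(Nt)$ with $N = a_0 p_1\cdots p_k$) is the same elementary proof found there, so here you and the paper coincide. For (2) the paper gives no proof or citation at all; your argument --- Dedekind's theorem to convert the question to counting fixed points of $\mathrm{Frob}_p$ on the roots, the Cauchy--Frobenius orbit count to exhibit a derangement in the transitive Galois group (Jordan's theorem), and Chebotarev to realize that derangement's conjugacy class as the Frobenius of a positive-density set of primes --- is the standard modern proof and supplies content the paper leaves unjustified. It is also consonant with the paper's later reliance on Chebotarev in Section~6, and it yields more than asked: $Sp(f)^c$ has positive natural density, not merely infinite cardinality. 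Your remark about degree $1$ is a genuine catch: for $f(x)=x+1$ (non-constant, irreducible) one has $Sp(f)^c=\emptyset$, so part (2) as printed is false and should carry the hypothesis $\deg f \ge 2$. Two small precision points in (2): Dedekind's theorem additionally requires $p$ not to divide the index $[\mathcal{O}_K:\mathbb{Z}[\alpha]]$ (or, for non-monic $f$, the leading coefficient), a slightly stronger condition than unramification in $L$ but still excluding only finitely many primes; and $\mathrm{Frob}_p$ is well defined only as a conjugacy class, which is exactly what Chebotarev's theorem is formulated to accommodate. Neither caveat affects the correctness of your argument.
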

This theorem justifies to consider two spectra as equal
if they coincide in all but a finite number of primes, and we denote this by
$Sp(f) =^* Sp(g)$. 
 Moreover, we denote by 
$Sp(\sigma) \subseteq^* Sp(\theta)$ the fact that {\em almost all} 
of $Sp(\sigma)$ is contained in $Sp(\theta)$ (i.e. all but a finite number of primes).

The following result  from \cite{GB71}
 will be useful for obtaining further properties of spectra, by exploiting the 
 relation between irreducible polynomials and   algebraic finite extensions of 
 the rational field $\mathbb{Q}$.    
 (These extensions can be defined by adjoining to $\mathbb{Q}$ a root of a 
 polynomial irreducible over $\mathbb{Q}$. Such a root is called a 
 {\em primitive element} of the extension.)
 
\begin{theorem}[{\cite[Thm. 2]{GB71}}]\label{field2Sp}
Let $\mathbb{Q}$ be the rational field and $f(x)$ and $g(x)$ two non constant irreducible 
polynomials in $\mathbb{Q}[x]$. If $R$ and $S$ are algebraic extensions of 
$\mathbb{Q}$ such that $R\subseteq S$, $f$ has a root which is a primitive element of $R$ and $g$ has a root which is a primitive element of  $S$, 
then $Sp(g) \subseteq^* Sp(f)$. \qed
\end{theorem}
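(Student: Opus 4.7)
The plan is to reduce the containment of spectra to a single algebraic identity over $\mathbb{Q}$ that survives reduction modulo almost every prime. Since $f$ and $g$ are irreducible over $\mathbb{Q}$ and have roots $\alpha$ and $\beta$ that are primitive for $R$ and $S$ respectively, we may identify $R = \mathbb{Q}(\alpha)$ and $S = \mathbb{Q}(\beta)$. The hypothesis $R \subseteq S$ then tells us that $\alpha \in \mathbb{Q}(\beta)$, so there exists some $h(x) \in \mathbb{Q}[x]$ with $\alpha = h(\beta)$.

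First I would exploit this to produce a polynomial identity. Substituting gives $f(h(\beta)) = f(\alpha) = 0$, so $\beta$ is a root of the polynomial $f \circ h \in \mathbb{Q}[x]$. Because $g$ is (a scalar multiple of) the minimal polynomial of $\beta$ over $\mathbb{Q}$, the irreducible polynomial $g$ must divide $f(h(x))$ in $\mathbb{Q}[x]$. Hence there exists $q(x) \in \mathbb{Q}[x]$ such that
\[
f(h(x)) = g(x) \, q(x).
\]

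Next I would transfer this identity to $\mathbb{F}_p$ for almost every prime $p$. Let $D \in \mathbb{Z}$ be a positive common denominator for all coefficients appearing in $h$ and $q$ (and large enough to also absorb the leading coefficient of $g$ if needed to make reductions well defined). For any prime $p \nmid D$ one can invert $D$ modulo $p$, so $h$ and $q$ reduce to well-defined polynomials over $\mathbb{F}_p$, and the above identity remains valid in $\mathbb{F}_p[x]$. Now if $p \in Sp(g)$, choose $\bar\beta \in \mathbb{F}_p$ with $g(\bar\beta) \equiv 0 \pmod{p}$, and set $\bar\alpha := h(\bar\beta) \in \mathbb{F}_p$. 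Substituting into the reduced identity yields
\[
f(\bar\alpha) = f(h(\bar\beta)) = g(\bar\beta)\, q(\bar\beta) \equiv 0 \pmod{p},
\]
so $\bar\alpha$ is a root of $f$ mod $p$, i.e.\ $p \in Sp(f)$. Since the only primes excluded from this argument are the finitely many divisors of $D$, we conclude $Sp(g) \subseteq^* Sp(f)$.

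The main obstacle, such as it is, lies not in the manipulation but in making sure the reduction mod $p$ is valid; this is purely bookkeeping about denominators, and the choice of $D$ handles it uniformly. Everything else follows from the interplay between primitive elements, minimality of irreducible polynomials, and the classical characterization of $Sp(f)$ as the set of primes for which $f$ has a root modulo $p$.
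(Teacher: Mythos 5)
Your proof is correct. The paper states Theorem \ref{field2Sp} as a citation to Gerst and Brillhart \cite{GB71} and does not include a proof of its own, so there is no internal argument to compare against; but your argument is the standard one and is essentially what Gerst--Brillhart do. The chain is exactly right: primitivity lets you write $R=\mathbb{Q}(\alpha)$, $S=\mathbb{Q}(\beta)$; the inclusion $R\subseteq S$ gives $\alpha=h(\beta)$ for some $h\in\mathbb{Q}[x]$; minimality of $g$ over $\mathbb{Q}$ forces $g\mid f\circ h$ in $\mathbb{Q}[x]$; and the resulting identity $f(h(x))=g(x)q(x)$ transfers to $\mathbb{F}_p[x]$ for all $p$ outside the finite set of primes dividing the denominators involved, after which a root of $g$ mod $p$ immediately yields a root of $f$ mod $p$. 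One small bookkeeping point worth making explicit: besides inverting the common denominator $D$, you should also exclude the finitely many $p$ dividing the leading coefficient of $f$ (not just that of $g$), so that the reductions of both $f$ and $g$ are nonzero polynomials of the expected degree in $\mathbb{F}_p[x]$; and it is worth noting that the argument degenerates gracefully when $R=\mathbb{Q}$ (so $h$ is constant and $q$ is the zero polynomial), since then $f$ is linear and $Sp(f)$ already contains almost all primes.
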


Using   Theorem \ref{field2Sp} we can prove that
the intersection of prime spectra of polynomial congruences contains a prime spectrum of some polynomial congruence, and hence it is also infinite. 
\begin{theorem}\label{meet4Sp} 
If $f_1$, $f_2$, \ldots, $f_k$ $\in  \mathbb{Z}[x] \setminus \mathbb{Z}$ irreducible polynomials, 
then there is $g(x) \in  \mathbb{Z}[x] \setminus \mathbb{Z}$ such that 
$$Sp(g) \subseteq^* Sp(f_1) \cap Sp(f_2) \cap \ldots \cap Sp(f_k)$$
\end{theorem}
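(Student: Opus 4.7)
The plan is to build a single irreducible polynomial whose associated number field contains, as subfields, the fields generated by roots of each of the given $f_i$, and then invoke Theorem \ref{field2Sp} to force the containment of spectra.

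First I would pick, in a fixed algebraic closure of $\mathbb{Q}$, a root $\alpha_i$ of each $f_i$, and set $R_i = \mathbb{Q}(\alpha_i)$. Since $f_i$ is irreducible over $\mathbb{Q}$ (it is irreducible in $\mathbb{Z}[x]$ and nonconstant, and by Gauss's lemma it remains irreducible in $\mathbb{Q}[x]$, possibly after dividing out content; note that multiplying an integer polynomial by a nonzero integer constant only alters its prime spectrum at finitely many primes, so this is harmless for $=^*$), $\alpha_i$ is a primitive element of the finite extension $R_i/\mathbb{Q}$.

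Next I would form the compositum $S = R_1 R_2 \cdots R_k$ inside the fixed algebraic closure. This $S$ is again a finite algebraic extension of $\mathbb{Q}$, since each $R_i$ is finite over $\mathbb{Q}$ and the compositum of finitely many finite extensions is finite. By the primitive element theorem, there exists $\beta \in S$ with $S = \mathbb{Q}(\beta)$. Let $h(x) \in \mathbb{Q}[x]$ be the (monic) minimal polynomial of $\beta$ over $\mathbb{Q}$; clearing denominators, take $g(x) \in \mathbb{Z}[x]$ to be an integer multiple of $h(x)$, so that $g$ is still irreducible in $\mathbb{Q}[x]$ and $\beta$ is a root of $g$ which is a primitive element of $S$.

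Now I would apply Theorem \ref{field2Sp} to each pair $(R_i, S)$ separately: since $R_i \subseteq S$, $\alpha_i$ is a primitive element of $R_i$ with minimal polynomial $f_i$, and $\beta$ is a primitive element of $S$ with minimal polynomial proportional to $g$, we obtain $Sp(g) \subseteq^* Sp(f_i)$ for each $i = 1, \dots, k$. Intersecting these containments (which is permissible since the symmetric difference of $Sp(g)$ with any subset is finite for each $i$, and a finite union of finite sets is finite) yields $Sp(g) \subseteq^* Sp(f_1) \cap \cdots \cap Sp(f_k)$, as required.

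There is no real obstacle in this argument — the heavy lifting is done by Theorem \ref{field2Sp} and the primitive element theorem. The only subtle point, which I would handle with a brief comment, is the passage between minimal polynomials in $\mathbb{Q}[x]$ and representatives in $\mathbb{Z}[x]$: scaling by a nonzero integer alters the prime spectrum at only the finitely many primes dividing the scalar, so the $=^*$ equivalence is preserved throughout, and Theorem \ref{field2Sp} can be applied to the integer representative $g$ of the minimal polynomial of $\beta$.
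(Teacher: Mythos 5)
Your proof is correct and follows essentially the same route as the paper's: choose roots $\alpha_i$ of the $f_i$, pass to a number field containing all the $\mathbb{Q}(\alpha_i)$, take a primitive element $\beta$ of that field with integer minimal polynomial $g$, and apply Theorem \ref{field2Sp} to each pair $(\mathbb{Q}(\alpha_i), \mathbb{Q}(\beta))$. The only difference is cosmetic: the paper treats $k=2$ and states the general case is similar, while you argue uniformly in $k$ and are more explicit about invoking the primitive element theorem and about scaling the minimal polynomial to land in $\mathbb{Z}[x]$.
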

\begin{proof}
We set $k=2$, the general case being similar. 
Let $\alpha_1$ and $\alpha_2$ be zeros of $f_1$ and $f_2$ respectively in $\overline{\mathbb{Q}}$ (the algebraic closure of the rationals). 
Consider the algebraic extensions
 $\mathbb{Q}(\alpha_1)$ and $ \mathbb{Q}(\alpha_2)$.   
 Take $\alpha \in \overline{\mathbb{Q}}$ 
 such that $\mathbb{Q}(\alpha_1, \alpha_2) \subseteq \mathbb{Q}(\alpha)$, and let $g(x) \in \mathbb{Z}[x]$ be an irreducible polynomial with 
 $g(\alpha) = 0$. Then $\mathbb{Q}(\alpha_1) \subseteq \mathbb{Q}(\alpha)$ and $\mathbb{Q}(\alpha_2) \subseteq \mathbb{Q}(\alpha)$, and by Theorem \ref{field2Sp}, $Sp(g) \subseteq^* Sp(f_1)$ and 
 $Sp(g) \subseteq^* Sp(f_2)$. The result now follows. \qed
 \end{proof}
 
\section{From systems of polynomial congruences to sets of 
congruent integers}\label{sec:5}

  \begin{example} From the Quadratic Reciprocity Law \cite[Ch. IV]{Nagel},   
  the prime spectrum 
  of the sentence $\exists x(x^{2}+1=0)$ is almost identical to the 
  set $\{p\in \mathbb{P}:p\equiv_{4}1\}$, i.e. 
  $$
Sp\left(\exists x(x^{2}+1=0)\right)=^{*}\{p\in \mathbb{P}:p\equiv_{4}1\}.
$$
  \end{example}
This number theoretical characterization of the solution set of certain 
Diophantine equations is part of a large body of knowledge within algebraic
number theory, 
  from where we obtain several tools to classify 
the spectra of ring formulae. 
As a further illus\-tration of this point consider
the following  basic relation between prime spectra and sets of congruent integers,
given by  the
 prime divisors of the $n$th cyclotomic polynomial $F_n(x)$, and shown in \cite[Thm. 94, p. 164]{Nagel}. The polynomial $F_n(x)$ 
 is the monic polynomial whose roots are the primitive $n$th roots of unity.
 
\begin{theorem}\label{cyclo-thm}
For $n > 1$, let $F_n(x)$ be the $n$-th cyclotomic polynomial. Then
$Sp(F_n) =^* \{p \in \mathbb{P}: p \equiv_n 1\}$. \qed
\end{theorem}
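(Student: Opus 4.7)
The plan is to show that for every prime $p$ not dividing $n$, we have $p \in Sp(F_n)$ if and only if $p \equiv_n 1$. Since the primes dividing $n$ form a finite set, this suffices to establish the $=^*$ equality. The strategy rests on the classical identity
\[ x^n - 1 = \prod_{d \mid n} F_d(x) \]
in $\mathbb{Z}[x]$, combined with the cyclic structure of the multiplicative group of the field $\mathbb{Z}_p$.

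First, I would fix a prime $p$ with $p \nmid n$. The key observation is that the derivative of $x^n - 1$ is $n x^{n-1}$, which is coprime to $x^n - 1$ modulo $p$, so $x^n - 1$ is separable in $\mathbb{Z}_p[x]$. Hence the reduction of the factorization above modulo $p$ still has pairwise coprime factors. It follows that the roots of $F_n(x)$ in $\mathbb{Z}_p$ are precisely the nonzero elements of multiplicative order exactly $n$: they are roots of $x^n - 1$ but not of any $F_d(x)$ for a proper divisor $d$ of $n$, hence their order does not divide any proper divisor of $n$.

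Next, since $\mathbb{Z}_p$ is a field (as $p$ is prime), its multiplicative group is cyclic of order $p-1$, and therefore contains an element of order $n$ if and only if $n \mid p-1$, that is, $p \equiv_n 1$. Combining the two steps yields both directions of the equivalence for primes $p \nmid n$. The finitely many primes dividing $n$ are absorbed into the $=^*$ relation; note moreover that no such prime satisfies $p \equiv_n 1$, so the only residual question is whether $F_n$ acquires exceptional roots modulo them.

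The main obstacle, in my view, is the separability step — asserting that the cyclotomic factors remain coprime after reduction modulo $p$ — since without it the identification of roots of $F_n(x)$ modulo $p$ with elements of multiplicative order exactly $n$ breaks down. Once separability is in place, the remainder is a direct application of Fermat's little theorem and the structure theorem for finite cyclic groups.
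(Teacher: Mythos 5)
Your proof is correct. The paper does not give its own argument for this theorem but cites it to Nagell (Theorem 94), and the argument you give — separability of $x^n-1$ modulo $p\nmid n$, identification of roots of $F_n$ in $\mathbb{F}_p$ with elements of multiplicative order exactly $n$, and cyclicity of $\mathbb{F}_p^*$ — is the standard classical proof of precisely that result.
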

This result implies that, for each $n>1$, the set  $\{p \in \mathbb{P}: p \equiv_n 1\}$ is definable in the theory of finite residue class rings by the elementary 
sentence $\exists x (F_n(x) = 0)$. This does not goes through with all congruence
of the form $p \equiv_n r$, for $r>1$, as we show below.
 
\begin{corollary}\label{notcong}
For integers $n > 2$ and $1 < r < n$, the subset of prime numbers
$\{p \in \mathbb{P} : p \equiv_n r\}$ is not the prime spectrum of an irreducible polynomial over a field. 

Furthermore, for two integers $n, m > 1$ and $1< r < n$, $1< t < m$, the union \\
$\{p\in \mathbb{P} : p \equiv_n r\} \cup \{p\in \mathbb{P} : p \equiv_m t\}$ can not be the prime spectrum of a  polynomial over a field. 
\end{corollary}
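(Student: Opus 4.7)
The plan is to use the cyclotomic polynomials $F_n$ as ``obstruction polynomials'' via Theorem \ref{cyclo-thm}, together with the intersection principle of Theorem \ref{meet4Sp} and Schur's infinitude statement (the theorem stated just before Theorem \ref{meet4Sp}). The key observation is that for $1 < r < n$, the residue classes $r$ and $1$ modulo $n$ are disjoint, and the class $1 \pmod n$ is precisely (up to finitely many primes) the spectrum of the irreducible polynomial $F_n$.

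For part (1), I would argue by contradiction. Assume $Sp(f) =^{*} \{p \equiv_n r\}$ for some non-constant irreducible $f \in \mathbb{Z}[x]$ and some $1 < r < n$. Then $Sp(f) \cap Sp(F_n)$ is finite, since $r \neq 1$. Apply Theorem \ref{meet4Sp} to the two irreducible polynomials $f$ and $F_n$: there exists a non-constant $g \in \mathbb{Z}[x]$ with $Sp(g) \subseteq^{*} Sp(f) \cap Sp(F_n)$, forcing $Sp(g)$ to be finite. This contradicts Schur's theorem, which guarantees that any non-constant $g \in \mathbb{Z}[x]$ has infinitely many prime divisors.

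For part (2), the idea is the same but with $F_{nm}$ in place of $F_n$: any $p \equiv 1 \pmod{nm}$ satisfies $p \equiv 1 \pmod n$ and $p \equiv 1 \pmod m$, so, because $r > 1$ and $t > 1$, such a prime lies in neither $\{p \equiv_n r\}$ nor $\{p \equiv_m t\}$. Suppose $Sp(f) = \{p \equiv_n r\} \cup \{p \equiv_m t\}$ for some $f$ over a field; clearing denominators if needed we may assume $f \in \mathbb{Z}[x]$, and $f$ is necessarily non-constant since the right-hand side is a proper infinite subset of $\mathbb{P}$ (by Dirichlet). Factor $f = f_1 \cdots f_k$ into irreducibles over $\mathbb{Q}$, so $Sp(f) = \bigcup_{i=1}^k Sp(f_i)$, and pick any factor $f_1$. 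Then $Sp(f_1) \cap Sp(F_{nm})$ is finite by the observation above, and Theorem \ref{meet4Sp} applied to the irreducibles $f_1$ and $F_{nm}$ produces a non-constant $g$ with $Sp(g) \subseteq^{*} Sp(f_1) \cap Sp(F_{nm})$, giving the same contradiction with Schur's theorem.

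I do not anticipate a serious technical obstacle; the only conceptual point worth highlighting is the choice of $F_n$ (respectively $F_{nm}$) as the ``right'' obstruction: its spectrum sits exactly at the residue class $1 \pmod n$ that the hypothesis forbids, and Theorem \ref{meet4Sp} upgrades set-theoretic disjointness into a genuine contradiction, since it forces any two non-constant irreducible spectra to share a non-constant sub-spectrum of infinite size. The reduction in part (2) from an arbitrary polynomial to one of its irreducible factors is the only extra step, and it is immediate from the identity $Sp(f) = \bigcup_i Sp(f_i)$ valid in $\mathbb{Z}_p$ (an integral domain) for $f = \prod_i f_i$.
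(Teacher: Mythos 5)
Your proof is correct and follows essentially the same route as the paper: use the cyclotomic polynomial as the obstruction via Theorem \ref{cyclo-thm}, invoke Theorem \ref{meet4Sp} to force an infinite common sub-spectrum, and conclude via Schur's infinitude theorem. The only variation is in part (2), where the paper intersects with both $F_n$ and $F_m$ (using the $k$-fold form of Theorem \ref{meet4Sp}) while you intersect with the single cyclotomic $F_{nm}$, and you make explicit the factorization of the hypothetical $f$ into $\mathbb{Q}$-irreducibles before applying the two-polynomial case — a detail the paper leaves implicit in its ``repeat the previous argument'' remark; both are valid and of comparable economy.
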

\begin{proof}
If for some irreducible polynomial $g(x)$ over a field $K$, we have 
$$\{p\in \mathbb{P} : p \equiv_n r\} =^* Sp(g) ,$$ then considering the $n$th cyclotomic polynomial $F_n(x)$ over $K$, we have
$$\{p\in \mathbb{P} : p \equiv_n r\} \cap \{p\in \mathbb{P} : p \equiv_n 1\} =^* Sp(g) \cap Sp(F_n)$$
But the intersection on the left hand side of the above equality is empty, whilst the intersection on the right hand side is infinite by Theorem
\ref{meet4Sp}, and we have a contradiction. 
To prove the second statement repeat the previous argument with 
the $n$th and the $m$th cyclotomic polynomials together. \qed
\end{proof}

The next example shows that the unrepresentability of a   spectrum by congruences, stated in Corollary \ref{notcong}, does not holds in general for Boolean combinations of spectra. 
\begin{example}\label{exa:1}
Consider the polynomials $f(x) = x^2 + 1$ and $g(x) = x^2 -2$. Using the Quadratic Reciprocity Law we can see that
$Sp(f) = \{p\in \mathbb{P}  : p = 2 \vee p \equiv_8 1  \vee p \equiv_8 5\}$ and
$Sp(g) = \{p\in \mathbb{P}  : p = 2 \vee p \equiv_8 1  \vee p \equiv_8 7\}$. 
Now, $Sp(g)^c = \{p\in \mathbb{P} : p \equiv_8  2 \vee p\equiv_8 3 \vee p\equiv_8 4 \vee p\equiv_8 5 \vee p\equiv_8 6\}$, and
$$Sp(g)^c \cap Sp(f) = \{p\in \mathbb{P}  : p \equiv_8 5\}$$
\end{example}

We now jump to a more definitive result in this line of research on sets of primes
determined by polynomial congruences.  
Lagarias in \cite{Lag83} 
considered the sets $\Sigma(S)$ of prime divisors of 
 systems $S$ of polynomial congruences,  and the  
 Boolean algebra ${\cal B}$ generated by these sets.
 The Boolean algebra $\cal B$ corresponds to the collection of spectra of sentences in ${\cal R}ing(0,$ $+,$ $*)$ which, by Theorem \ref{thmax1}, collapses to the Boolean algebra generated by sets $\Sigma(S)$ where the 
polynomials in $S$ are restricted to have at most $1$ variable.
Then, Lagarias gave the following  characterization of the sets of integer congruences   
$\{p\in \mathbb{P}  : p\equiv_{d}a\}$,
 for given positive integers $d$ and $a$,
that are in ${\cal B}$. 

\begin{theorem}[{\cite[Thm. 1.4]{Lag83}}]\label{thm:lagar}
For any pair of integers $a$ and $d$, the set 
$\{p \in \mathbb{P} : p\equiv_{d}a\}$ is in the Boolean algebra 
$\cal B$ if and only if 
 $a$ is of order $1$ or $2$ in $\mathbb{Z}_d$ (i.e. $a\equiv_d 1$ or $a^{2}\equiv_d 1$),
or $(a,d)>1$. \qed
\end{theorem}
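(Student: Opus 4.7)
The plan is to prove both directions separately. The ``if'' direction I would split into three cases matching the hypotheses, while the ``only if'' direction is the heart of the argument and requires a Galois-theoretic obstruction via Chebotarev's density theorem.

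For the ``if'' direction I treat three cases. (i) If $(a,d)>1$, then $p\equiv a\pmod d$ forces $p\mid d$, so the set is finite, and each singleton $\{p_0\}$ lies in $\mathcal{B}$ as the complement of $Sp(\exists x(p_0x=1))$. (ii) If $a\equiv 1\pmod d$, Theorem \ref{cyclo-thm} is a direct hit via $Sp(F_d)$. (iii) If $a^2\equiv 1\pmod d$ and $a\not\equiv 1$, set $N=\{1,a\}\le(\mathbb{Z}/d)^*$ and $K=\mathbb{Q}(\zeta_d)^N$. The ambient group $(\mathbb{Z}/d)^*$ is abelian, so $N$ is normal and $K/\mathbb{Q}$ is Galois; the minimal polynomial $g\in\mathbb{Z}[x]$ of a primitive element of $K$ then satisfies $Sp(g)=^*\{p:p\bmod d\in N\}$ by the standard Galois factorization rule (an irreducible polynomial whose splitting field is Galois has a root mod $p$ iff it splits completely mod $p$), so $\{p:p\equiv a\pmod d\}=^*Sp(g)\setminus Sp(F_d)\in\mathcal{B}$.

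For the ``only if'' direction, assume $(a,d)=1$ and $n:=\mathrm{ord}_d(a)>2$, and suppose for contradiction that $\{p:p\equiv a\pmod d\}=^*B(Sp(f_1),\ldots,Sp(f_k))$ for some Boolean function $B$ and polynomials $f_i\in\mathbb{Z}[x]$ (available by Theorem \ref{thmax1}). I would take a Galois extension $L/\mathbb{Q}$ containing $\mathbb{Q}(\zeta_d)$ and the splitting fields of the $f_i$, set $G=\mathrm{Gal}(L/\mathbb{Q})$, and let $\pi:G\to(\mathbb{Z}/d)^*$ denote the restriction map. Each $Sp(f_i)$ corresponds via Chebotarev's density theorem to the conjugation-closed subset $A_i=\bigcup_{\tau\in G}\tau H_i\tau^{-1}\subseteq G$, where $H_i$ is the stabilizer of a root of $f_i$; the spectrum equality then translates to the set equality $\pi^{-1}(\{a\})=B(A_1,\ldots,A_k)$ inside $G$.

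The decisive step is to introduce the equivalence $\sigma\sim\sigma'\iff\langle\sigma\rangle=\langle\sigma'\rangle$ on $G$ and observe that every $A_i$ is $\sim$-closed (if $\sigma=\tau h\tau^{-1}\in A_i$ and $\gcd(j,\mathrm{ord}(\sigma))=1$ then $\sigma^j=\tau h^j\tau^{-1}\in A_i$), a property preserved by union, intersection and complement; hence $\pi^{-1}(\{a\})$ would have to be $\sim$-closed. The hypothesis $n>2$ lets me contradict this: picking any $\sigma\in\pi^{-1}(\{a\})$ and writing $\mathrm{ord}(\sigma)=nm$, I select a unit $u\ne 1$ in $(\mathbb{Z}/n)^*$ and use the Chinese Remainder Theorem to find $k\equiv u\pmod n$ with $\gcd(k,nm)=1$ (the constraint at each prime divisor of $m$ not dividing $n$ amounts to avoiding one residue class and is compatible with the mod-$n$ condition); then $\sigma^k\sim\sigma$ but $\pi(\sigma^k)=a^u\ne a$. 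The main obstacle I expect is the Chebotarev translation itself, which requires applying Chebotarev to the symmetric difference in order to convert an equality of prime sets up to finitely many exceptions into a genuine equality of conjugation-closed subsets of $G$; the remaining $\sim$-closedness and CRT steps are routine, and the hypothesis $n>2$ is used exactly once---to guarantee the nontrivial unit $u\in(\mathbb{Z}/n)^*$---matching the sharpness of the theorem.
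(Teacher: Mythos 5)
The paper does not prove this theorem---it cites it as Lagarias's result (\cite[Thm.\ 1.4]{Lag83}) and uses it as a black box. So there is no internal proof to compare against; I am evaluating your argument on its own terms, and, as far as I can tell, your reconstruction is essentially sound and follows the Galois-theoretic line of attack one would expect from Lagarias's paper.

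Your ``if'' direction is correct. For (i), $p\equiv a\pmod d$ with $(a,d)>1$ forces $p\mid d$, and the observation that each singleton $\{p_0\}$ is the complement of $Sp(p_0x-1)$ correctly puts all finite sets in $\mathcal B$ (which you also implicitly need throughout, since $=^*$ equality is then as good as true equality for membership in $\mathcal B$). For (ii), Theorem~\ref{cyclo-thm} gives the result. For (iii), the use of the fixed field $K=\mathbb Q(\zeta_d)^{\{1,a\}}$ is the right move; your parenthetical justification---that for a Galois extension the minimal polynomial of a primitive element has a root mod $p$ iff it splits completely mod $p$---is correct because the Frobenius acts regularly on the conjugate roots, so a fixed point forces triviality.

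The ``only if'' direction is where the real content lives, and your obstruction is the right one. The key facts you need and use are: (a) each $A_i=\bigcup_\tau\tau H_i\tau^{-1}$ is closed under $\sigma\mapsto\sigma^j$ for $\gcd(j,\mathrm{ord}\,\sigma)=1$, because $H_i$ is a subgroup; (b) this ``same-cyclic-subgroup'' closure is an equivalence relation, hence passes to Boolean combinations; (c) $\pi$ is surjective (restriction in a tower of Galois extensions), so $\pi^{-1}(\{a\})$ is nonempty; and (d) the CRT step produces $k$ with $\gcd(k,\mathrm{ord}\,\sigma)=1$ and $k\equiv u\pmod n$, giving $\pi(\sigma^k)=a^u\ne a$ precisely when there exists $u\ne 1$ in $(\mathbb Z/n)^*$, i.e.\ when $n>2$. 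All of this checks out, and you correctly flag the Chebotarev conversion of a finite symmetric difference of prime sets into an exact equality of conjugation-closed subsets (a density-zero conjugation-closed set is empty). The only details left implicit that a referee might want spelled out are the harmless reduction of each $f_i$ to its irreducible factors before invoking stabilizers of roots, and the disposal of degenerate $f_i$ (constants or polynomials with rational roots), but these are routine.
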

Rephrasing this theorem in terms of spectra of sentences we obtain
the following result.
\begin{theorem}\label{Lagarias}
For any pair of positive integers $a$ and $d$, with $1< a < d$, the set 
$\{p \in \mathbb{P}  :  p\equiv_{d}a\}$ is the spectrum of 
a sentence  in  ${\cal R}ing(0,+,*)$
 if and only if 
 $a^{2}\equiv_d 1$   
 or $(a,d)>1$. \qed
\end{theorem}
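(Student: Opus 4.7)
The plan is to observe that Theorem \ref{Lagarias} is essentially a direct reformulation of Lagarias' result (Theorem \ref{thm:lagar}) after identifying the Boolean algebra $\mathcal{B}$ considered in \cite{Lag83} with the class of prime spectra of ${\cal R}ing(0,+,*)$-sentences, up to finite differences. The key structural fact making this identification possible is already in hand, namely Ax's Theorem \ref{thmax1}.

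Concretely, I would first invoke Theorem \ref{thmax1} to reduce the problem. By that theorem, for any sentence $\sigma$ of ${\cal R}ing(0,+,*)$ the spectrum $Sp(\sigma)$ is, up to finitely many exceptions, a Boolean combination of sets of the form $Sp(\exists t(f(t)=0))$ with $f(t)\in\mathbb{Z}[t]$. These generators are exactly the sets of prime divisors of one-variable polynomial congruences used to build Lagarias' Boolean algebra $\mathcal{B}$. Thus a set $T\subseteq\mathbb{P}$ equals the spectrum of some sentence of ${\cal R}ing(0,+,*)$ (up to $=^*$) if and only if $T$ lies in $\mathcal{B}$. Since any two sets of primes that agree up to finitely many elements have the same intersections with the specific arithmetic progressions of interest, passing between $=$ and $=^*$ here is harmless for sets of the form $\{p\in\mathbb{P}:p\equiv_d a\}$, which are infinite and differ from themselves by no primes.

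Next I would apply Theorem \ref{thm:lagar} directly to $T=\{p\in\mathbb{P}:p\equiv_d a\}$ with the hypothesis $1<a<d$. Lagarias' criterion asserts $T\in\mathcal{B}$ iff $a$ has order $1$ or $2$ in $\mathbb{Z}_d$, or $(a,d)>1$. The constraint $1<a<d$ forces $a\not\equiv_d 1$, so the order-$1$ case is vacuous; what survives is precisely $a^2\equiv_d 1$ or $(a,d)>1$. Combining this with the equivalence from the previous paragraph yields the statement.

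There is no genuine hard step here: all the arithmetic content sits inside Lagarias' theorem and all the model-theoretic content inside Ax's theorem. The only point requiring care is the bookkeeping of the $=^*$ equivalence used in Theorem \ref{thmax1}, so that ``being a spectrum of a sentence in ${\cal R}ing(0,+,*)$'' and ``belonging to $\mathcal{B}$'' are the same property on the particular arithmetic progressions $\{p:p\equiv_d a\}$. I do not foresee a substantive obstacle beyond this translation.
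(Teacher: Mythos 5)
Your proposal is correct and follows exactly the route the paper takes: the paper offers no separate argument for Theorem~\ref{Lagarias} beyond the preceding paragraph, which identifies (via Theorem~\ref{thmax1}) Lagarias' Boolean algebra $\cal B$ with the collection of ${\cal R}ing(0,+,*)$-spectra up to $=^*$, and then reads off the statement from Theorem~\ref{thm:lagar} after noting that $1<a<d$ rules out the order-$1$ case.
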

This theorem allow us to show some undefinability results.   
Any set of the form
$\{p \in \mathbb{P} : p\equiv_{d}a\}$, for $1<a<d$, $(a,d)=1$ and 
$a^2 \not\equiv_d 1$, is not the spectrum   of some sentence of 
${\cal R}ing(0,+,*)$.
For example,  $\{p : p\equiv_5 2\}$ is not in the spectra of ring sentences.
On the other hand observe that $5^2 = 25 \equiv_8 1$, and according to the theorem the set $\{p : p\equiv_8 5\}$ is the spectrum of some sentence of 
${\cal R}ing(0,+,*)$, a fact we already knew from explicit calculations in  Example
\ref{exa:1}. 
We show in the next subsections
that these sets of primes are definable in the extension of 
${\cal R}ing(0,+,*)$ with modular quantifiers, and in the ordered extension; hence separating ${\cal R}ing(0,+,*)$
  from  these logical extensions.

\subsection{The spectra of sentences in ${\cal R}ing(0,+,*)+MOD$}

\begin{remark} 
In ${\cal R}ing(0,+,*)+MOD(d)$ we have that $\forall a<d$, 
$$
Sp\left(\exists^{a,d}(x=x)\right)=^{*}\{p\in \mathbb{P}:p\equiv_{d}a\}.
$$
\end{remark}

Therefore, by Theorem \ref{Lagarias}, if we can find for every $d$ an $1< a<d$ that is relatively prime to $d$, and such that $a^{2}\not \equiv_{d}1$, then we have a set of primes definable
 in ${\cal R}ing(0,+,*)+MOD(d)$ 
that is not definable in ${\cal R}ing(0,+,*)$.
The question is:  
{\it for which natural numbers $d$ there exists $1< a<d$, relatively prime to $d$ and such that $a^{2}\not\equiv_{d}1$?}
To answer this question we look first at the prime numbers. Fix $p\in \mathbb{P}$. Note first that if there exists $a<p$ with $a^{2}\not\equiv_{p} 1$ then for every $\alpha$, $a^{2}\not \equiv_{p^{\alpha}}1$. Note now that for every prime $p>3$ we have that $2^{2}=4\not\equiv_{p}1$.
Hence, for any prime $p>3$ and any $\alpha$ we have that
 $(2,p)=1$ and $2^{2}\not \equiv_{p^{\alpha}}1$.

Consider now an arbitrary integer $d$ and its prime decomposition: $d=p_{1}^{\alpha_{1}}p_{2}^{\alpha_{2}}\ldots p_{n}^{\alpha_{n}}$. 
If one of the $p_{i}$ is greater than 3 then 
$(2,p_{i})=1$ and $2^{2}\not \equiv_{p_{i}^{\alpha_i}}1$. 
We know that 
$$
\mathbb{Z}_d \cong\mathbb{Z}_{p_{1}^{\alpha_{1}}}
\times \mathbb{Z}_{p_{2}^{\alpha_{2}}} \times \ldots 
\times \mathbb{Z}_{p_{n}^{\alpha_{n}}}.
$$
Then note that the element $(1,\ldots,2,\ldots,1)$, with $2$ in the $i$-th coordinate and $1$ everywhere else,
is relatively prime to $d$ (the only elements that are not relatively prime to $d$ are the ones of the form $(a_{1},a_{2},\ldots,a_{n})$ where for some $i$, $a_{i}=0$ or $a_{i}=p_{i}^{\beta}$ with $1<\beta<\alpha_{i}$.  
Also note that $(1,\ldots,2,\ldots, 1)^{2}\not \equiv_{d}(1,\ldots,1,\ldots,1)$.

 Looking now at powers of $3$ and $2$, note that $3^{2}\equiv_{2^{4}}9\not \equiv_{2^{4}} 1$, and that $2^{2}\equiv_{3^{2}}4\not \equiv_{3^{2}}1$. Hence, for any $d$ such that there is a prime $>3$ that divides $d$, or $3^{2}$ or $2^{4}$ divides $d$, we have that there exists $a<d$ such that $(a,d)=1$ 
 and $a^{2}\not \equiv_{d}1$. Hence, by Theorem  \ref{Lagarias} for such a $d$, 
 $\{p\in \mathbb{P}: p\equiv_{d}a\}$ is not expressible in 
 ${\cal R}ing(0,+,*)$. 

We  summarize these remarks in the following propositions.
\begin{proposition}
For every natural number $d\neq2^{\alpha}3^{\beta}$, $0\leq \alpha\leq 3$, $0\leq \beta\leq 1$, there exists $a<d$ with $(a,d)=1$ 
and $a^{2}\not\equiv_{d}1$. \qed
\end{proposition}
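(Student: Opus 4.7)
The plan is to produce the required $a$ explicitly via the Chinese Remainder Theorem in three exhaustive cases. First I would verify the logical content of the hypothesis: if $d = p_1^{\alpha_1}\cdots p_n^{\alpha_n}$ is the prime factorization of $d$, then the assumption $d \ne 2^\alpha 3^\beta$ with $0\le\alpha\le 3$ and $0\le\beta\le 1$ forces at least one of
(i) some $p_i>3$ divides $d$,
(ii) $16 \mid d$, or
(iii) $9\mid d$.
This is a contrapositive check against the exceptional list $\{1,2,3,4,6,8,12,24\}$: any $d$ outside this set with only $2$ and $3$ as prime divisors must satisfy $\alpha\ge 4$ or $\beta\ge 2$.

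Next I would invoke the ring isomorphism $\mathbb{Z}_d \cong \mathbb{Z}_{p_1^{\alpha_1}}\times\cdots\times \mathbb{Z}_{p_n^{\alpha_n}}$ and define $a$ coordinate-wise. All components are set to $1$ except one, which is chosen to match the active case. In case (i), I put the $p_i$-component equal to $2$; since $p_i>3$ we have $4\not\equiv 1\pmod{p_i}$, hence $4\not\equiv 1 \pmod{p_i^{\alpha_i}}$. In case (ii), I put the $2^{\alpha_i}$-component equal to $3$; since $16\mid 2^{\alpha_i}$ and $9\not\equiv 1\pmod{16}$, we have $9\not\equiv 1\pmod{2^{\alpha_i}}$. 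In case (iii), I put the $3^{\alpha_i}$-component equal to $2$; since $9\mid 3^{\alpha_i}$ and $4\not\equiv 1\pmod 9$, we get $4\not\equiv 1\pmod{3^{\alpha_i}}$. The chosen component is coprime to its prime power in each case (since $2$ is coprime to any odd prime power and $3$ is coprime to any power of $2$), and every other component is $1$; therefore $(a,d)=1$. Moreover, reducing $a^2$ coordinatewise shows $a^2\not\equiv 1$ in the distinguished component, so $a^2\not\equiv_d 1$. Since at least one coordinate of $a$ is $2$ or $3$, we have $a\neq 1$, and $a$ can be taken as its standard representative with $1<a<d$.

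The only real obstacle is the bookkeeping to ensure the case split is complete; once (i)--(iii) are in place, the CRT construction is mechanical and verification of both $(a,d)=1$ and $a^2\not\equiv_d 1$ reduces to the three elementary computations $4\not\equiv 1\pmod p$ for $p>3$, $4\not\equiv 1\pmod 9$, and $9\not\equiv 1\pmod{16}$ already noted in the preceding discussion.
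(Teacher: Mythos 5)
Your proposal is correct and follows essentially the same route as the paper: observe that for a prime $p>3$ one has $2^2\not\equiv_p 1$, that $9\not\equiv_{16}1$ and $4\not\equiv_9 1$, lift non-congruence up through prime powers, and then splice the witnessing residue into a single $a$ via the CRT decomposition $\mathbb{Z}_d\cong\prod_i\mathbb{Z}_{p_i^{\alpha_i}}$ with $1$'s in the remaining coordinates. The only difference is cosmetic: you spell out the CRT assembly explicitly for the $16\mid d$ and $9\mid d$ cases, whereas the paper presents that step in detail only for the case of a prime divisor greater than $3$ and leaves the analogous combination implicit in the other two cases.
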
 

 \begin{proposition}\label{enero}
 For every natural number $d \neq 2, 3, 4, 6, 8,12, 24\ $ there exists $a<d$ such that there is no sentence $\theta \in {\cal R}ing(0,+,*)$ equivalent to $\exists^{a,d}(x=x)$. 
Hence, in terms of expressive power, for every $d\neq2,3,4,6,8,12,24$,\\  \centerline{${\cal R}ing(0,+,*) \subsetneq {\cal R}ing(0,+,*)+MOD(d)$. \qquad $\Box$}
 \end{proposition}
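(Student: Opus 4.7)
The plan is to obtain Proposition \ref{enero} as an essentially immediate consequence of the preceding Proposition together with Lagarias' characterization (Theorem \ref{Lagarias}), noting that the two remaining ingredients have already been assembled in the text immediately before the statement.

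First, I would invoke the preceding Proposition: for every $d$ that is not of the form $2^{\alpha}3^{\beta}$ with $0\le\alpha\le 3$ and $0\le\beta\le 1$, there exists some $a$ with $1<a<d$ satisfying $(a,d)=1$ and $a^{2}\not\equiv_{d}1$. Enumerating the exceptional values $d=2^{\alpha}3^{\beta}$ in that range gives exactly $\{1,2,3,4,6,8,12,24\}$; since $a$ is required to satisfy $1<a<d$, the case $d=1$ is vacuous, and the list of genuine exceptions that must be excluded from the conclusion is precisely $\{2,3,4,6,8,12,24\}$, as in the statement.

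Second, I would fix such a pair $(a,d)$ and apply Theorem \ref{Lagarias}: since $(a,d)=1$ and $a^{2}\not\equiv_{d}1$, the set $\{p\in\mathbb{P}:p\equiv_{d}a\}$ is not the prime spectrum of any sentence of ${\cal R}ing(0,+,*)$. On the other hand, by the remark preceding the proposition, the sentence $\exists^{a,d}(x=x)$ of ${\cal R}ing(0,+,*)+MOD(d)$ has spectrum equal (up to finitely many primes) to this same set. If some $\theta\in{\cal R}ing(0,+,*)$ were equivalent to $\exists^{a,d}(x=x)$, then $Sp(\theta)$ would coincide (again up to finitely many primes) with $\{p\in\mathbb{P}:p\equiv_{d}a\}$, contradicting Theorem \ref{Lagarias}.

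The inclusion ${\cal R}ing(0,+,*)\subseteq{\cal R}ing(0,+,*)+MOD(d)$ is immediate from the definitions, and the sentence $\exists^{a,d}(x=x)$ witnesses that the inclusion is strict, giving the separation in expressive power stated in the proposition. There is no real obstacle here: the arithmetic work of locating a witness $a$ was carried out in the preceding Proposition, and the undefinability in ${\cal R}ing(0,+,*)$ is delivered by Theorem \ref{Lagarias}; the only care needed is the bookkeeping that the excluded values in the preceding Proposition correspond exactly to the list $\{2,3,4,6,8,12,24\}$ in the present statement.
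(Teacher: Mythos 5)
Your proposal is correct and follows essentially the same route as the paper: the paper presents Proposition \ref{enero} as a summary of the immediately preceding discussion, which combines the preceding proposition (existence of a witness $a$ with $(a,d)=1$ and $a^2\not\equiv_d 1$ for $d$ outside $\{2^\alpha 3^\beta\}$), the remark that $Sp(\exists^{a,d}(x=x))=^*\{p:p\equiv_d a\}$, and Theorem \ref{Lagarias} to rule out any equivalent sentence in ${\cal R}ing(0,+,*)$. Your bookkeeping observation that $d=1$ drops out of the exceptional list for vacuity reasons is a correct and worthwhile clarification.
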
 

The problem with $d=2, 3, 4, 6, 8, 12, 24$ is that for each one of these $d$, $\forall a \in \mathbb{Z}_{d}$, either $a$ and $d$ are not relatively prime, or $a^{2}\equiv_{d}1$.
Hence for such integers we can not use the canonical counterexample above to separate ${\cal R}ing(0,+,*)$ from $ {\cal R}ing(0,+,*)+MOD(d)$.
However, we have obtained the desired inexpressibility for these integers 
(except $d=2$), 
through  direct combinatorial arguments.  

 For each one of the integer values of $d$ listed above, the key idea is to define in ${\cal R}ing(0,+,*)+MOD(d)$ a set of the form  $\{p: p\equiv_{nd}c\}$,  
such that $(c,nd) = 1$ and
$c^{2}\not \equiv_{nd}1$, for some integers $n$ and $c <nd$. Then, from 
Theorem \ref{Lagarias} we can  conclude  that this set is
 not expressible in ${\cal R}ing(0,+,*)$.

We are going to need some facts about power residues, and for the necessary
background we refer the reader to \cite[Ch. III, \S 34]{Nagel}.
First, we recall that for an integer $m\not= 0$ and integer $b$ prime to $m$, 
if $n \ge 2$ is a natural number such that $x^n \equiv_m b$ is solvable, then 
one says that $b$ is a {\em $n$-th power residue modulo $m$}.

The following theorem is an immediate consequence of Theorem 71 in \cite{Nagel}.
\begin{theorem}\label{thm:powresid}
Let $n \ge 2$ be a natural number and $p$ an odd prime such that $p\equiv_n 1$. 
Then there are $\frac{p-1}{n}$ $n$-th power residues incongruent modulo $p$
(i.e. the number of non-zero $n$-th powers in $\mathbb{Z}_p$ is $(p-1)/n$). \qed
\end{theorem}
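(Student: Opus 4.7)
The plan is to reduce the counting problem to a statement about the cyclic group $\mathbb{Z}_p^{*}$. Since $p$ is prime, $\mathbb{Z}_p^{*} = \mathbb{Z}_p \setminus \{0\}$ is a cyclic group of order $p-1$, and I would fix once and for all a primitive root $g$ modulo $p$, so that every non-zero residue is uniquely written as $g^k$ with $0 \le k < p-1$.

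Next, consider the group homomorphism $\phi\colon \mathbb{Z}_p^{*} \to \mathbb{Z}_p^{*}$ defined by $\phi(x) = x^n$. By definition, the set of non-zero $n$-th power residues modulo $p$ is precisely the image of $\phi$. Writing $x = g^k$, the condition $x^n \equiv_p 1$ becomes $nk \equiv_{p-1} 0$, so the kernel of $\phi$ has size $\gcd(n, p-1)$. By the first isomorphism theorem, the image has size $(p-1)/\gcd(n,p-1)$.

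The final step uses the hypothesis $p \equiv_n 1$, which says exactly that $n \mid p-1$, and hence $\gcd(n,p-1) = n$. Plugging this into the previous formula yields $|\mathrm{Im}\,\phi| = (p-1)/n$, which is the claim. There is essentially no genuine obstacle here; the only point that requires minor care is distinguishing the non-zero $n$-th powers from the (possible) zero $n$-th power $0^n = 0$, but the statement only concerns incongruent non-zero residues, so the count is exactly $(p-1)/n$ as claimed.
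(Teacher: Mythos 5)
Your proof is correct. The paper does not actually present a proof of this statement --- it simply declares it ``an immediate consequence of Theorem 71'' of Nagell's \emph{Introduction to Number Theory} --- and Nagell's treatment of $n$-th power residues in \S 34 is exactly the primitive-root (index-calculus) argument you reconstruct, i.e.\ counting the image of the $n$-th power homomorphism on the cyclic group $\mathbb{Z}_p^{*}$ via the size $\gcd(n,p-1)=n$ of its kernel, so you have supplied essentially the argument the paper defers to.
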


We have now  a result that allows us to obtain expressibility of 
some sets of the form $\{p\in \mathbb{P}: p\equiv_{nd}r\}$ 
in ${\cal R}ing(0,+,*)+MOD(d)$.

\begin{theorem}\label{ladilla1} For every natural numbers $n,d>1$, for every 
$0\leq r<d$, there exists a sentence $\theta_{n,r}$ in ${\cal R}ing(0,+,*)+MOD(d)$ such that 
\[
Sp\left(\theta\right)=^* \{p \in \mathbb{P}: p \equiv_{nd} rn+1\}.
\]
\end{theorem}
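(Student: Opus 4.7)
The plan is to reduce the arithmetic condition $p \equiv_{nd} rn+1$ to a conjunction of two conditions, each of which is separately expressible in ${\cal R}ing(0,+,*)+MOD(d)$. First I would observe that since $rn+1 \equiv_n 1$, any prime satisfying $p \equiv_{nd} rn+1$ already satisfies $p \equiv_n 1$. Writing $p = 1 + kn$ with $k = (p-1)/n$, one checks directly that for $p \equiv_n 1$ the condition $p \equiv_{nd} rn+1$ is equivalent to $k \equiv_d r$. So it is enough to build a sentence that captures, up to finitely many exceptions, the conjunction of the two conditions $p \equiv_n 1$ and $(p-1)/n \equiv_d r$.

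The first of these is handled for free by Theorem \ref{cyclo-thm}: the sentence $\exists x (F_n(x) = 0)$ of ${\cal R}ing(0,+,*)$ has spectrum equal, up to finitely many primes, to $\{p \in \mathbb{P} : p \equiv_n 1\}$. For the second I would invoke Theorem \ref{thm:powresid}: whenever $p$ is odd and $p \equiv_n 1$, the number of non-zero $n$-th power residues modulo $p$ is exactly $(p-1)/n$. This means that the condition $(p-1)/n \equiv_d r$ can be expressed by a modular counting quantifier with modulus $d$ applied to the predicate defining non-zero $n$-th powers. Concretely I would set
$$
\psi := \exists^{(r,d)} y \bigl( \neg(y = 0) \wedge \exists x (y = x^n) \bigr),
$$
where $x^n$ abbreviates the $n$-fold product $x * x * \cdots * x$ in the ring language, and define $\theta_{n,r} := \exists x (F_n(x) = 0) \wedge \psi$.

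The verification is then short: for all but finitely many primes $p$, $\mathbb{Z}_p \models \theta_{n,r}$ iff $p \equiv_n 1$ and the number of non-zero $n$-th powers in $\mathbb{Z}_p$ is congruent to $r$ modulo $d$; by Theorem \ref{thm:powresid} that count equals $(p-1)/n$, so together these give $p \equiv_{nd} rn+1$, and the converse is immediate. The only mildly delicate point I anticipate is encoding the polynomial $F_n$ with its integer coefficients as a term over the ring vocabulary that only contains the constant $0$: over a finite field $\mathbb{Z}_p$, the multiplicative identity $1$ is first-order definable as the unique non-zero idempotent, after which any integer coefficient $c$ is represented as an iterated sum $1 + 1 + \cdots + 1$; but this encoding is already implicit in the use of $F_n$ in Theorem \ref{cyclo-thm}, so I do not expect it to be a genuine obstacle. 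The heart of the argument is the pairing of cyclotomic-divisor information with the elementary count of $n$-th powers in $\mathbb{Z}_p^*$.
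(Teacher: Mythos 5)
Your proof follows the paper's argument exactly: factor $p \equiv_{nd} rn+1$ into $p \equiv_n 1$ (captured via the cyclotomic polynomial, Theorem \ref{cyclo-thm}) together with $(p-1)/n \equiv_d r$ (captured by modular counting of the $n$-th power residues, Theorem \ref{thm:powresid}), and conjoin the two. If anything your version is slightly more careful: you explicitly exclude $y=0$ from the modular count, which is needed to make the count exactly $(p-1)/n$ rather than $(p-1)/n+1$, whereas the paper's displayed sentence $\exists^{r,d}y\,\exists z(z^n=y)$ omits this guard.
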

\begin{proof} Fix $n,d>1$ and $0\leq r<d$. Using theorems \ref{cyclo-thm} and \ref{thm:powresid}, we have that for almost all primes $p$, 
\begin{eqnarray*}
p \equiv_{nd} rn+1 &\mbox{ iff }& 
  p\equiv_{n}1 \mbox{ and } \frac{p-1}{n}\equiv_{d}r   \\
&\mbox{ iff }&   \mathbb{Z}_{p}\models \exists x \left(F_n(x)=0\right) \wedge\exists^{r,d}y\exists z(z^{n}=y) 
\end{eqnarray*}
where $F_n(x)$ is the $n$-th cyclotomic polynomial. 
\qed
\end{proof}

We  use Theorem \ref{ladilla1} to obtain the desired inexpressibility results 
for $d= 3,$ $4,$ $6,$ $8,$ $12,$ $24$:
\begin{itemize}
\item For $d=3$, use $n=3$ and $r=1$. Then Theorem \ref{ladilla1} guarantees that $\{p\in \mathbb{P}:p\equiv_{9}4\}$ is expressible in 
${\cal R}ing(0,+,*)+MOD(3)$. Furthermore,  $(4,9)=1$ and $4^{2}\not \equiv_{9}1$
(so this set is not expressible in ${\cal R}ing(0,+,*)$).
\item For $d=4$, use $n=4$ and $r=1$. Theorem \ref{ladilla1} guarantees that 
$\{p\in \mathbb{P}:p\equiv_{16}5\}$ is expressible in ${\cal R}ing(0,+,*)+MOD(4)$. 
Furthermore, $(5,16) = 1$ and $5^{2}\not \equiv_{16}1$.
\item For $d=6$, use $n=3$ and $r=4$; for $d=12$, use $n=3$ and $r=2$;  and for $d=24$, use $n=3$ and $r=2$. 
\end{itemize}

This completes the proof for all the integer values of $d>2$. Thus, we can generalize Proposition \ref{enero} to obtain the following separation theorem:

\begin{theorem}
For all integers $d >2$, ${\cal R}ing(0,+,*) \subsetneq {\cal R}ing(0,+,*)+MOD(d)$. \qed
\end{theorem}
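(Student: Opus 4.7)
My plan is to combine Proposition \ref{enero} with case-by-case invocations of Theorem \ref{ladilla1} for the finitely many exceptional moduli. For every $d > 2$ with $d \notin \{3, 4, 6, 8, 12, 24\}$, Proposition \ref{enero} already produces an $a$ with $1 < a < d$, $(a,d) = 1$, and $a^2 \not\equiv_d 1$. The trivial sentence $\exists^{a,d} x(x = x)$ of ${\cal R}ing(0,+,*)+MOD(d)$ then has spectrum (up to finitely many primes) equal to $\{p : p \equiv_d a\}$, which by Theorem \ref{Lagarias} is not a ${\cal R}ing(0,+,*)$-spectrum, so the strict inclusion holds for these $d$.

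For each remaining $d \in \{3, 4, 6, 12, 24\}$, I will invoke Theorem \ref{ladilla1} with the pairs $(n,r)$ already suggested in the excerpt, namely $(3,1)$, $(4,1)$, $(3,4)$, $(3,2)$, and $(3,2)$ respectively. In each case, Theorem \ref{ladilla1} yields a sentence of ${\cal R}ing(0,+,*)+MOD(d)$ with spectrum almost equal to $\{p : p \equiv_{nd} rn+1\}$. A routine modular-arithmetic check will confirm $(rn+1, nd) = 1$ and $(rn+1)^2 \not\equiv_{nd} 1$ (e.g.\ for $d = 3$, $a = 4$, $nd = 9$, $4^2 = 16 \equiv_9 7$; for $d = 12$, $a = 7$, $nd = 36$, $7^2 = 49 \equiv_{36} 13$), so Theorem \ref{Lagarias} again places this set outside ${\cal R}ing(0,+,*)$.

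The value $d = 8$, not covered by the tabulation above, is the one I expect to be the main obstacle, precisely because $(\mathbb{Z}/8\mathbb{Z})^*$ is $2$-torsion: for odd $n$ coprime to $8$, the residues $rn+1$ that are coprime to $8n$ turn out to be exactly those that already satisfy $(rn+1)^2 \equiv_{8n} 1$, so the generic recipe collapses. I will bypass this by choosing $n = 2$ and $r = 1$, giving $rn+1 = 3$ and $nd = 16$ with $3^2 = 9 \not\equiv_{16} 1$; Theorem \ref{ladilla1} then supplies a sentence of ${\cal R}ing(0,+,*)+MOD(8)$ with spectrum almost $\{p : p \equiv_{16} 3\}$, which by Theorem \ref{Lagarias} is not a ${\cal R}ing(0,+,*)$-spectrum. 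Equivalently, I could appeal to the monotonicity $MOD(d') \subseteq MOD(d)$ whenever $d' \mid d$, obtained by rewriting $\exists^{r',d'} x\,\varphi$ as the finite disjunction $\bigvee_{i=0}^{d/d'-1} \exists^{r'+id',d} x\,\varphi$, and transfer the separating sentence for $d = 4$ to $d = 8$ with no further computation. Either route closes the last case and completes the theorem.
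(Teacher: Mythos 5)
Your proof follows the paper's approach exactly: Proposition \ref{enero} settles all $d \neq 2,3,4,6,8,12,24$, and Theorem \ref{ladilla1} handles the remaining small moduli by exhibiting a set $\{p : p \equiv_{nd} rn+1\}$ definable in ${\cal R}ing(0,+,*)+MOD(d)$ that lies outside the Lagarias class characterized in Theorem \ref{Lagarias}. You have, however, caught a genuine omission in the paper's own argument: it announces that the exceptional cases $d = 3, 4, 6, 8, 12, 24$ will be settled via Theorem \ref{ladilla1}, but then tabulates $(n,r)$ only for $d = 3, 4, 6, 12, 24$, silently skipping $d = 8$, and nevertheless declares ``this completes the proof for all $d>2$.'' Your diagnosis of why the paper's $n=3$ pattern cannot be reused at $d=8$ is correct: for odd $n$, every $a = rn+1$ coprime to $8n$ satisfies $a^2 \equiv_{8n} 1$, since writing $a = 2kn+1$ gives $a^2 = 4kn(kn+1)+1 \equiv 1 \pmod 8$ and $a \equiv 1 \pmod n$ forces $a^2 \equiv 1 \pmod n$, hence $a^2\equiv_{8n} 1$ by CRT. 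Both of your repairs are sound: $n=2$, $r=1$ yields $\{p : p \equiv_{16} 3\}$ with $(3,16)=1$ and $3^2 = 9 \not\equiv_{16}1$; and the monotonicity $MOD(d') \subseteq MOD(d)$ for $d'\mid d$, obtained by rewriting $\exists^{r',d'}x\,\varphi$ as $\bigvee_{i<d/d'}\exists^{r'+id',d}x\,\varphi$ (the shifted residues stay in $[0,d)$, so this is well-formed), transfers the $d=4$ separating sentence to $d=8$ with no new computation. A third option, parallel to the paper's $d=4$ entry, is $n=4$, $r=1$, giving $\{p : p\equiv_{32} 5\}$ with $5^2 = 25\not\equiv_{32}1$; this is likely what the authors intended to write. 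In short, your proposal is not merely a correct reconstruction — it fills a hole the published proof actually leaves open.
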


The remaining case, $d=2$, cannot be solved with the above ideas, and its solution is left as an open problem.

\subsection{The spectra of sentences in ${\cal R}ing(0,+,*,<)$ }

We now apply Theorem \ref{Lagarias} to show that the logic 
${\cal R}ing(0,+,*,<)$ 
strictly contains ${\cal R}ing(0,+,*)$. 
As in the previous section, it is enough to  show that  any set of the form 
$\{p \in \mathbb{P} : p\equiv_{d}a\}$  is  the spectrum of 
a sentence in ${\cal R}ing(0,+,*,<)$, although the
argument is more involved.

We review first some necessary modular arithmetic, and then give an example to illustrate how we are going to characterize
 the expression ``$p\equiv_d a$"
by a number theoretic fact involving order  which will   be easier to
define by a sentence of ${\cal R}ing(0,+,*,<)$.

\begin{remark}
Recall that for given (non negative) integers $a$ and $d$, 
if $(a,d) = 1$ then the inverse modulo $d$ of $a$ exists, and is the unique integer $a^{-1} < d$ such that $a^{-1}\cdot a \equiv_d 1$. 
By Euler's Theorem $a^{-1} \equiv_d a^{(\varphi(d)-1)}$, where $\varphi(n)$ is the number of positive integers less than $n$ and coprime  with $n$.
In particular, for any prime $p$, $\varphi(p) = p-1$, and hence, 
 the inverse modulo $p$ of an integer $a$ (which is not a multiple of $p$) is 
 $a^{p-2}$.
\end{remark}

\begin{definition}\label{def-frac}
Given a  prime field $\mathbb{Z}_{p}$, and two positive integers $a$ and $d$, 
with $d$ not a multiple of $p$, 
 the fraction $a/d$ in $\mathbb{Z}_p$ represents the unique integer $0<r<p$ such that $r\cdot d\equiv_{p}a$. In fact, by previous remark,  $r \equiv_p a\cdot d^{-1} \equiv_p a\cdot d^{(p-2)}$
\end{definition}

\begin{example}
Let $p=5$, $a=1$ and $d = 4$. Then $p \equiv_d a$. We are going to show that 
this congruence determines in $\mathbb{Z}_5$ an ordering of the fractions 
$1/4$, $2/4$, and $3/4$, and that this ordering implies the congruence.
Using Euler's Theorem, compute in $\mathbb{Z}_5$ the fractions:
\[\frac{1}{4} \equiv_5 1\cdot 4^{5-2} \equiv_5 4 , \quad
\frac{2}{4} \equiv_5 2\cdot 4^{3} \equiv_5 3 ,
\]
\[ \mbox{and }\quad \frac{3}{4} \equiv_5 3\cdot 4^{3} \equiv_5 2 .
\]
Thus, in $\mathbb{Z}_5$, $3/4 < 2/4 < 1/4$. Note that $(d-a)/d = 3/4$.

Rewrite the fractions $i/4$ in the form $(k5 + i)/4$, $i = 1,2,3$. 
To find the appropriate $k$, use that $5\equiv_4 1$; then, $k \equiv_4 1^{-1}(4 -i)$. Thus, in $\mathbb{Z}_5$, it holds that
 \[
\frac{1}{4} \equiv_5 4 \equiv_5 \frac{3\cdot 5 +1}{4} , \quad
\frac{2}{4} \equiv_5 3 \equiv_5 \frac{2\cdot 5 +2}{4}, 
\]
\[ \mbox{and }\quad \frac{3}{4} \equiv_5 2 \equiv_5 \frac{1\cdot 5 +3}{4} .
\]
In this form, we see that the ordering of the fractions $1/4$, $2/4$, and $3/4$, is in correspondence with the value of the coefficient $k$ in $(k5 + i)/4$ (and this holds because $p > d > a$). 
On the other hand, the smallest of the fractions, namely $3/4$, is equivalent (mod 5)
to $(5+3)/4$ (the coefficient of $5$ is $k = 1$), and this should be an integer. Note that $5+3 = p + (d-a)$, so $p + (d-a) \equiv_d 0$, or equivalently $p \equiv_d a$. 
Thus, we see that  the congruence $5 \equiv_4 1$ is characterized by the 
fact of  $3/4$ being the smallest in the order of the fractions $1/4$, $2/4$, and $3/4$ in $\mathbb{Z}_5$. \qed
\end{example}

Now, let us formalize the intuition  presented in the previous example.
The key tool is the following lemma.

\begin{lemma}\label{lem-key}
Let $p$ be a prime number and $d$ an  integer, such that $0< d < p$. Then:
\begin{itemize}
\item[$(i)$] For every $i = 1, 2, \ldots, d-1$, there exists a unique positive integer $k_i < d$,
such that\quad $\displaystyle \frac{i}{d} \equiv_p \frac{k_ip + i}{d}$.
\item[$(ii)$] The smallest of the
fractions $\displaystyle 1/d, 2/d, \ldots, (d-1)/d$ in $\mathbb{Z}_p$ is $c/d$, 
where $c$ satisfies $0 < c < d$ and $p+c\equiv_d 0$.
\end{itemize}
 
\end{lemma}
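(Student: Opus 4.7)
The plan is to reduce both parts to a single explicit description of $k_i$, and then to compare the resulting integer representatives of $i/d$ in $\mathbb{Z}_p$.

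For part (i), I would start from the requirement that $(k_ip+i)/d$ be an integer, which is equivalent to the congruence $k_ip+i\equiv_d 0$, i.e.\ $k_ip\equiv_d -i$. Since $0<d<p$ and $p$ is prime, $\gcd(p,d)=1$, so $p$ is invertible modulo $d$ and $k_i\equiv_d -i\,p^{-1}$ determines $k_i$ uniquely modulo $d$. The constraint $1\le i\le d-1$ forces $-i\not\equiv_d 0$, hence $k_i\not\equiv_d 0$, so there is a unique representative $k_i\in\{1,2,\ldots,d-1\}$. It remains to check that the integer $(k_ip+i)/d$ actually represents the class $i/d$ in $\mathbb{Z}_p$: this is immediate, since $k_ip+i\equiv_p i$ and both sides, after dividing by $d$, are multiplied by the same unit $d^{-1}$ in $\mathbb{Z}_p$. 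A short size estimate shows $(k_ip+i)/d$ lies in $(0,p)$, so it is the canonical representative.

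For part (ii), the key observation is that the map $i\mapsto k_i$ given by $k_i\equiv_d -i\,p^{-1}$ is a bijection of $\{1,\ldots,d-1\}$ onto itself. Thus exactly one value of $i\in\{1,\ldots,d-1\}$, which I will call $c$, corresponds to $k_c=1$; this $c$ is characterised by $p+c\equiv_d 0$. Comparing two representatives, I would show
\[
\frac{k_ip+i}{d}<\frac{k_jp+j}{d}\ \Longleftrightarrow\ (k_i-k_j)p < j-i,
\]
and since $|j-i|\le d-1<p$, the sign of the left-hand side is controlled entirely by the sign of $k_i-k_j$. Consequently the ordering of the fractions $i/d$ inside $\mathbb{Z}_p=\{0,1,\ldots,p-1\}$ is exactly the ordering of the associated coefficients $k_i$. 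The minimum is therefore attained at $k_i=1$, i.e.\ at $i=c$, giving the smallest fraction $c/d\equiv_p (p+c)/d$, as claimed.

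The only delicate point is the comparison step in part (ii); the rest is bookkeeping. I expect the main (mild) obstacle to be tracking the ranges carefully so that the inequality $|j-i|<p$ indeed dominates any single jump $(k_i-k_j)p$, thereby allowing the coarse ordering by $k_i$ to read off the fine ordering inside $\mathbb{Z}_p$. Everything else is a direct application of the invertibility of $p$ modulo $d$.
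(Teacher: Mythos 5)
Your proposal takes essentially the same approach as the paper: determine $k_i$ from the congruence $k_ip+i\equiv_d 0$ (using $(p,d)=1$), verify that $(k_ip+i)/d$ is the canonical representative of $i/d$ in $\mathbb{Z}_p$, and identify the minimizer as the index $c$ with $k_c=1$. The paper's proof of part (ii) is noticeably terser — it asserts without argument that the smallest of the values $(k_ip+i)/d$ occurs when $k_i=1$ — whereas your comparison $(k_i-k_j)p<j-i$ together with the bound $|j-i|\le d-2<p$ supplies the justification the paper leaves implicit, so your write-up is the more complete of the two.
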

\begin{proof}
$(i)$: Fix $i < d$. By Definition \ref{def-frac}, $i/d$ is an integer $<p$ such that  $(i/d)\cdot d \equiv_p i$. 
Let $k_i < d$ such that $k_ip+i \equiv_d 0$ (always exist since $(d,p) = 1$). 
Then $\displaystyle \left(\frac{k_ip + i}{d}\right)\cdot d \equiv_p i$.

\noindent
$(ii)$: From $(i)$ we know $\displaystyle \frac{i}{d} \equiv_p \frac{k_ip + i}{d}$, for every $i < d$. The smallest of such numbers is when $k_i=1$, which is obtained 
when $p+i \equiv_d 0$. \qed
\end{proof}

\begin{theorem}\label{thm-cong2order}
Fix integers $a$ and $d$, with $0<a<d$.   
For every prime $p$, with $p > d$, we have that
\begin{quote}
$p \equiv_d a$ if,  and only if, in $\mathbb{Z}_p$, $(d-a)/d$ is the smallest fraction of the 
set $\{1/d, 2/d, \ldots, (d-1)/d\}$.
\end{quote}
\end{theorem}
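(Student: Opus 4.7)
The plan is to deduce Theorem \ref{thm-cong2order} as essentially a re-packaging of Lemma \ref{lem-key}$(ii)$. That lemma already identifies the smallest of the fractions $1/d, 2/d, \ldots, (d-1)/d$ in $\mathbb{Z}_p$ as $c/d$, where $c$ is the unique integer with $0 < c < d$ and $p + c \equiv_d 0$. The theorem will then follow by observing that the condition $c = d-a$ is equivalent to the congruence $p \equiv_d a$.

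Concretely, I would handle both directions by invoking Lemma \ref{lem-key}$(ii)$. For the forward implication, assuming $p \equiv_d a$ with $0 < a < d$, I set $c := d - a$, which lies in $\{1, \ldots, d-1\}$ and satisfies $p + c \equiv_d a + (d - a) \equiv_d 0$; the uniqueness of $c$ in the lemma then forces $(d-a)/d$ to be the smallest of the listed fractions. For the reverse implication, if $(d-a)/d$ is the smallest, the same uniqueness yields $p + (d - a) \equiv_d 0$, and a one-line rearrangement gives $p \equiv_d a$.

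The main subtlety is not in the current theorem but in ensuring the well-definedness underpinning Lemma \ref{lem-key}$(ii)$: because $p > d$ and $(p, d) = 1$, the residue class of $-p$ modulo $d$ has a unique representative in $\{1, \ldots, d-1\}$, giving a single candidate for $c$. The hypothesis $p > d$ is also exactly what guarantees that the fractions $i/d$ for $i = 1, \ldots, d-1$ are distinct non-zero residues in $\mathbb{Z}_p$, so that the notion of \emph{smallest} is unambiguous. Once these points are made explicit, the argument reduces to recognizing that the involution $a \mapsto d - a$ on $\{1, \ldots, d-1\}$ converts the congruence $p \equiv_d a$ into the defining congruence $p + c \equiv_d 0$ for the index $c$ of the smallest fraction.
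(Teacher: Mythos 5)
Your proposal is correct and follows essentially the same route as the paper: both directions are deduced from Lemma \ref{lem-key}$(ii)$ by observing that $p \equiv_d a$ is equivalent to $p + (d-a) \equiv_d 0$, which identifies $c = d-a$ as the index of the smallest fraction. The remarks on uniqueness of $c$ and on $p > d$ guaranteeing the fractions are well-defined distinct residues are a useful explicit elaboration of what the paper leaves implicit, but the argument is the same.
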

\begin{proof}
If $p \equiv_d a$ then $\frac{p+ d-a}{d} \equiv_d 0$. On the other hand, $d-a < p$
and so $\frac{p+ d-a}{d} \equiv_p \frac{d-a}{d}$, and by 
Lemma \ref{lem-key} $(ii)$ this is the smallest fraction from 
  the set 
$\{1/d, 2/d, \ldots, (d-1)/d\}$ in $\mathbb{Z}_p$.

Conversely, assume  $(d-a)/d$ is the smallest fraction. 
By Lemma \ref{lem-key} and the hypothesis, 
$\displaystyle \frac{d-a}{d}  \equiv_p \frac{p + (d-a)}{d}$.
Therefore   $ \frac{p + (d-a)}{d}$ is an integer, and hence, 
$p + (d-a) \equiv_d 0$, which implies that $p \equiv_d a$.
\end{proof}

The fraction $a/d$ can be  defined by the following formula of 
${\cal R}ing(0,+,*,<)$:
\begin{eqnarray*}
\exists x\, (x*d=a) .
\end{eqnarray*}
Abusing notation we will denote by 
$(a/d)<(b/d)$
the sentence  
\begin{eqnarray*}
\exists x \exists y ( x*d=a \wedge y*d=b \wedge x<y)
\end{eqnarray*}

\begin{theorem}\label{thm-main}
For every positive integers $a$ and $d$, with $a<d$, there exists a sentence
$\theta$ of ${\cal R}ing(0,+,*,<)$, such that for every prime $p > d$, 
$$p\equiv_d a\ \mbox{ if and only if }\ \mathbb{Z}_p \models \theta$$
In other words, $Sp(\theta) = \{p \in \mathbb{P} : \ p \equiv_d a\}$.
\end{theorem}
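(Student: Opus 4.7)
Given Theorem \ref{thm-cong2order}, the entire argument reduces to syntactic transcription. The plan is to translate the order-theoretic characterization there into an explicit ${\cal R}ing(0,+,*,<)$-sentence, and then correct for the finitely many primes $p\le d$.

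First, I would observe that each fixed natural number $0 \le k \le d$ is definable by a closed term of ${\cal R}ing(0,+,*,<)$: the constant $0$ is in the signature, the multiplicative identity (call it $e$) is definable as the unique $x$ with $\forall y\,(x*y = y)$, and any $k$ then appears as the $k$-fold sum $e+e+\cdots+e$. With this understood, I follow the standard abuse of notation and use the integer parameters $a$, $d$, $d-a$, etc., directly inside formulas.

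Next, define
\[
\theta_0 \;\equiv\; \bigwedge_{\substack{1 \le i \le d-1 \\ i \ne d-a}} \bigl((d-a)/d < i/d\bigr),
\]
where each atomic comparison ``$(d-a)/d < i/d$'' unfolds, via the abbreviation introduced just before the theorem, to $\exists x \exists y\,(x*d = (d-a) \wedge y*d = i \wedge x < y)$. Since $d$ is fixed, $\theta_0$ is a bona fide sentence of ${\cal R}ing(0,+,*,<)$ of size depending only on $d$. For every prime $p > d$ we have $(d,p)=1$, so every fraction $i/d$ is well defined in $\mathbb{Z}_p$ in the sense of Definition \ref{def-frac}, and Theorem \ref{thm-cong2order} yields $\mathbb{Z}_p \models \theta_0 \iff p \equiv_d a$.

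Finally, to obtain the exact equality $Sp(\theta) = \{p \in \mathbb{P} : p \equiv_d a\}$ rather than only agreement up to the finite set of primes $p\le d$, I would adjust by hand on small primes. Each prime $p_0 \le d$ is picked out among the $\mathbb{Z}_p$ by the sentence $\chi_{p_0}$ asserting that the $p_0$-fold sum $e+\cdots+e$ equals $0$ while no strictly shorter such sum does. Taking
\[
\theta \;\equiv\; \Bigl(\theta_0 \wedge \bigwedge_{\substack{p_0 \in \mathbb{P} \\ p_0 \le d}} \neg \chi_{p_0}\Bigr) \;\vee\; \bigvee_{\substack{p_0 \in \mathbb{P},\ p_0 \le d \\ p_0 \equiv_d a}} \chi_{p_0}
\]
then yields the desired spectrum. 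The argument faces no real obstacle: the conceptual content lives entirely in Theorem \ref{thm-cong2order}, and the remainder is routine bookkeeping on a finite set of exceptional primes.
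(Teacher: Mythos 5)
Your proposal is correct and follows essentially the same route as the paper: reduce via Theorem~\ref{thm-cong2order} to the order-theoretic characterization of $p\equiv_d a$, then transcribe it as the conjunction $\bigwedge_{i\ne d-a}\bigl((d-a)/d < i/d\bigr)$ of fraction comparisons. Your additional patching of the sentence on the finitely many primes $p\le d$ (and the explicit definability of the integer constants via iterated sums of $e$) is a small tightening that the paper leaves implicit under its standing $=^{*}$ convention, but conceptually nothing differs.
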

\begin{proof}
By Theorem \ref{thm-cong2order} we just have to write a sentence $\theta$ that expresses the fact that ``$a < d$ and $(d-a)/d$ is the smallest fraction of the 
set $\{1/d, 2/d, \ldots, (d-1)/d\}$". Here it is:
$$(a < d) \wedge \bigwedge_{0 < r < d \atop r \not= d-a} (\ (d-a)/d < r/d\ )
$$
\end{proof}

The previous result, together with Theorem \ref{Lagarias}  establishes the following fundamental difference between the logics ${\cal R}ing(0,+,*)$ and  ${\cal R}ing(0,+,*,<)$:

\begin{theorem}\label{thm:ringNOringorder}
 The logic ${\cal R}ing(0,+,*)$ is  weaker than the logic ${\cal R}ing(0,+,*,<)$. \qed
\end{theorem}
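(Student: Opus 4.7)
My plan is to exhibit a single set of primes that separates the two logics: one that is expressible in ${\cal R}ing(0,+,*,<)$ via Theorem \ref{thm-main}, yet provably not expressible in ${\cal R}ing(0,+,*)$ via Theorem \ref{Lagarias}. The inclusion ${\cal R}ing(0,+,*) \subseteq {\cal R}ing(0,+,*,<)$ is immediate since the ordered logic is by definition an extension of the unordered one, so the entire content of the theorem is the strictness of this inclusion.

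For the strictness, I would pick any pair of positive integers $a < d$ with $(a,d) = 1$ and $a^2 \not\equiv_d 1$; the canonical choice is $d = 5$, $a = 2$, where indeed $\gcd(2,5) = 1$ and $2^2 = 4 \not\equiv_5 1$. By Theorem \ref{thm-main}, there exists a sentence $\theta \in {\cal R}ing(0,+,*,<)$ such that
\[
Sp(\theta) = \{p \in \mathbb{P} : p \equiv_5 2\}.
\]
On the other hand, Theorem \ref{Lagarias} says that the set $\{p \in \mathbb{P} : p \equiv_d a\}$ with $1 < a < d$ arises as the spectrum of some sentence of ${\cal R}ing(0,+,*)$ if and only if $a^2 \equiv_d 1$ or $(a,d) > 1$. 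Neither condition holds for $(a,d) = (2,5)$, so $Sp(\theta)$ is not the spectrum of any sentence of ${\cal R}ing(0,+,*)$, giving strict containment.

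There is essentially no obstacle here; the theorem is a clean corollary of the two preceding results, and the only small care needed is to observe that ``not realizable as a spectrum of any sentence'' is exactly the notion of inexpressibility that separates the logics. One could alternatively avoid citing Theorem \ref{Lagarias} in full generality and instead reuse the short contradiction argument from Corollary \ref{notcong}: intersecting a hypothetical defining sentence with the cyclotomic spectrum $Sp(F_5) =^* \{p \equiv_5 1\}$ would have to produce an empty set on one side and (by Theorem \ref{meet4Sp}) an infinite set on the other. Either route suffices, and I would present the argument via Theorem \ref{Lagarias} since it is the most direct use of the machinery already developed.
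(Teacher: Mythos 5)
Your proof is correct and follows exactly the paper's route: Theorem \ref{thm-main} furnishes an ordered-ring sentence whose spectrum is $\{p \in \mathbb{P} : p \equiv_5 2\}$, while Theorem \ref{Lagarias} shows this set cannot be the spectrum of any sentence of ${\cal R}ing(0,+,*)$ since $\gcd(2,5)=1$ and $2^2\not\equiv_5 1$. The paper itself presents the theorem as a direct corollary of those two results (and even cites the same example $p\equiv_5 2$), so there is nothing to add.
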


Having discerned the expressive power of  ${\cal R}ing(0,+,*)$ with respect to its
extension with modular quantifiers, on the one hand, and order on the other hand, the next step is to explore the expressive power of  ${\cal R}ing(0,+,*,<)$ (the logic of rings with order) with respect to its extensions with modular  and majority  
quantifiers. In this case the manipulations of specific sets of integer congruences will be of no use, since in 
the presence of order or with the generalized quantifiers we can express these sets of congruences beyond those restricted by Theorem \ref{thm:lagar}. 
Hence we need to introduce another tool for separating these logics, and this will be based on the analytical notion of density.

\section{The  density of the prime spectrum of a sentence}\label{sec:6}
A way of discerning infinite sets of primes is to compare 
their relative sizes. For that matter a measure of density of subsets of natural numbers will come in hand. There are various notions of density, 
but in this work we deal only with the {\em natural} and the {\em exponential} densities. (See the survey \cite{Greko05}. However, we note that our definition of density differs  from \cite{Greko05} in that ours are relative to the set 
of all primes, as opposed to all natural numbers.)

 \begin{definition} For a positive real $x$, 
 $\pi(x)=|\{q \in \mathbb{P} : q\le x\}| = |\mathbb{P} \cap [1,x]|$ is the counting function of primes below $x$. 
Similarly, for a given subset $S \subset \mathbb{P}$ define
$\pi_S(x)=|\{q \in S : q\le x\}| = |S \cap [1,x]|$;
 and for a given sentence $\psi$ of the logic of rings,
 ${\cal R}ing(0,+,*,<) + MOD+M$, we define
$\pi_{\psi}(x)= |\{q \in \mathbb{P} : q\le x \wedge 
\mathbb{Z}_q\models \psi\}|.$
\end{definition}

The Prime Number Theorem (PNT) states that for all $x>0$, $\pi(x)$ is asymptotic to 
$x/\log x$; that is
\begin{equation*}
\lim_{x \to +\infty} \frac{\pi(x)\log x}{x} = 1 \qquad \mbox{ (PNT)}
\end{equation*}
This is equivalent to saying that for all $\epsilon >0$, there is $N>0$, such that for all 
$x > N$, $\displaystyle (1-\epsilon) \frac{x}{\log x} < \pi(x) < (1+\epsilon) \frac{x}{\log x}$. In what follows we will fix $\epsilon= 1/2$ and work with the following bounds for $\pi(x)$, which hold for almost all $x$:
\begin{equation}\label{PNTbnd}
\frac{1}{2}\frac{x}{\log x} < \pi(x) < \frac{3}{2}\frac{x}{\log x}
\end{equation}

We work with a family of  densities   
given by the following definition.
\begin{definition}\label{hdensity}
Let $h$ be a real positive, continuous, unbounded and increasing function defined on $(0,+\infty)$.
For a given $S \subset \mathbb{P}$, the lower $h$-density of $S$ is defined by 
\begin{equation*}
\underline{\delta}_h(S) = \liminf_{n\to \infty}\frac{h(\pi_S(n))}{h(\pi(n))}
\end{equation*}
and its upper h-density by
\begin{equation*}
\overline{\delta}_h(S) = \limsup_{n\to \infty}\frac{h(\pi_S(n))}{h(\pi(n))}
\end{equation*}
If these limits are equal, i.e., $\underline{\delta}_h(S) = \overline{\delta}_h(S)$, we say that the set $S$ has $h$-density, and its value is the limit 
$\displaystyle\delta_h(S) = \lim_{n\to \infty}\frac{h(\pi_S(n))}{h(\pi(n))}$.
\end{definition}
The basic properties of  an $h$-density  are the following\footnote{For practical reasons we omit dealing with the empty set; in general, one sets 
$\delta_h(\emptyset) = 0$.}. 
\begin{itemize} 
\item If $S$ is finite then   $\delta_h(S) = 0$.
 \item $\delta_h(\mathbb{P}) = 1$.
\item If $S$ and $T$ are two sets of primes such that 
$S \subseteq T$ and both sets have $h$-density, then $\delta_h(S) \leq \delta_h(T)$ (monotonicity); 
and if $S=^* T$ then
$\delta_h(S) = \delta_h(T)$.
\end{itemize}

Two cases of $h$-densities are of particular interest to us. When $h$ is the identity function we have the {\em natural } density, denoted as $\delta(S)$, when the limit exists, i.e., 
\[ \delta(S) = \lim_{n\to \infty}\frac{\pi_S(n)}{\pi(n)}
\]
The lower and upper natural densities, $\underline{\delta}(S)$ and 
$\overline{\delta}(S)$, are defined accordingly taking $\limsup$ and $\liminf$.
The other case of interest is when $h = \log$, then we have the {\em exponential} density 
denoted $\varepsilon(S)$ when the limit exists, i.e.,
\[ \varepsilon(S) = \lim_{n\to \infty}\frac{\log(\pi_S(n))}{\log(\pi(n))}
\]
and we always  have the  lower and upper exponential  densities,
 $\underline{\varepsilon}(S)$ and 
$\overline{\varepsilon}(S)$, defined by taking  $\limsup$ and $\liminf$.
(The reason for the name {\em exponential}, given in \cite{Greko05} for the unrelativized version of $h$-density, is that the exponential density $\varepsilon$ acts as a magnifying  glass on subsets of the naturals with natural 
density zero.) 

The following observations are useful for making further calculations.

\begin{remark}\label{rem:61}
Recall the notation $f(x) \thicksim g(x)$ means $f(x)$ is asymptotic to 
$g(x)$, this means  that $\lim_{x\to +\infty} \frac{f(x)}{g(x)} = 1$.
Some useful properties of $\thicksim$ are:

1) If $f$, $g$, $h$, $k$ are all real value functions such that $f(x) \thicksim g(x)$
 and $h(x) \thicksim k(x)$ and $\lim_{x \to +\infty} \frac{f(x)}{h(x)}$ 
exists or is $+\infty$, then $\lim_{x \to +\infty} \frac{f(x)}{h(x)} = \lim_{x \to +\infty} \frac{g(x)}{k(x)}$.

2) For any two real value functions $f(x)$ and $g(x)$, with $\lim_{x\to +\infty} g(x) = +\infty$, if $f(x) \thicksim g(x)$ then 
$\log f(x) \thicksim \log g(x)$.
This can be seen from the following equalities: 
\[ \frac{\log\left(\frac{f(x)}{g(x)}\right)}{\log g(x)} = \frac{\log f(x) - \log g(x)}{\log g(x)}
= \frac{\log f(x)}{\log g(x)} - 1 \]
As $x \to +\infty$ the first term of these equalities goes to 0, and hence
$\lim_{x\to +\infty}\frac{\log f(x)}{\log g(x)} = 1$.

3) Using the previous observations and the PNT, we have that for $h$ the identity or the logarithm function it holds that
\begin{equation}\label{density2}
\delta_h(S) = \lim_{n \to \infty} \frac{h(\pi_S(n))}{h(\pi(n))} = 
\lim_{n \to \infty} \frac{h(\pi_S(n))}{h(n/\log n)} 
\end{equation}
\end{remark}

Observe that if a set $S \subset \mathbb{P}$ has natural density, then it has exponential 
density (i.e. $\delta(S) \Rightarrow \varepsilon(S)$). In fact, we have a stronger result:
\begin{theorem}\label{thm:delta2epsilon}
If $\delta(S)$ exists and is not zero then 
$\varepsilon(S) = 1$. 
\end{theorem}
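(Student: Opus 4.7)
The plan is to translate the hypothesis $\delta(S) = c$ for some $c > 0$ directly into an asymptotic statement for $\pi_S$ and $\pi$, and then take logarithms carefully using the observations collected in Remark \ref{rem:61}.

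First I would note that by Definition \ref{hdensity} with $h$ the identity function, the hypothesis gives $\pi_S(n)/\pi(n) \to c$, i.e. $\pi_S(n) \sim c\,\pi(n)$. Since $c > 0$ and $\pi(n) \to +\infty$ by the PNT, the function $c\,\pi(n)$ tends to $+\infty$, so part 2 of Remark \ref{rem:61} applies and yields
\[
\log \pi_S(n) \thicksim \log\bigl(c\,\pi(n)\bigr) = \log c + \log \pi(n).
\]

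Next, I would compute the quotient that defines $\varepsilon(S)$ after substituting this asymptotic equivalent. Using part 1 of Remark \ref{rem:61} (with $f(x)=\log \pi_S(x)$, $g(x)=\log c + \log \pi(x)$, $h(x)=k(x)=\log \pi(x)$), the limit defining $\varepsilon(S)$ exists if and only if
\[
\lim_{n\to\infty}\frac{\log c + \log \pi(n)}{\log \pi(n)} = \lim_{n\to\infty}\left(1 + \frac{\log c}{\log \pi(n)}\right)
\]
exists, and the two limits are equal. Since $\pi(n) \to +\infty$, $\log \pi(n) \to +\infty$, and as $\log c$ is a fixed constant the quotient $\log c/\log \pi(n) \to 0$. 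Hence the limit equals $1$, which gives $\varepsilon(S) = 1$ as desired.

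There is no real obstacle here; the only thing to watch is the sign of $\log c$ when $c < 1$, but this is irrelevant because $\log c$ is constant while the denominator diverges. The proof is essentially a clean application of Remark \ref{rem:61}, showing that the natural density governs the leading multiplicative constant in $\pi_S(n)$, and that any nonzero multiplicative constant disappears after passing through the logarithm and normalizing by $\log \pi(n)$.
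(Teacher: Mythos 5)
Your proof is correct and takes essentially the same route as the paper: write $\pi_S(n)$ as a nonzero constant times a quantity tending to infinity, take logarithms using part 2 of Remark \ref{rem:61}, and observe that the additive constant $\log c$ vanishes when divided by $\log\pi(n) \to \infty$. The only cosmetic difference is that you work directly with $\pi_S(n) \thicksim c\,\pi(n)$, while the paper first substitutes $\pi(n) \thicksim n/\log n$ via the PNT before taking logs; your version is marginally leaner but the underlying idea is identical.
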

\begin{proof}
Let $\alpha = \delta(S) = \lim_{n\to \infty}\frac{\pi_S(n)}{\pi(n)}$. Then, by the PNT, 
$\pi_S(n) \thicksim \alpha\frac{n}{\log n}$ and 
\begin{equation*}
\lim_{n\to \infty}\frac{\log(\pi_S(n))}{\log(\pi(n))} = \lim_{n\to \infty} \frac{\log \alpha + \log\frac{n}{\log n}}{\log\frac{n}{\log n}} = 1
\end{equation*}
using Remark (\ref{rem:61}).
\end{proof}


The Chebotarev's Density Theorem (cf. \cite[\S 5]{wyman}) implies that
every element of the Boolean algebra $\cal B$ 
has rational  natural density, and it is $0$ if and only if the set is finite. 
This together with Ax's result (Theorem \ref{thmax1}) gives: 
\begin{theorem}\label{thm:D4ring}
The spectrum of any sentence in ${\cal R}ing(0,+,*)$ has rational 
natural density, and this density is $0$ if and only if the spectrum is finite. \qed
\end{theorem}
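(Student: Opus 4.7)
The approach is to combine the reduction from Ax's theorem (Theorem \ref{thmax1}) with the Chebotarev Density Theorem, reducing the question to counting elements in a finite Galois group. By Theorem \ref{thmax1}, for any sentence $\sigma$ of ${\cal R}ing(0,+,*)$, the spectrum $Sp(\sigma)$ agrees, up to a finite set of exceptional primes, with a Boolean combination $B$ of sets $Sp(f_1), \ldots, Sp(f_k)$ for polynomials $f_i \in \mathbb{Z}[t]$. Since the natural density is invariant under finite modifications, it suffices to prove the theorem for $B$ in place of $Sp(\sigma)$.

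First I would normalize the setup to a single Galois extension. Factor each $f_i$ into $\mathbb{Z}[t]$-irreducibles, and let $K/\mathbb{Q}$ be the compositum of the splitting fields of all the irreducible factors, with Galois group $G = \text{Gal}(K/\mathbb{Q})$. For every prime $p$ unramified in $K$, the Frobenius conjugacy class $\text{Frob}_p \subseteq G$ is well defined. Writing $R_i \subseteq G$ for the set of elements of $G$ that fix at least one root of $f_i$ inside $K$, each $R_i$ is a union of conjugacy classes, and a standard fact on the splitting of primes says that, for all unramified $p$, one has $p \in Sp(f_i)$ if and only if $\text{Frob}_p \subseteq R_i$.

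Next I would translate the Boolean combination. Since membership in each $Sp(f_i)$ is controlled on unramified primes by $\text{Frob}_p \in R_i$, Boolean operations on the spectra correspond to the same Boolean operations on the $R_i$. So there is a union of conjugacy classes $R \subseteq G$, obtained by the same Boolean recipe, such that for all unramified $p$, $p \in B$ iff $\text{Frob}_p \subseteq R$. Chebotarev's theorem then yields
\begin{equation*}
\delta(B) = \frac{|R|}{|G|},
\end{equation*}
which is a rational number in $[0,1]$. Only finitely many primes ramify in $K$, so this finite discrepancy does not affect the density value.

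For the ``$0$ iff finite'' clause, note that if $R = \emptyset$ then no unramified prime belongs to $B$, so $B$ is contained in the finite set of ramified primes and hence is finite; conversely, if $B$ is finite then $\delta(B) = 0$ trivially. The main obstacle I foresee is bookkeeping: one must ensure that a single Galois extension $K$ simultaneously handles all the $f_i$, that the set of ramified primes in $K$ is finite, and that the equivalences $p \in Sp(f_i) \iff \text{Frob}_p \subseteq R_i$ together with their Boolean translation all hold outside a common finite set of primes. Once this uniform Galois framework is in place, the proof reduces to the observation that the admissible densities are exactly the rationals $|R|/|G|$ indexed by unions of conjugacy classes $R$ of the finite group $G$, and that $R = \emptyset$ is precisely the finite-spectrum case.
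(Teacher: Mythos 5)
Your proposal is correct and takes essentially the same route as the paper: reduce via Ax's theorem (Theorem \ref{thmax1}) to the Boolean algebra generated by the sets $Sp(f)$, then apply Chebotarev to conclude the density is rational and vanishes only for finite sets. The paper states this in one sentence citing Chebotarev, whereas you carry out the reduction explicitly (compositum of splitting fields, Frobenius conjugacy classes, translation of Boolean operations on spectra into Boolean operations on conjugation-invariant subsets of the Galois group); this is exactly the implementation the paper leaves implicit.
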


We are going to prove that this is not the case for the spectra of sentences in ${\cal R}ing(0,+,*,<)$. We will see that  there exist sentences 
$\sigma\in {\cal R}ing(0,+,*,<)$ such that the natural density of $Sp(\sigma)$ is zero, but the cardinality of $Sp(\sigma)$ is 
infinite. 
This give us another way of showing that ${\cal R}ing(0,+,*)$ is properly
contained in ${\cal R}ing(0,+,*,<)$.

We begin by recalling 
an outstanding result by Friedlander and Iwaniec in 
\cite{FI97} (and further extended in \cite{FI98}) which shows that the polynomial $f(x,y) = x^2 + y^4$ has infinitely many prime values, but the sequence of its values is ``thin" in the sense that it contains fewer than $t^{\theta}$ integers up to $t$ for some $\theta < 1$. 
More specifically,
\begin{theorem}[\cite{FI97,FI98}]\label{thm-friedIwa}
There are infinitely many primes $p$ of the form $p = a^2 + b^4$, for integers $a$ and $b$, and the number of these primes $p < t$ is 
  $O(t^{3/4})$. \qed
  \end{theorem}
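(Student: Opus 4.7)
The statement naturally splits into two assertions: an upper bound on the counting function, and the infinitude of such primes. The two parts require very different amounts of effort, and I would treat them separately.

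First I would dispose of the upper bound $O(t^{3/4})$, which is purely combinatorial. If $p=a^2+b^4\le t$ with $a,b$ nonnegative integers, then $b\le t^{1/4}$ and $a\le t^{1/2}$, so the total number of representing pairs is at most $t^{1/2}\cdot t^{1/4}=t^{3/4}$. Since every prime of the form $a^2+b^4$ below $t$ is represented by at least one such pair,
\[
\#\{p\in\mathbb{P}\,:\,p\le t,\ p=a^2+b^4\}\ \le\ t^{3/4},
\]
and no analytic input is needed for this half.

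The difficult assertion is the infinitude of such primes; here I would follow the strategy of Friedlander--Iwaniec. Let $\mathcal{A}=(a_n)_{n\le t}$ be the characteristic sequence of integers of the form $a^2+b^4$, and apply Bombieri's asymptotic sieve to $\mathcal{A}$ with the sifting set $\mathbb{P}$. The sieve yields the desired lower bound provided one can verify two kinds of arithmetic inputs: Type~I bounds, which quantify the distribution of $\mathcal{A}$ in arithmetic progressions to large moduli, and Type~II (bilinear) bounds, which encode the absence of a multiplicative structure in $\mathcal{A}$. The Type~I bounds are accessible by fairly classical lattice-point counts in ideals of $\mathbb{Z}[i]$, using the identity $a^2+b^4=(a+ib^2)(a-ib^2)$ to realize elements of $\mathcal{A}$ as norms in $\mathbb{Z}[i]$ of elements of the special shape $a+ib^2$.

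The main obstacle, and the deepest ingredient of Friedlander--Iwaniec, is the Type~II estimate: one must bound bilinear forms of the shape $\sum_{m,n}\alpha_m\beta_n\,\mathbf{1}_{\mathcal{A}}(mn)$ non-trivially for essentially arbitrary coefficients $\alpha_m,\beta_n$. Via the Gaussian-integer factorization above, this reduces to bilinear sums over pairs of ideals $\mathfrak{m},\mathfrak{n}$ in $\mathbb{Z}[i]$ constrained so that $\mathfrak{m}\mathfrak{n}$ has a generator whose imaginary part is a perfect square, and it is precisely this quadratic constraint that makes the problem fall outside of standard sieve territory. Handling it requires harmonic analysis on $\mathbb{Z}[i]$, Jacobi-type identities to detect the square condition, and non-trivial cancellation in short character sums. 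With these inputs the asymptotic sieve delivers a prime count $\thicksim \kappa\,t^{3/4}(\log t)^{-1}$ for an explicit constant $\kappa>0$; in particular there are infinitely many such primes, and the resulting asymptotic matches the trivial upper bound above up to a logarithmic factor, which is all the present paper requires.
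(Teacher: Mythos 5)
The paper does not prove this theorem at all: it is stated with citations to Friedlander--Iwaniec \cite{FI97,FI98} and the $\Box$ only marks the end of the statement. So there is no ``paper's own proof'' to compare against, and the theorem is rightly treated as a black box: it is one of the deepest results in analytic number theory, and the two source papers together run to well over a hundred pages.

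Your handling of the two halves is appropriate. The upper bound argument is correct, complete, and elementary: a prime $p\le t$ of the form $a^2+b^4$ forces $|a|\le t^{1/2}$ and $|b|\le t^{1/4}$, so there are $O(t^{3/4})$ representing pairs and hence $O(t^{3/4})$ such primes. This is genuinely all that is needed for that half, and it is a good observation that the paper's $O(t^{3/4})$ claim is trivial in the upper-bound direction (the hard content of Friedlander--Iwaniec is the matching \emph{lower} bound, i.e.\ the asymptotic $\sim\kappa\,t^{3/4}/\log t$). Your outline of the infinitude proof is a fair high-level summary of the Friedlander--Iwaniec strategy: passage to $\mathbb{Z}[i]$ via $a^2+b^4=\mathrm{N}(a+ib^2)$, Type~I estimates from lattice-point counts, and the hard Type~II bilinear estimates detecting the square constraint on the imaginary part, fed into an asymptotic sieve. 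One small correction of attribution: the sieve used is the Friedlander--Iwaniec ``asymptotic sieve for primes'' (developed in a companion Annals paper), which builds on and substantially extends Bombieri's asymptotic sieve but is not the same device, so ``Bombieri's asymptotic sieve'' undersells what is new. More importantly, be clear that what you have written for the infinitude half is a roadmap, not a proof: the entire difficulty of the theorem resides in the bilinear form estimate you identify in one sentence as ``the main obstacle,'' and no short argument is known to substitute for it. For the purposes of the paper at hand, citing \cite{FI97,FI98} as the authors do is the right move.
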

Using this result and Eq. (\ref{density2}), the natural density of the 
set of primes 
$$FI := \{p \in \mathbb{P} : p = a^2 + b^4,\ a,b \in \mathbb{Z} \}$$
 is 
$\displaystyle \lim_{t \to \infty} \frac{\log t}{t^{1/4}} = 0$.
By Theorem \ref{thm:D4ring}, such set $FI$ cannot be the spectrum of a sentence in 
 ${\cal R}ing(0,+,*)$. 
It remains to show that the set $FI$ is definable in ${\cal R}ing(0,+,*,<)$.
We will in fact show a stronger result. 
\begin{theorem}
Consider a polynomial in $\mathbb{Z}[x,y]$ of the form $f(x,y)=h(x)+g(y)$, with $n$ the degree of $h$  and $d$ the degree of $g$. 
Assume that $n,d \geq1$ and that the leading coefficients of $h(x)$ and $g(x)$ are positive. Then 
there is a sentence $\theta$ in ${\cal R}ing(0,+,*,<)$, such that 
for almost every $m$,   $\mathbb{Z}_m \models \theta$ if and only if in $\mathbb{Z}$  the following property holds: 
 \begin{center}
 ``There exists naturals $b,c<m$ such that $f(b+1,c)=m$''.
 \end{center}
\end{theorem}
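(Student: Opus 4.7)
The plan is to build $\theta$ inside the equivalent logic $\mbox{FO}(\le,\oplus,\otimes)$, appealing to Theorem~\ref{ring=fo:order}; the truncated ternary predicates $\oplus$ and $\otimes$ available there make overflow above $m$ explicitly detectable, which is exactly what is needed to pin down the \emph{integer} equality $h(b+1)+g(c)=m$ rather than the much weaker ring equality $h(b+1)+g(c)\equiv 0\pmod m$.

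First, I derive a priori size bounds from the hypothesis on leading coefficients. Write $h(x)=ax^n+\cdots$ and $g(y)=\beta y^d+\cdots$ with $a,\beta>0$; then there is a constant $B$ (depending only on the coefficients of $h$ and $g$, not on $m$) such that for $x\ge B$ one has $h(x)\ge\tfrac{a}{2}x^n\ge 0$ and $h$ is strictly increasing on $[B,\infty)$, and analogously for $g$. Consequently, for all sufficiently large $m$, any solution $(b+1,c)$ with $b+1,c\ge B$ satisfies $(b+1)^n\le 2m/a$ and $c^d\le 2m/\beta$, so in particular $b+1,c<m$ and both $h(b+1),g(c)$ lie in $[0,m-1]$. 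The finitely many pairs with $b+1<B$ or $c<B$ can each produce a solution only for finitely many values of $m$, and since the statement tolerates finitely many exceptional $m$, I may safely restrict to the generic case $b+1,c\ge B$.

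For the generic case, the core step is to express in $\mbox{FO}(\le,\oplus,\otimes)$ that an element $u\in\{0,\ldots,m-1\}$ equals the integer value of a fixed polynomial at $x$, provided that value fits in the universe. Decompose $h=h^+-h^-$ and $g=g^+-g^-$ into their positive-coefficient parts; for any such polynomial $p$ of degree $k$ with non-negative coefficients, a formula $\mathrm{Val}_p(x,u)$ asserting ``$u=p(x)$ in $\mathbb{Z}$ with all intermediate values below $m$'' is built by existentially quantifying witnesses $w_1,\ldots,w_k$ for the partial powers (enforced by $w_1=x$ and $\otimes(w_{i-1},x,w_i)$ for $i\ge 2$), witnessed coefficient scalings through $\otimes$, and running sums through $\oplus$. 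By construction $\mathrm{Val}_p(x,u)$ is satisfiable only when $p(x)<m$ in $\mathbb{Z}$, and then $u$ equals the integer $p(x)$. Applying this to $h^\pm,g^\pm$ yields witnesses $U_\pm,V_\pm\in[0,m-1]$, and the integer equation $h(b+1)+g(c)=m$ becomes $U_++V_+=m+U_-+V_-$: each side is below $2m$, so ``surplus of exactly $m$'' is enforceable by a bounded case split on whether $U_++V_+$ and $U_-+V_-$ each overflow $\oplus$, together with a witnessed equality of the resulting residues. The sentence $\theta$ is then the existential closure over $b,c$ of the conjunction of $b+1\ge B$, $c\ge B$, the four $\mathrm{Val}$ clauses, and the surplus-$m$ condition; Theorem~\ref{ring=fo:order} converts it into ${\cal R}ing(0,+,*,<)$. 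The main obstacle is precisely this surplus bookkeeping --- distinguishing a genuine solution $h(b+1)+g(c)=m$ from spurious ring zeros with $\mathbb{Z}$-value $0$ or $\ge 2m$ --- and the positive-leading-coefficient hypothesis is exactly the lever that forces each of $U_\pm,V_\pm$ to lie below $m$, so the surplus cannot exceed $m$ and a constant-size case split suffices.
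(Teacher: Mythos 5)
Your approach is genuinely different from the paper's: you work in $\mbox{FO}(\le,\oplus,\otimes)$ and try to recover the \emph{exact integer value} of $f(b+1,c)$ from truncated arithmetic, whereas the paper works natively with the ring operations (arithmetic mod~$m$) and uses the built-in order to detect the crossing of a multiple of $m$. The paper's key lemma (proved by a Taylor-expansion estimate valid once $b>M$) is that if $h(b)+g(c)<m\le h(b+1)+g(c)$ in $\mathbb{Z}$, then $h(b+1)+g(c)-m<h(b)+g(c)<m$; i.e.\ the mod-$m$ value of the polynomial \emph{drops} exactly at the step where the integer value first crosses $m$, and by at most one multiple of $m$. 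Hence the sentence need only say ``there are $b,c>M$ with $f(b+1,c)=0$ in $\mathbb{Z}_m$, $b+1$ is the first index after $M$ where $f$ decreases mod $m$, and before that $f$ is increasing mod $m$.'' This sidesteps all bookkeeping about intermediate magnitudes.

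Your version has a concrete gap at its central claim that ``the positive-leading-coefficient hypothesis \ldots forces each of $U_\pm,V_\pm$ to lie below $m$.'' This is false. Write $h=h^+-h^-$; $h^-$ has degree $\le n-1$, so indeed $h^-(b+1)=o(m)$, but then $h^+(b+1)=h(b+1)+h^-(b+1)$, which the inequality $h(b+1)<m$ bounds only by roughly $2m$, not $m$. Concretely, for $h(x)=x^n-Kx^{n-1}$ with $K$ large, a solution with $h(b+1)\approx m$ has $(b+1)^n\approx 2m$, so $h^+(b+1)=(b+1)^n\approx 2m>m$; the same already happens at the level of the top partial power $(b+1)^n$ whenever the leading coefficient is~$1$, since the PNT-style bound you derive only gives $(b+1)^n\le 2m/a$. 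Consequently $\mathrm{Val}_{h^+}(b+1,U_+)$ is unsatisfiable precisely on genuine witnesses, and your $\theta$ would reject infinitely many $m$ it should accept. Repairing this would require allowing one wrap-around inside each $\mathrm{Val}$ computation (a carry bit at each intermediate step), which is possible in principle as a constant-size case split but is not what you wrote, and it is substantially messier than the paper's drop-detection trick. A smaller issue: the dismissal of solutions with $b+1<B$ or $c<B$ as affecting ``only finitely many $m$'' is not justified as stated (there are infinitely many such pairs, e.g.\ fixing $b+1$ and letting $c$ grow), though one can patch it with a finite disjunction over the small values of one coordinate; the paper makes a similar restriction and is similarly terse about it.
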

\begin{proof}
The idea is that $h$ and $g$ will  be increasing from some threshold $M$ and onward, since their leading coefficients are positive. Then, for all $m > M$, 
such $b$ and $c$ will be characterized by a sentence in ${\cal R}ing(0,+,*,<)$ that says ``$f(b+1,c) = 0$ 
and $b+1 < m$ is the first element greater than $M$ such that $f(b+1,c) < f(b,c)$".

Note first that if $h(x)$ is a polynomial of degree $n\geq 1$ with positive leading coefficient, then for every $1\leq i\leq n$, the $i$th derivative of 
$h(x)$, $h^{(i)}(x)$, is a polynomial of degree $n-i$ with positive leading coefficient, hence eventually increasing. 
Hence there exists a natural $M$ such that $h$, $g$, and
 $h^{(1)},h^{(2)},\ldots, h^{(n)}$ are increasing and positive in the interval $[M,+\infty)$ and   for every $x>M$, 
\begin{equation}\label{eqh}
h(x)>h^{(1)}(x)+\frac{h^{(2)}(x)}{2!}+\ldots + \frac{ h^{(n)}(x)}{n!}
\end{equation}

Fix  a natural number $m>M+1$. We make the following claim.
\begin{quote}
{\it Claim: For every pair of integers $b,c \in (M,m-1)$, \\
if  $\mathbb{Z}\models h(b)+g(c)<m\leq h(b+1)+g(c)$ then in $\mathbb{Z}$: }
$$
h(b+1)+g(c)-m<h(b)+g(c)<m.
$$
\end{quote}
The proof of this claim is as follows.
By the Taylor polynomial expansion, 
$$
h(b+1)=h(b)+h^{(1)}(b)+\frac{h^{(2)}(b)}{2!}+\ldots +\frac{ h^{(n)}(b)}{n!} .
$$
Using that $h(b) < m$ and $b > M$, it follows  from Eq. (\ref{eqh})  that  
\begin{eqnarray*}
h(b+1)+ g(c)-m &\leq & h^{(1)}(b)+\frac{h^{(2)}(b)}{2!}+\ldots+
\frac{ h^{(n)}(b)}{n!} + g(c) \\
&<& h(b)+g(c)<m
\end{eqnarray*}
which is the the desired result. 

Now it follows that   there exists integers $b,c$, $M<b,c<m-1$ such that 
\begin{equation}\label{eqX}
\mathbb{Z}\models h(b)+g(c)<m\leq h(b+1)+g(c)
\end{equation} 
if and only if   we have:
\begin{itemize}
 \item $g(c)<m$, 
 \item for every $a$ with $M<a<b$, $h(a)+g(c)<h(a+1)+g(c)<m$, 
  \item $m\leq h(b+1)+g(c)<2m $,  and
\item $h(b+1)+g(c)-m<h(b)+g(c)<m$.
\end{itemize}
It follows that (\ref{eqX}) can be expressed by the following formula in 
${\cal R}ing(0,+,*,<)$:
\begin{eqnarray*}
\psi(b,c)&:=& M<b,c<m-1\wedge h(b+1)+g(c)<h(b)+g(c)\wedge \\
& & \forall a\left((M<a<b)\Rightarrow h(a)+g(c)<h(a+1)+g(c)\right)
\end{eqnarray*}
which says that $b+1<m$ is the first element bigger than $M$ for which 
$$f(b+1,c) = h(b+1)+g(c)< h(b)+g(c) = f(b,c)$$ in $\mathbb{Z}_{m}$. 
Putting together the previous observations we obtain that
$$
\mbox{ for every $m>M$, }\mathbb{Z}_{m}\models \exists b,c (\psi(b,c)\wedge f(b+1,c)=0)
$$
if and only if
$$
\mathbb{Z}\models \exists b,c(M< b,c<m-1\wedge f(b+1,c)=m).
$$
This completes the proof of the theorem.
\qed
\end{proof}

\begin{corollary}
${\cal R}ing(0,+,*)$ is properly contained in ${\cal R}ing(0,+,*,<)$. \qed
\end{corollary}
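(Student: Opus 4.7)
The plan is to read this corollary as a quick application of the theorem just proved, combined with the natural-density lower bound for spectra of $\mathcal{R}ing(0,+,*)$-sentences (Theorem~\ref{thm:D4ring}) and the Friedlander--Iwaniec estimate (Theorem~\ref{thm-friedIwa}). First, the inclusion $\mathcal{R}ing(0,+,*) \subseteq \mathcal{R}ing(0,+,*,<)$ is immediate, since the syntax of the larger logic contains all symbols of the smaller. So the task is to exhibit a single sentence in $\mathcal{R}ing(0,+,*,<)$ that has no equivalent in $\mathcal{R}ing(0,+,*)$.

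My strategy is to apply the preceding theorem to the polynomial $f(x,y)=x^{2}+y^{4}$, i.e.\ $h(x)=x^{2}$ and $g(y)=y^{4}$, both of which have positive leading coefficient and degree at least one. This yields a sentence $\theta \in \mathcal{R}ing(0,+,*,<)$ such that, for almost all $m$, $\mathbb{Z}_{m}\models\theta$ iff there exist naturals $b,c<m$ with $(b+1)^{2}+c^{4}=m$. Restricting to primes $p$ (and noting that if either summand is $0$ then the sum is a proper power, hence not prime), this forces $Sp(\theta)$ to coincide, up to finitely many exceptions, with the Friedlander--Iwaniec set
\begin{equation*}
FI = \{\,p\in\mathbb{P} : p = a^{2}+b^{4},\ a,b\in\mathbb{Z}\,\}.
\end{equation*}
By Theorem~\ref{thm-friedIwa} the set $FI$ is infinite but satisfies $\pi_{FI}(t)=O(t^{3/4})$, so combining with the PNT bound \eqref{PNTbnd} we get $\pi_{FI}(n)/\pi(n) = O((\log n)/n^{1/4}) \to 0$; hence $FI$ has natural density zero, while being infinite.

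Now invoke Theorem~\ref{thm:D4ring}: every spectrum of a sentence in $\mathcal{R}ing(0,+,*)$ has rational natural density, and that density vanishes only for finite spectra. Since $Sp(\theta)=^{*}FI$ is infinite yet has natural density $0$, no sentence of $\mathcal{R}ing(0,+,*)$ can define it, which gives the proper containment. I do not foresee a real obstacle: the heavy lifting (the expressibility result for $h(x)+g(y)$ and the Friedlander--Iwaniec estimate) has already been invoked, so the only care required is the routine bookkeeping to check that the exceptional set of small primes excluded from ``$=^{*}$'' does not affect the density argument, and that restricting $b,c<m$ in the theorem is harmless for $m=p$ prime (since $a\le\sqrt{p}$ and $b\le p^{1/4}$ are trivially less than $p$).
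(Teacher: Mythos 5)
Your proof is correct and matches the paper's own argument exactly: instantiate the preceding expressibility theorem with $h(x)=x^{2}$, $g(y)=y^{4}$ to realize the Friedlander--Iwaniec set $FI$ (up to finitely many exceptions) as a spectrum in $\mathcal{R}ing(0,+,*,<)$, observe that $FI$ is infinite with natural density zero, and invoke Theorem~\ref{thm:D4ring} to conclude no $\mathcal{R}ing(0,+,*)$-sentence has such a spectrum. The bookkeeping you flag (passing between $a=b+1$ with $b,c<p$ and the unrestricted $FI$, and that $=^{*}$ is harmless for density) is exactly the routine check the paper leaves implicit, and you handle it correctly.
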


\section{A sufficient condition for the existence of sets without {$h$}-density}\label{sec:7}

Our overall goal is to classify spectra in terms of $h$-density; hence,  a first 
step is to elaborate some tools to determine when a set has no $h$-density.
We are going to canonically construct sets of primes without $h$-density from sequences of integers $\{s_{n}\}$. The idea is that certain combinatorial properties of these sequences will guaranteed that the associated set of primes do not have $h$-density.
The necessary combinatorial properties are defined around the notion of thinness of a sequence of numbers, 
which has been mentioned in the previous section in the intuitive form as employed by  
Friedlander and Iwaniec in their work \cite{FI97}. 

\begin{definition}\label{hthin}
Let $h$ be a real positive, continuous, unbounded and increasing function defined on $(0,+\infty)$.
An increasing  sequence of numbers $\{s_n\}_{n >0}$ is {\bf $h$-thin\/} if  
there exists some real $r > 3$ such that for almost all $n$, 
\[ r h(\pi(s_n)) < h(\pi(s_{n+1})) 
\]
When  $h$ is the identity function  we call the associated sequence thin.
\end{definition}
Essentially a sequence of numbers  is ``$h$-thin'' if the distance between the number of primes below consecutive elements  in the sequence, 
filtered through $h$, increases exponentially.

For the measures of density that we consider here, we focus on functions that are {\em eventually semi-additive}  in the following sense.
\begin{definition}\label{semiadd}
A function $h: {\cal D} \subseteq \mathbb{R} \to \mathbb{R}$ is {\bf eventually semi-additive\/}, if there exists $M >0$ such that for $x\ge y > M$, 
$h(x+y) \le h(x) + h(y)$.
\end{definition}

\begin{example}
The identity function is trivially eventually semi-additive, as well as any additive function. 
The function $h(x) = \log x$ is eventually semi-additive, since for all $x\ge y>2$ we have that $x + y \le xy$, and hence
\[\log(x+y) \le log(xy) = \log x + \log y\]
A similar argument applies to $\log\log x$, taking $M = e^2$.Hence this and other iterations of the logarithm function are  eventually semi-additive. \qed
\end{example}

\begin{remark}\label{conditionsemiadd}
We have the following properties for semi-additive functions $h$ defined on $(0,+\infty)$ that are real positive, continuous, unbounded and increasing.

\begin{itemize}
\item[$(i)$]  For all $x$ and $y$ such that 
$x > 2y>2M$, where $M$ is the bound in Definition \ref{semiadd}, we have $h(x-y) \ge h(x) - h(y)$.

Indeed, observe that $x-y > y>M$ and by semi-additive 
$h(x) = h(x-y+y) \le h(x-y) + h(y)$.

 \item[$(ii)$] If a sequence $\{s_{n}\}$ is $h$-thin then for every $\beta$, $2<\beta<3$,   for almost all natural numbers $n$ we have that:
 \[
\beta\pi\left(s_{n}\right)<\pi\left(s_{n+1}\right) .
\]
To see this, observe that by semi-additivity and the fact that $h$ is increasing we have that , for every $x>M$, $h(3x)\leq 3h(x)$, so for a  $\beta$, $2<\beta<3$, and for almost all natural numbers $n$, 
\[
h\left(\beta\pi\left(s_{n}\right)\right)\leq h\left(3\pi\left(s_{n}\right)\right)\leq 3h\left(\pi\left(s_{n}\right)\right)\leq h\left(\pi\left(s_{n+1}\right)\right).
\]
It follows that 
\[
\beta\pi\left(s_{n}\right)\leq \pi\left(s_{n+1}\right).
\]

\item[$(iii)$] If a sequence $\{s_{n}\}$ is $h$-thin then for almost all natural numbers $m<n$ we have that:
 \[
 h(\pi(s_{n})-\pi(s_m))\ge h(\pi(s_{n}))-h(\pi(s_m))
\]

To see this, note that if a sequence $\{s_{n}\}$ is $h$-thin, then for every $M$ there exists a bound $B_{M}$ such that if  $n>m>B_{M}$ we have by $(ii)$
 that $\pi(s_{n})>2\pi(s_{m}) >2M$.
Then apply $(i)$ above to get the desired result.
 \end{itemize}
 \end{remark}
 
 For every sequence of natural numbers $\{s_{n}\}$ we are going to construct an associated set of primes ${\cal H}(\{s_n\})$, which will be instrumental in showing 
 counterexamples to $h$-density for various classes of spectra.

\begin{definition}\label{alternating}
Fix an increasing sequence of natural numbers $\{s_{n}\}$. The {\bf alternating} set of primes associated with $\{s_{n}\}$, denoted by ${\cal H}(\{s_{n}\})$, is defined as:
\begin{eqnarray*}
{\cal H}(\{ s_{n} \}) &=& \{p \in \mathbb{P}: \forall n (s_{2n} < p < s_{2n+1})\} \\
&=& \mathbb{P} \cap ((s_2,s_3) \cup (s_4,s_5) \cup \ldots \cup (s_{2n}, s_{2n+1}) \cup 
\ldots )
\end{eqnarray*}
\end{definition}

 The following theorem gives conditions on the sequence $\{s_{n}\}$ that guarantee that the set ${\cal H}(\{ s_{n} \})$ has no $h$-density.
  
 \begin{theorem}\label{thm:Nodense}
Let $h: (0, +\infty) \to [0, +\infty)$ be   a real positive, continuous, unbounded and increasing function defined on $(0,+\infty)$. Assume additionally that $h$ is
eventually semi-additive.
Let $\{s_n\}_{n >0}$ be an increasing sequence of numbers that is $h$-thin
Then the set of alternating primes ${\cal H}(\{s_n\})$ 
has no $h$-density.
\end{theorem}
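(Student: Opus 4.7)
The plan is to separate the upper and lower $h$-densities of $\mathcal{H}(\{s_n\})$ by evaluating the relevant ratio along two natural subsequences of indices, namely the odd-indexed and even-indexed points. Concretely, I will show $\overline{\delta}_h(\mathcal{H}) \ge 1 - 1/r$ by looking at $x = s_{2n+1}$, and $\underline{\delta}_h(\mathcal{H}) \le 1/r$ by looking at $x = s_{2n}$. Since the hypothesis requires $r > 3$, these two bounds satisfy $1/r < 1 - 1/r$, which forces a strict gap between the limsup and liminf, so the $h$-density cannot exist.

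For the upper bound, the key observation is that by the definition of $\mathcal{H}$, every prime in the interval $(s_{2n}, s_{2n+1})$ belongs to $\mathcal{H}$, so $\pi_{\mathcal{H}}(s_{2n+1}) \ge \pi(s_{2n+1}) - \pi(s_{2n})$. Applying $h$ to both sides and invoking Remark \ref{conditionsemiadd}(iii), which gives $h(\pi(s_{2n+1}) - \pi(s_{2n})) \ge h(\pi(s_{2n+1})) - h(\pi(s_{2n}))$ for large $n$, together with $h$-thinness $h(\pi(s_{2n})) < h(\pi(s_{2n+1}))/r$, yields $h(\pi_{\mathcal{H}}(s_{2n+1})) \ge (1 - 1/r)\, h(\pi(s_{2n+1}))$. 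Dividing by $h(\pi(s_{2n+1}))$ gives the desired lower bound on the limsup.

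For the lower bound on $\underline{\delta}_h(\mathcal{H})$, note that $\mathcal{H}$ contains no primes in $[s_{2n-1}, s_{2n}]$ (in fact, none in $[s_{2n-1}, s_{2n+1}-1]$ except the interior of the next admitted interval), so $\pi_{\mathcal{H}}(s_{2n}) = \pi_{\mathcal{H}}(s_{2n-1}) \le \pi(s_{2n-1})$. Applying the increasing function $h$ and then $h$-thinness gives
\[
\frac{h(\pi_{\mathcal{H}}(s_{2n}))}{h(\pi(s_{2n}))} \le \frac{h(\pi(s_{2n-1}))}{h(\pi(s_{2n}))} < \frac{1}{r},
\]
so the liminf is at most $1/r$.

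Combining the two, $\underline{\delta}_h(\mathcal{H}) \le 1/r < 1 - 1/r \le \overline{\delta}_h(\mathcal{H})$, and hence $\mathcal{H}(\{s_n\})$ has no $h$-density. The only subtle step is the upper bound argument, where one must make sure Remark \ref{conditionsemiadd}(iii) is applicable; this requires $\pi(s_{2n+1})$ to dominate $\pi(s_{2n})$ by a factor greater than $2$ and both to exceed the semi-additivity threshold $M$, which is guaranteed by Remark \ref{conditionsemiadd}(ii) applied to the $h$-thin sequence for all sufficiently large $n$. The rest is a routine monotonicity chase, so this is the expected main (and essentially only) technical obstacle.
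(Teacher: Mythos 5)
Your proposal is correct and follows essentially the same route as the paper's proof: evaluating the ratio along $s_{2n}$ to bound the liminf by $1/r$, evaluating along $s_{2n+1}$ with the lower bound $\pi_{\mathcal{H}}(s_{2n+1}) \ge \pi(s_{2n+1}) - \pi(s_{2n})$ and Remark~\ref{conditionsemiadd}(iii) to bound the limsup by $1-1/r$, and using $r>3$ to conclude. Your explicit note about needing Remark~\ref{conditionsemiadd}(ii) to ensure the applicability of part (iii) is a welcome point that the paper leaves implicit.
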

\begin{proof}
Put ${\cal H} = {\cal H}(\{s_n\})$. First note that
\[\pi_{\cal H}(s_{2n}) = \pi_{\cal H}(s_{2n-1}) + |\mathbb{P}\cap (s_{2n-1},s_{2n})| 
= \pi_{\cal H}(s_{2n-1}) \]
Then using that $h$ is increasing and the sequence is $h$-thin for a real $r>3$, we have for almost all $n$:
\begin{equation*}
\frac{h(\pi_{\cal H}(s_{2n}))}{h(\pi(s_{2n}))} = \frac{h(\pi_{\cal H}(s_{2n-1}))}{h(\pi(s_{2n}))}  \le \frac{h(\pi(s_{2n-1}))}{h(\pi(s_{2n}))} < \frac{1}{r}
\end{equation*}
Hence, $\displaystyle\liminf_{n\to +\infty} \delta_h({\cal H}) \le \frac{1}{r}$.
On the other hand, by definition of ${\cal H}(\{s_n\})$ 
\begin{equation*}
\pi_{\cal H}(s_{2n+1}) \ge \pi(s_{2n+1}) - \pi(s_{2n}) ,
\end{equation*}
hence, 
using that $h$ is increasing, Remark~\ref{conditionsemiadd} $(iii)$ and the fact that $\{s_{n}\}$ is $h$-thin, we have that for almost all $n$:
\begin{eqnarray*}
\frac{h(\pi_{\cal H}(s_{2n+1}))}{h(\pi(s_{2n+1}))} &\ge& 
\frac{h(\pi(s_{2n+1}) - \pi(s_{2n}))}{h(\pi(s_{2n+1}))}  \ge \\
& & \frac{h(\pi(s_{2n+1})) - h(\pi(s_{2n}))}{h(\pi(s_{2n+1}))}  = 1 - \frac{h(\pi(s_{2n}))}{h(\pi(s_{2n+1}))} > 1 - \frac{1}{r} 
\end{eqnarray*}
Hence, since $r>3$, $\displaystyle\limsup_{n\to +\infty} \delta_h({\cal H}) \ge 1-\frac{1}{r} > \frac{1}{r}$. 
Therefore,   $\underline{\delta}_h({\cal H}) \not= \overline{\delta}_h({\cal H})$ and the 
set $\cal H$ has no $h$-density. 
\end{proof}

The theorem above is useful  to prove that there exists a sentence  whose spectrum has no $h$-density: Find a sequence $\{s_{n}\}$ that is $h$-thin, and find a sentence $\theta$ such that $Sp(\theta)={\cal H}(\{s_n\})$.

We conclude this section with a criteria for a sequence $\{s_n\}$ to be thin. 
 
 \begin{lemma}\label{Laux}
 Let $\{s_n\}_{n>0}$ be a non decreasing sequence of numbers such that for some 
 $R > 18$, for all $n >N$, $R s_n < s_{n+1}$. 
 Then $r\pi(\{s_n\}) < \pi(\{s_{n+1}\})$, for some $r > 3$.
 \end{lemma}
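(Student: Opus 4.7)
The plan is to apply the Prime Number Theorem bounds stated in equation (\ref{PNTbnd}) to both $s_n$ and $s_{n+1}$ and then use the hypothesis $s_{n+1} > R s_n$ to force a uniform multiplicative gap on the prime counting function.

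First, I would observe that the hypothesis $R s_n < s_{n+1}$ with $R>18$ forces $s_n \to +\infty$, so for all sufficiently large $n$ both $s_n$ and $s_{n+1}$ lie in the range where (\ref{PNTbnd}) is valid. Applying the upper bound to $s_n$ and the lower bound to $s_{n+1}$ gives
\begin{equation*}
\pi(s_n) < \frac{3}{2}\cdot\frac{s_n}{\log s_n}
\quad\text{and}\quad
\pi(s_{n+1}) > \frac{1}{2}\cdot\frac{s_{n+1}}{\log s_{n+1}} > \frac{1}{2}\cdot\frac{R\, s_n}{\log R + \log s_n}.
\end{equation*}

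Next, I would take the quotient to obtain
\begin{equation*}
\frac{\pi(s_{n+1})}{\pi(s_n)} \;>\; \frac{R}{3}\cdot\frac{\log s_n}{\log R + \log s_n} \;=\; \frac{R}{3}\cdot\frac{1}{1+\frac{\log R}{\log s_n}}.
\end{equation*}
Since $s_n\to +\infty$, the factor $\log R/\log s_n$ tends to $0$, so the right-hand side tends to $R/3 > 6$. Hence for any fixed $r$ with $3 < r < R/3$ (for example $r=4$), there exists $N'\ge N$ such that $\pi(s_{n+1}) > r\,\pi(s_n)$ for every $n > N'$, which is the conclusion.

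The argument is essentially routine arithmetic on the PNT bounds; the only point that requires a bit of care is that the term $\log R$ in $\log(R s_n) = \log R + \log s_n$ must not be absorbed prematurely, since we need the inequality to hold with a constant $r>3$ independent of $n$. This is handled by choosing $r$ strictly less than the limiting ratio $R/3$, which leaves enough slack to absorb the $\log R/\log s_n$ correction for all large $n$. No step should present a substantive obstacle.
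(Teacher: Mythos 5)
Your proof is correct and uses essentially the same ingredients as the paper's: the two-sided PNT bound in (\ref{PNTbnd}), the monotonicity of $x/\log x$, and the hypothesis $Rs_n < s_{n+1}$. The only cosmetic difference is that you organize the inequalities as a ratio $\pi(s_{n+1})/\pi(s_n)$ and argue asymptotically to secure any $r < R/3$, whereas the paper chains the same inequalities additively to extract the explicit constant $r = R/6$; both constants exceed $3$ under $R>18$.
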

 \begin{proof}
 Using Eq. (\ref{PNTbnd}) and that  $f(x) = x/\log x$ is strictly increasing,  we get
 \begin{equation*}
R \pi(s_n) < \frac{3}{2}\frac{R s_n}{\log\left(\frac{R}{R}s_n\right)} = 
\frac{3}{2\left(1-\frac{\log R}{\log(Rs_n)}\right)}\left(\frac{Rs_n}{\log (Rs_n)}\right)
< 3 \frac{s_{n+1}}{\log s_{n+1}} < 6\pi(s_{n+1})
\end{equation*}
Therefore, $r\pi(s_n) < \pi(s_{n+1})$ for $r = R/6 > 3$.
 \end{proof}

\section{On the density of spectra of ring formulae with order}\label{sec:8}

We use Theorem \ref{thm:Nodense} to show the existence of 
a sentence in ${\cal R}ing(0,+,*,<)$ whose spectrum has no natural density.

 \begin{theorem}\label{thm:nonat4R} There exists a sentence 
 $\psi \in {\cal R}ing(0,+,*,<)$ such that its spectrum $Sp(\psi)$ has no natural density.
\end{theorem}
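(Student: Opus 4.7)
The plan is to exhibit a sentence $\psi$ whose prime spectrum agrees, up to finitely many primes, with an alternating set ${\cal H}(\{s_n\})$ coming from a sufficiently fast geometric sequence, and then to appeal to Theorem~\ref{thm:Nodense} together with Lemma~\ref{Laux} to conclude no natural density.

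First I would fix a prime $c > 18$ (for concreteness $c = 19$) and put $s_n = c^n$. Since $s_{n+1}/s_n = c > 18$ for every $n$, Lemma~\ref{Laux} gives that $\{s_n\}$ is thin, i.e.\ $h$-thin for $h$ the identity. As the identity is trivially eventually semi-additive, Theorem~\ref{thm:Nodense} then tells us that
$$
{\cal H}(\{s_n\}) \;=\; \bigcup_{k\ge 0}\bigl((c^{2k},c^{2k+1})\cap \mathbb{P}\bigr)
$$
has no natural density. It therefore remains only to realise this alternating set, up to a finite discrepancy, as the prime spectrum of a ring sentence.

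For a prime $p$, the condition $p \in {\cal H}(\{c^n\})$ is equivalent to $\lfloor \log_c p\rfloor$ being even, which in turn means that there exists an element $a$ of $\mathbb{Z}_p$ that (i) is a power of $c$, (ii) is the largest power of $c$ representable in $\mathbb{Z}_p$ without overflow (so $c\cdot a\ge p$), and (iii) is a perfect square (so $a=c^{2k}$). To express this formally I would appeal to the equivalence ${\cal R}ing(0,+,*,<)\equiv \mbox{FO}(\le,\oplus,\otimes)$ from Theorem~\ref{ring=fo:order}, which gives access to truncated integer addition $\oplus$ and multiplication $\otimes$. Divisibility then reads $b\mid a \iff \exists e\,\otimes(b,e,a)$; since $c$ is prime, ``$a$ is a power of $c$'' becomes
$$
a>0\;\wedge\;\forall b\,\forall e\,\bigl(\otimes(b,e,a)\wedge b>1 \to \exists d\,\otimes(c,d,b)\bigr),
$$
i.e.\ every integer divisor of $a$ larger than $1$ is a multiple of $c$. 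The largest-power clause becomes $\neg\exists e\,\otimes(c,a,e)$, and the square condition becomes $\exists b\,\otimes(b,b,a)$. Conjoining these yields the desired $\psi$.

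Finally one checks that for every prime $p > c$, $\mathbb{Z}_p\models\psi$ iff $p\in(c^{2k},c^{2k+1})$ for some $k\ge 0$; the finitely many small primes and the boundary values $p=c^j$ are absorbed by the $=^*$ equivalence, so $Sp(\psi) =^* {\cal H}(\{c^n\})$ and hence has no natural density. The main obstacle is really syntactic rather than conceptual: one must carefully verify that the truncated-arithmetic predicates and the formula ``$a$ is a power of the prime $c$'' are correctly captured in the base language ${\cal R}ing(0,+,*,<)$, which reduces to the already established translation of Theorem~\ref{ring=fo:order}.
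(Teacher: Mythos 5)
Your proposal is correct and follows essentially the same route as the paper's own proof: fix a prime $q\ge 19$, take $s_n=q^n$, invoke Lemma~\ref{Laux} and Theorem~\ref{thm:Nodense} to see that the alternating set has no natural density, and then write a ${\cal R}ing(0,+,*,<)$ sentence asserting that the largest power of $q$ below the model size is an even power of $q$. The only (cosmetic) difference is that you encode maximality via the overflow clause $\neg\exists e\,\otimes(c,a,e)$ and squareness via $\exists b\,\otimes(b,b,a)$, whereas the paper packages the same content into named auxiliary formulas $MAXEXP_q$ and $MAXEXP_{q^2}$; the two formulations are logically equivalent given that $q$ is prime.
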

\begin{proof}
Fix a prime $q \ge 19$ and consider the sequence $s_n = q^n$, for $n=2,3,\ldots$.
It is immediate that for $R$ such that $18 < R < 19$, $Rq^n < q^{n+1}$ and by 
Lemma \ref{Laux} the sequence is thin for $r > 3$. Then, by Theorem \ref{thm:Nodense}, the alternating set
\begin{equation}\label{setqn}
{\cal H}(\{q^n\})= \mathbb{P} \cap \left( (q^2, q^{3}) \cup (q^4, q^5) \cup \cdots \cup (q^{2k}, q^{2k+1}) \cup \cdots \right)
\end{equation}
has no natural density. 
 
We will define a sentence $\psi\in {\cal R}ing(0,+,*,<)$ so that for any structure 
$\mathbb{Z}_p$, 
$\mathbb{Z}_p \models \psi$ if and only if 
the size of the model is a prime $p$ such that $q^{2n}<p<q^{2n+1}$, for some natural $n$.
Then the spectrum of such sentence $\psi$ is  $Sp(\psi) = {\cal H}(\{q^n\})$.

From Theorem \ref{ring=fo:order} (see \cite{AACO13} for details) we know that $\otimes(x,y,z)$, true multiplication in $\mathbb{Z}_{m}$ (i.e. $\mathbb{Z}_{m}\models \otimes(x,y,z)
$ if and only if $\mathbb{Z}\models x\times y=z$)
 is expressible in ${\cal R}ing(0,+,*,<)$. 
 Hence, for a given prime $q$,
we can express in ${\cal R}ing(0,+,*,<)$
``{\it $z$ is a power of q}'' by saying that every divisor of $z$ different from 1 must be divisible by $q$ (this equivalence only holds if $q$ is prime). Here is  the formula:
\begin{equation}\label{fmlEXP}
EXP_{q}(z):=\forall x \left([\exists y (\otimes(x,y,z)\wedge x\neq 1]\Rightarrow [\exists w(\otimes(q,w,x)]\right)
\end{equation}
We can also have a formula $EXP_{q^2}(z)$ that says 
``{\em $z$ is a square of a power of $q$}":
\begin{equation}\label{fmlEXP2}
EXP_{q^{2}}(z):=\exists x\left(\otimes(x,x,z)\wedge EXP_{q}(x)\right)
\end{equation}
From these formulas we can express
the property ``{\it $z$ is the maximal power of $q$ in the structure}", 
by a formula $MAXEXP_{q}(z)$:
\begin{equation}\label{fmlMEXP}
MAXEXP_{q}(z):= EXP_{q}(z) \wedge \forall w(w>z\Rightarrow \neg EXP_{q}(w))
\end{equation}
and the property ``{\it $z$ is the maximal power of $q^{2}$ in the structure}" by the formula 
\begin{equation}\label{fmlMEXP2}
MAXEXP_{q^{2}}(z):= EXP_{q^{2}}(z) \wedge \forall w(w>z\Rightarrow \neg EXP_{q^{2}}(w))
\end{equation}
Finally, we need the sentence $PRIME$ which expresses that the size of the ring model is prime (it is enough to say that every element has a multiplicative inverse).
 We thus have, for a fixed prime $q\ge 19$, 
$$
\psi:=PRIME \wedge \exists z\left(MAXEXP_{q}(z)\wedge MAXEXP_{q^{2}}(z)\right) 
$$
the required sentence in ${\cal R}ing(0,+,*,<)$ such that 
$Sp(\psi) = {\cal H}(\{q^n\})$,  which has no natural density.
\end{proof}

 As an immediate corollary we have a third proof of the difference in expressive power of the ring logic with order and the ring logic without order.
 This result also attest to the strength of the logic ${\cal R}ing(0,+,*,<)$. 
 We had conjectured in \cite{AACO13} that  the spectrum of any sentence 
  in ${\cal R}ing(0,+,*,<)$ has  natural  density, and had shown there that the set
  ${\cal H}(\{q^n\})$ in (\ref{setqn}) is the spectrum of a sentence of 
${\cal R}ing(0,+,*,<) + MOD(2)$. Now, Theorem \ref{thm:nonat4R} refines that expressibility result and knocks down our conjecture. 
However, we believe in the following.

\begin{conjecture}\label{conj}
Every sentence in ${\cal R}ing(0,+,*,<)$ has exponential density. \qed
\end{conjecture}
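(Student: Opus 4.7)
The plan is to leverage the equivalence (Theorem~\ref{ring=fo:order} combined with Theorem~\ref{bisthm}) between ${\cal R}ing(0,+,*,<)$ and DLOGTIME-uniform $AC^{0}$. For a sentence $\psi$, write $P_{\psi}(p) \iff \mathbb{Z}_{p}\models\psi$ and regard $P_{\psi}$ as an $AC^{0}$-computable predicate on the binary representation of the prime $p$. The first step is to put the defining circuit family in canonical form: applying H{\aa}stad's switching lemma, on inputs of length $n=\lceil\log_{2} x\rceil$ the predicate restricts, outside an exceptional set of exponentially small relative size, to a decision tree of depth $O((\log n)^{d-1})$, where $d$ is the depth of the circuit. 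This reduces the problem of counting $\pi_{\psi}(x)$ to counting primes of bit-length $n$ that fall into a Boolean combination of $n^{O(1)}$ ``cylinders'', each defined by fixing $O(\mathrm{polylog}(n))$ bit-positions.

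Second, I would translate these bit-pattern constraints into arithmetic progressions modulo powers of two, and estimate the number of primes in each cylinder by the Siegel--Walfisz theorem (or, more crudely, by the Brun--Titchmarsh inequality). The goal is to establish a dichotomy at each scale $x=2^{n}$: either the ratio $\pi_{\psi}(x)/\pi(x)$ stabilises to a positive constant, in which case Theorem~\ref{thm:delta2epsilon} already yields $\varepsilon(Sp(\psi))=1$; or $\pi_{\psi}(x)=\Theta(x^{\alpha}/(\log x)^{\beta})$ for some rational $\alpha\in[0,1)$ essentially determined by the combinatorial structure of the reduced decision tree, whence $\varepsilon(Sp(\psi))=\alpha$. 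Producing a \emph{coherent} exponent $\alpha$ across all scales, rather than one per scale, is the heart of the argument, since by definition of exponential density the ratio must stabilise in the limit.

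The main obstacle is exactly what makes this conjecture equivalent to separating DLOGTIME-uniform $AC^{0}$ from $TC^{0}$ in the density framework. Theorem~\ref{thm:nonat4R} shows, via the alternating construction ${\cal H}(\{q^{n}\})$, that with order and either majority or modular quantifiers one can engineer sentences whose spectra oscillate between consecutive geometric scales, annihilating even natural density; any proof of the conjecture must rule out an $AC^{0}$ analogue of such a construction. The most promising route is a Razborov--Smolensky-style low-degree polynomial approximation of $P_{\psi}$ over some $\mathbb{F}_{q}$, combined with a Ramsey-type closure property of the Boolean algebra of $AC^{0}$-definable sets of primes, to the effect that such sets cannot be ``tuned'' independently across distinct geometric windows $[q^{2n},q^{2n+1}]$. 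Since no such cross-scale rigidity result is currently known for uniform $AC^{0}$, a complete proof will necessarily contain genuinely new circuit lower-bound ideas, and that is precisely where this proposal stops being routine and becomes an agenda.
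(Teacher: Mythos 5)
The statement you are addressing is labelled a \emph{conjecture} in the paper and is not proved there; the authors only offer supporting evidence (Theorem~\ref{thm:delta2epsilon} shows the unordered fragment is covered, and a computation shows that the alternating spectrum ${\cal H}(\{q^{n}\})$ of Theorem~\ref{thm:nonat4R}, although lacking natural density, still has exponential density~$1$). So there is no proof in the paper to compare against. Your submission itself ends by conceding that it ``stops being routine and becomes an agenda,'' which is an honest and accurate self-assessment: what you have written is a research program, not a proof.

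A few specific points about the program are worth flagging. First, the switching-lemma step as stated is imprecise: H\aa stad's lemma gives high-probability simplification under \emph{random restrictions}, which is not the same as saying that the circuit agrees with a shallow decision tree on all but an exponentially small set of inputs; you would need a correlation or certificate-complexity consequence, and you would then have to control the interaction of the exceptional set with the primes, which is nontrivial. Second, fixing $O(\mathrm{polylog}\,n)$ bit positions with $n\approx\log x$ corresponds to arithmetic progressions to modulus $2^{O(\mathrm{polylog}\log x)}$, which for circuit depth $d\ge 3$ exceeds the Siegel--Walfisz range $(\log x)^{A}$; Brun--Titchmarsh only gives one-sided bounds, so the equidistribution input you need is not available off the shelf. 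Third, the claimed dichotomy (``either the ratio stabilises or $\pi_{\psi}(x)=\Theta(x^{\alpha}/(\log x)^{\beta})$'') is exactly what has to be proved; asserting it presupposes the cross-scale rigidity you later identify as the missing ingredient. Finally, you misattribute the alternating construction: Theorem~\ref{thm:nonat4R} places ${\cal H}(\{q^{n}\})$ inside plain ${\cal R}ing(0,+,*,<)$ (no modular or majority quantifiers), and that set \emph{does} have exponential density~$1$; it is the doubly-exponential variant ${\cal H}(\{q^{q^{n}}\})$ of Theorem~\ref{thm:nonExp4R}, which requires counting quantifiers, that destroys exponential density. Also note the implication is one-directional: proving the conjecture would separate DLOGTIME-uniform $AC^{0}$ from $TC^{0}$, but the conjecture is not known to be \emph{equivalent} to that separation. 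None of this makes your outline wrong as an agenda, but none of it closes the conjecture either.
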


In what follows we provide some evidence for this conjecture. Every sentence in the unordered fragment ${\cal R}ing(0,+,*)$ has exponential density. This is a consequence of the general fact that a set for which the natural density exists, it then has the exponential density, and in fact, this exponential density is always 1 (Theorem \ref{thm:delta2epsilon}).
Furthermore, all the properties that we have defined so far in ${\cal R}ing(0,+,*,<)$ have 
exponential density. 
In particular, 
 the set ${\cal H} := {\cal H}(\{q^n\})$ in Theorem \ref{thm:nonat4R} has  exponential density.
 Let us show this. 
 
 Let $\displaystyle s_n = \frac{\log\pi_{\cal H}(n)}{\log\pi(n)}$. It will be sufficient to show that $\liminf_{n \to +\infty} s_n = 1$.
 
First observe that, for every $n$, choosing $m$ such   that 
$q^{m}\leq n < q^{m+1}$, 
we obtain the following inequalities due to the definition of ${\cal H}(\{q^n\})$:
$$
\pi(q^{m-1})-\pi(q^{m-2})\leq \pi_{\cal H}(n) 
$$
Hence $\displaystyle \frac{ \pi_{\cal H}(n)}{ \pi(n)} \ge \frac{\pi(q^{m-1}) - \pi(q^{m-2})}{\pi(q^{m+1})}$. 
 Now, observe that, in general, if $1 < y < x$ then
 \[ \pi(x) - \pi(y) = \sum_{y<p\le x} 1 \ge \sum_{y<p\le x} \frac{\log p}{\log x} \ge 
 \frac{(x-y)\log y}{\log x} \]
 Therefore, for any $z >1$, 
 \begin{equation}\label{cota}
\frac{\log(\pi(x) -\pi(y))}{\log \pi(z)} \ge \frac{\log\left(\frac{x-y}{\log x}\right) + \log\log y}{\log \pi(z)} 
\end{equation}
Additionally we will use from Eq. (\ref{PNTbnd}) that for $z > 3$, 
 $\frac{1}{\pi(z)} > \frac{2\log z}{3z} >\frac{1}{3z}$. 
 
 Now, putting $x = q^{m-1}$, $y=q^{m-2}$ and $z = q^{m+1}$, it follows from the previous observations that
 \begin{eqnarray*}
s_n &\ge& \frac{\log(\pi(q^{m-1})-\pi(q^{m-2}))}{\log\pi(q^{m+1})}
\ge \frac{\log\left(\frac{q^{m-1}(q-1)}{\log q^{m-1}} \right) + \log((m-2)\log q)}{\log \pi(q^{m+1})} \\
&\ge& \frac{(m-1)\log q + \log(q-1) - \log((m-1)\log q) + \log((m-2)\log q)}{(m+1)\log q + \log 3}\\
&\thicksim& \frac{m-1}{m+1}
\end{eqnarray*}
As  $n$ grows,  $m$ also grows and
$\ \displaystyle \frac{m-1}{m+1} \to 1\ $.
Therefore $\liminf_{n \to +\infty} s_n = 1$. 
We then have that 
\[ \varepsilon({\cal H}(\{q^n\})) = \lim_{n \to \infty} \frac{\log \pi_{\cal H}(q^n)}{\log\pi(q^n)} = 1 .\]

\medskip
We now show the existence of a sentence in ${\cal R}ing(0,+,*,<) + MOD+M$ 
such that its spectrum
has no exponential density. Here we need the logical description of 
$TC^0 = {\cal R}ing(0,+,*,<) + MOD+M$ via FO(COUNT) (see Theorem \ref{focount2tc}).

 \begin{theorem}\label{thm:nonExp4R} There exists a sentence 
 $\theta \in {\cal R}ing(0,+,*,<) + MOD+M$ such that its spectrum $Sp(\theta)$ 
 has no exponential density.
\end{theorem}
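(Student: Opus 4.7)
The plan is to apply Theorem~\ref{thm:Nodense} with $h=\log$ (which is eventually semi-additive, as observed in the example following Definition~\ref{semiadd}) to an alternating set of primes ${\cal H}(\{s_n\})$ produced by a suitably $\log$-thin sequence, and then to exhibit a sentence of ${\cal R}ing(0,+,*,<)+MOD+M$ whose prime spectrum coincides with that set. Because $h=\log$ is the relevant function for exponential density, I need a sequence whose filtered counting function grows geometrically with ratio strictly greater than $3$; the sequence $s_n=q^n$ used in Theorem~\ref{thm:nonat4R} is only thin, not $\log$-thin, so a faster-growing sequence is required.

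For the sequence, I will take $s_n = q^{4^n}$ for a fixed prime $q\ge 19$. By the PNT bound (\ref{PNTbnd}), $\log\pi(s_n)\thicksim \log s_n = 4^n\log q$, so $\log\pi(s_{n+1})/\log\pi(s_n)\to 4$; hence for any $r\in(3,4)$ the inequality $r\log\pi(s_n)<\log\pi(s_{n+1})$ holds for almost all $n$, verifying the $\log$-thinness hypothesis of Definition~\ref{hthin}. Theorem~\ref{thm:Nodense} then immediately yields that ${\cal H}(\{s_n\})$ has no exponential density.

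The remaining task is to exhibit a sentence $\theta\in{\cal R}ing(0,+,*,<)+MOD+M$ with $Sp(\theta)={\cal H}(\{s_n\})$. I will exploit Theorem~\ref{RINGcaptCC}(4), which states that this logic captures DLOGTIME-uniform $TC^0$, together with the equivalent FO(COUNT) description (Theorem~\ref{focount2tc}); in particular, iterated multiplication of polynomially many polynomial-sized integers is expressible. Starting from the formula $EXP_q(z)$ in (\ref{fmlEXP}) and the true-multiplication predicate $\otimes$ available in ${\cal R}ing(0,+,*,<)$, I define $P(z)$ to mean ``there exists a chain $q,\,q^2,\,q^4,\ldots,z=q^{4^l}$ of $2l$ successive $\otimes$-squarings entirely inside $\mathbb{Z}_p$'' (which implicitly forces $z$ to be the true integer $q^{4^l}<p$), and $P_{ev}(z)$ to be the refinement requiring further that the index $l$ be even, enforced by the modular quantifier $\exists^{(0,2)}$ applied to the encoding of $l$. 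The sought sentence is then
$$
\theta \;:=\; PRIME\;\wedge\;\exists z\bigl(P_{ev}(z)\wedge \forall w(P(w)\Rightarrow w\le z)\bigr),
$$
which asserts that the maximal $l$ with $q^{4^l}<p$ is even; equivalently $p\in(q^{4^{2k}},q^{4^{2k+1}})$ for some $k\ge 0$, so $Sp(\theta)={\cal H}(\{s_n\})$ up to finitely many small primes.

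The main obstacle is the formal verification that $P(z)$ can be written syntactically in the signature $\{0,+,*,<\}$ together with $MOD$ and $M$ quantifiers: morally one must encode the $TC^0$ algorithm for iterated squaring, using $\otimes$ to prevent modular wrap-around at each step and using FO(COUNT)-style quantification over the index $l$ (which is of order $\log\log p$, hence safely representable in the numeric sort). Carrying this out by hand would be laborious, but its feasibility is guaranteed by Theorems~\ref{ring=fo:order}, \ref{focount2tc} and~\ref{RINGcaptCC}(4) together with the known $TC^0$-computability of iterated multiplication. Once $P(z)$ is in place, the short syntactic definition of $\theta$ combined with Theorem~\ref{thm:Nodense} closes the argument.
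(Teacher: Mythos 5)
Your strategy matches the paper's: apply Theorem~\ref{thm:Nodense} with $h=\log$ to a $\log$-thin sequence and then express the resulting alternating set of primes in ${\cal R}ing(0,+,*,<)+MOD+M$. Your sequence $s_n = q^{4^n}$ differs from the paper's $s_n = q^{q^n}$, but your $\log$-thinness check is correct (the ratio $\log\pi(s_{n+1})/\log\pi(s_n)\to 4>3$), so both work equally well for the analytic half of the argument.

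Where you fall short is the expressibility step, which you yourself flag as ``the main obstacle'' and then leave open. Your sketch of $P(z)$ via ``a chain of $2l$ successive $\otimes$-squarings'' quantifies, at least informally, over a sequence of elements of unbounded length, which is not directly available in a first-order logic with $MOD$ and $M$ quantifiers; you would have to replace that picture with a genuinely first-order characterization before the sentence exists. Your fallback, appealing to Theorem~\ref{RINGcaptCC}(4) plus ``the known $TC^0$-computability of iterated multiplication,'' is not a proof either: to invoke the capture theorem you would need to show that membership of $m$ in ${\cal H}(\{q^{4^n}\})$ is decidable in DLOGTIME-uniform $TC^0$, and you do not argue this. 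The paper closes exactly this gap by a cleaner and fully explicit device: working in FO(COUNT), the quantifier $(\exists!\,i\,y)(y<z\wedge EXP_q(y))$ turns the exponent of $z$ into a numeric element $i$, and one then imposes the first-order arithmetic condition that $i$ itself be a power of $q$ (giving $SUPEXP_q$) and additionally a perfect square (giving $SUPEXP_{q^2}$, i.e.\ an even exponent). Your parity trick for $P_{ev}$ via $\exists^{(0,2)}w\,(P(w)\wedge w<z)$ would in fact work once $P$ is properly defined, but $P$ itself needs the counting-quantifier extraction of the exponent, not a chain. If you adopt that device---e.g.\ demand that the count $i$ of powers of $q$ below $z$ be a power of $4$, say by $\forall j(j\ne 1\wedge\exists k(k\cdot j=i)\to\exists h(2\cdot h=j))\wedge\exists k(k\cdot k=i)$---your proof goes through with your choice of sequence.
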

\begin{proof}
Fix a prime $q \ge 7$ and consider the sequence $s_n = q^{q^n}$, $n=2,3,\ldots$.
We need to find $r > 3$ such that 
$r\log \pi(s_n) < \log \pi(s_{n+1})$. 
Let $4<a<q$. Then, using Eq. (\ref{PNTbnd}), 
\begin{eqnarray*} a\log\pi(q^{q^{n}}) &< & 
a\log\left(\frac{3}{2} \frac{q^{q^{n}}}{\log q^{q^{n}}}\right) =
\log \left(\frac{3}{2}\right)^a + \log\left(\frac{q^{aq^n}}{q^{an}\log^a q} \right) \\
&<&
\log \left(\frac{3}{2}\right)^a + \log\left(\frac{q^{qq^{n}}}{q^{n+1}\log q}\right) 
< \log \left(\frac{3}{2}\right)^a + \log 2\pi(q^{q^{n+1}})
\end{eqnarray*}
Therefore, for $n$ large we have 
\begin{equation*}
(a-1)\log\pi(q^{q^{n}}) < \left(a- \frac{\log 2\left(\frac{3}{2}\right)^a}{\log\pi(q^{q^{n}})} \right)\log\pi(q^{q^{n}}) 
< \log\pi(q^{q^{n+1}})
\end{equation*}
and we have that the sequence $\{q^{q^n}\}$ is $\log$-thin for $r=a-1 >3$.
Now, by Theorem \ref{thm:Nodense}, the alternating set
\begin{equation}\label{setqqn} 
{\cal H}(\{q^{q^n}\})= \mathbb{P} \cap \left( (q^{q^2}, q^{q^3}) \cup (q^{q^4}, q^{q^5}) \cup \cdots \cup (q^{q^{2k}}, q^{q^{2k+1}}) \cup \cdots \right)
\end{equation}
has no exponential density. 

We need to define a sentence $\theta$ whose spectrum is the set 
${\cal H}(\{q^{q^n}\})$.
The sentence $\theta$ will be such that 
for any structure  $\mathbb{Z}_p \models \theta$ we have that 
\begin{quote}
the size of the model is a prime $p$ with  $q^{q^{2n}}<p<q^{q^{2n+1}}$, for some natural $n$.
\end{quote}
We begin by producing a formula $SUPEXP_q(z)$ in FO(COUNT) that expresses 
``$z= q^{q^n}$ for some $n>0$". Formally, this is equivalent to saying that 
\begin{quote}
{\em ``$z$ is a power of $q$ and the number of powers of $q$ below $z$ is a power of $q$"} 
\end{quote}
The first clause of the above conjunction have been already defined in the logic 
${\cal R}ing(0,+,*,<)$ by the formula $EXP_{q}(z)$ in (\ref{fmlEXP}). To logically define  the second clause we need the counting quantifiers. The required formula is the following:
\begin{eqnarray*}
SUPEXP_q(z) &:=& EXP_q(z) \wedge (\exists i)[(\exists! i y)(y<z \wedge EXP_q(y))\\ 
&\wedge &
\forall j (j\not=1 \wedge \exists k (k\cdot j = i) \to \exists h (q\cdot h = j))]
\end{eqnarray*}

If we additionally require that the number $i$ (of powers of $q$ below $z$) is a square, we have a logical expression for $z=q^{q^{2n}}$. The formula is the following:
\begin{eqnarray*}
SUPEXP_{q^2}(z) &:=& EXP_q(z) \wedge (\exists i)[(\exists! i y) (y<z \wedge EXP_q(y))\\ 
&\wedge &
\forall j (j\not=1 \wedge \exists k (k\cdot j = i) \to \exists h (q\cdot h = j)) \wedge \exists k (k\cdot k = i)]
\end{eqnarray*}
 
 We now express the properties:
 \begin{itemize}
\item {\em ``$z$ is the maximal power of the form $q^{q^n}$"}
\item {\em ``$z$ is the maximal power of the form $q^{q^{2n}}$"}
\end{itemize}
with formulas $SUPMAXEXP_{q}(z)$ and $SUPMAXEXP_{q^2}(z)$, which are
similar in their form to formulas $MAXEXP_{q}(z)$ and $MAXEXP_{q^2}(z)$ in 
(\ref{fmlMEXP}) and (\ref{fmlMEXP2}), respectively, but using our new formulas
$SUPEXP_{q}(z)$ and $SUPEXP_{q^2}(z)$.
Then, the required sentence $\theta$ is 
\[
\theta:=PRIME \wedge \exists z\left(SUPMAXEXP_{q}(z)\wedge SUPMAXEXP_{q^{2}}(z)\right) 
\]
This sentence belongs to FO(COUNT), hence to ${\cal R}ing(0,+,*,<) + MOD+M$, 
and its spectrum has no exponential density.
\end{proof}
 
\section{Conclusions}\label{sec:9}

We have established the separation of subclasses of ${\cal R}ing(0,+,*,<)+MOD+M$ using results from number theory, the notion of prime spectra of sentences and analyzing their natural and exponential densities.
We believe that the algebraic methodology employed is interesting in its own right and should be further exploited. 
%
%
  Of particular interest to us are the following  questions:
\begin{itemize}
\item Does every spectrum  in ${\cal R}ing(0,+,*,<)$ has a exponential density? If that is the case, then this would separate this logic
from  ${\cal R}ing(0,+,*,<)+MOD+M$. 
This would constitute a proof from Class Field Theory perspective of the 
known result 
$$DLOGTIME\mbox{-uniform }AC^0 \not=DLOGTIME\mbox{-uniform }TC^0$$ 
\item What can be said of the spectrum of a sentence in ${\cal R}ing(0,+,*)+MOD(n)$ for $n$ a positive integer? 
 The goal here is to separate ${\cal R}ing(0,+,*)+ MOD(n)$ from ${\cal R}ing(0,+,*)+ MOD(m)$, for $m\neq n$ positive integers. The same when the built-in order is present, which will yield separations among the $ACC(n)$ classes improving
Smolensky results  \cite{Smo87} to the uniform setting.
\item Is ${\cal R}ing(0,+,*) \not= {\cal R}ing(0,+,*)+ MOD(2)$?
\item What can be said of the spectrum of a sentence in 
${\cal R}ing(0,+,*,<)+ MOD + M$? 
We expect these sets to be much more untamed  than the spectra of sentences in ${\cal R}ing(0,+,*,<)$ because of the expressive power of the majority quantifier. 
Thus, possibly a refined version of the exponential density is necessary to  study these spectra.
\end{itemize}

\end{document}